\theoremstyle{plain}
\newtheorem{theorem}{Theorem}[section]
\newtheorem{lemma}[theorem]{Lemma}
\newtheorem{proposition}[theorem]{Proposition}
\newtheorem{claim}[theorem]{Claim}
\theoremstyle{remark}
\theoremstyle{definition}
\newtheorem{definition}[theorem]{Definition}
\newcommand{\pa}[1]{}
\newcommand{\kr}[1]{}
\newcommand\defeq{\stackrel{\mathclap{\normalfont\mbox{\scriptsize{def}}}}{=}}
\newcommand{\eps}{\varepsilon}
\newcommand{\eqcomma}{\enspace ,}
\newcommand{\eqperiod}{\enspace .}
\newcommand{\set}[1]{\{#1\}}
\newcommand{\setdescr}[2]{\{\,#1 \mid #2\,\}}
\newcommand\restrict[2]{{
  \left.\kern-\nulldelimiterspace 
  #1 
  \vphantom{\big|} 
  \right|_{#2} 
  }}
\newcommand{\calC}{\mathcal{C}}
\newcommand{\calF}{\mathcal{F}}
\newcommand{\calM}{\mathcal{M}}
\newcommand{\calP}{\mathcal{P}}
\newcommand{\calQ}{\mathcal{Q}}
\newcommand{\calT}{\mathcal{T}}
\renewcommand{\R}{\mathbb{R}}
\newcommand{\N}{\mathbb{N}}
\DeclareMathOperator{\Deg}{\sf{Deg}}
\DeclareMathOperator{\Size}{\sf{Size}}
\newcommand{\RefuteDeg}[2]{\Deg(#2 \vdash_{\text{\tiny #1}} \, \perp)}
\newcommand{\RefuteSize}[2]{\Size(#2 \vdash_{\text{\tiny #1}} \, \perp)}
\newcommand{\mon}{{\mathsf{mon}}}
\DeclareMathOperator{\csize}{\calC}
\DeclareMathOperator{\cmonsize}{\csize_{\mathsf{mon}}}
\newcommand{\gout}{g^\mathrm{out}}
\newcommand{\ginone}{g^\mathrm{in_1}}
\newcommand{\gintwo}{g^\mathrm{in_2}}
\DeclareMathOperator{\neigh}{\textbf{Neigh}}
\newcommand{\subformula}{$ is a variable substitution of $}
\DeclareMathOperator{\sel}{\mathsf{Sel}}
\newcommand{\vars}[1]{\mathsf{#1}}
\DeclareMathOperator{\circuit}{Circuit}
\DeclareMathOperator{\outwire}{\vars{Out}}
\DeclareMathOperator{\inwire}{\vars{In}}
\DeclareMathOperator{\isneg}{\vars{IsNeg}}
\DeclareMathOperator{\isand}{\vars{IsAnd}}
\DeclareMathOperator{\isor}{\vars{IsOr}}
\DeclareMathOperator{\isfromconst}{\vars{IsFromConst}}
\DeclareMathOperator{\isfromvar}{\vars{IsFromInput}}
\DeclareMathOperator{\isfromgate}{\vars{IsFromGate}}
\DeclareMathOperator{\constval}{\vars{ConstantValue}}
\DeclareMathOperator{\isvar}{\vars{IsInput}}
\DeclareMathOperator{\isgate}{\vars{IsGate}}
\crefname{axiom}{Axiom}{Axioms}
\Crefname{axiom}{Axiom}{Axioms}
\crefname{axioms}{Axioms}{Axioms}
\Crefname{axioms}{Axioms}{Axioms}
\DeclareMathOperator{\isvarless}{\vars{IsInput^{\le}}}
\DeclareMathOperator{\isgateless}{\vars{IsGate^{\le}}}
\title{Sum-of-Squares Lower Bounds for\\the Minimum Circuit Size
    Problem\thanks{ Supported by the Approximability and Proof
    Complexity project funded by the Knut and Alice Wallenberg
    Foundation and the Swiss National Science Foundation project
    200021-184656 “Randomness in Problem Instances and Randomized
    Algorithms”.}
    }
\author{Per Austrin}
\affil{KTH Royal Institute of Technology}
\author{Kilian Risse}
\affil{EPFL}
\date{\today}
\begin{document}

\maketitle
\thispagestyle{empty}
\addtocounter{page}{-1}

\begin{abstract}
  We prove lower bounds for the Minimum Circuit Size Problem (MCSP) in
  the Sum-of-Squares (SoS) proof system.  Our main result is that for
  every Boolean function $f: \{0,1\}^n \rightarrow \{0,1\}$, SoS
  requires degree $\Omega(s^{1-\epsilon})$ to prove that $f$ does not
  have circuits of size $s$ (for any $s > \poly(n)$).  As a corollary
  we obtain that there are no low degree SoS proofs of the statement
  $\NP \not \subseteq \Ppoly$.

  We also show that for any $0 < \alpha < 1$ there are Boolean
  functions with circuit complexity larger than $2^{n^{\alpha}}$ but
  SoS requires size $2^{2^{\Omega(n^{\alpha})}}$ to prove this.
  In addition we prove analogous results on the minimum
  \emph{monotone} circuit size for monotone Boolean slice functions.

  Our approach is quite general.  Namely, we show that if a proof
  system $Q$ has strong enough constraint satisfaction problem lower
  bounds that only depend on good expansion of the constraint-variable
  incidence graph and, furthermore, $Q$ is expressive enough that
  variables can be substituted by local Boolean functions, then the
  MCSP problem is hard for $Q$.
\end{abstract}

\clearpage


\section{Introduction}

Even before the dawn of complexity theory, there was an interest in
the minimum circuit size problem (MCSP): given the truth table of a
Boolean function $f: \set{0,1}^n \rightarrow \set{0,1}$ and a
parameter $s$, the MCSP problem asks whether there is a Boolean
circuit of size at most $s$ computing $f$.  Despite many years
of research, we do not know whether this problem is \NP-hard. It
clearly is in \NP: given a circuit of size at most $s$ (described by
$O(s \log s)$ bits) we can easily check in time $O(s \cdot 2^n)$
whether this circuit indeed computes $f$.

Determining the hardness of MCSP itself turns out to be a difficult
problem. Kabanets and Cai \cite{KC00} showed that \NP-hardness of the
MCSP problem implies breakthrough circuit lower bounds. These lower
bounds are not implausible but well out of reach of current
techniques.  In a similar vein Murray and Williams \cite{MW15} showed
that \NP-hardness of MCSP implies that $\EXP \neq \ZPP$ and more
recently Hirahara \cite{Hi18} proved that \NP-hardness of MCSP implies
a worst-case to average-case reduction for problems in \NP~(for an
appropriate MCSP version).

On the other hand if one could show that MCSP is in \Ppoly, this would
imply even stronger (though less realistic) results: Kabanets and Cai
\cite{KC00} also showed that if MCSP is in \Ppoly, then
crypto-secure one way functions can be inverted on a considerable
fraction of their range.

To summarize it seems unlikely that MCSP is in \P, but showing that
it is \NP-hard implies very strong consequences. As these results seem
out of reach for current techniques, it might be a more fruitful
avenue to try to at least rule out that certain (families of) algorithms
solve the MCSP problem efficiently.

This can be achieved very elegantly in proof complexity: show that
some proof system capturing your algorithm requires long proofs to
refute the claim that a complex function has a small circuit. This
will then rule out that the algorithm in question can efficiently
solve the MCSP problem.  This will not only show that this
specific algorithm requires long running time but would also show that
any algorithm captured by this proof system requires long running time
to solve the MCSP problem. Hence by this line of reasoning we manage
to rule out entire classes of algorithms to solve the MCSP problem
efficiently.

This paper focuses on the Sum of Squares proof system (SoS). This
proof system provides certificates of unsatisfiability of systems of
polynomial equations $\calP = \set{p_1 = 0, \ldots, p_m = 0}$ over
$\R$.  In this paper we are only interested in Boolean systems of
equations, meaning that $\calP$ contains the equation $x^2-x=0$ for
every variable $x$ appearing in the system.  A key complexity measure
is the degree of a refutation, which is the maximum degree of a
monomial occurring in the refutation of $\calP$. All Boolean systems
$\calP$ over $n$ variables have an SoS refutation of degree $n$ and we
are interested in the minimum degree that SoS requires to refute
$\calP$.  An SoS refutation of degree $d$ has size $O(n^d)$ and can be
found in $n^{O(d)}$ time using semidefinite programming and this is
often a useful heuristic bound for the complexity of an SoS
refutation.  The actual size complexity of SoS can sometimes be
significantly smaller than $n^{d}$ \cite{potechin20} and it would be
surprising if the shortest refutation can be found efficiently. Hence
it is in general of interest to understand both the degree and the
size needed to refute any given system.

SoS is a very powerful proof system and captures many state of the art
algorithms that are based on spectral methods. A classic algorithm
captured by SoS is Goemans and Williamson's Max-Cut algorithm
\cite{GW95}, but also approximate graph coloring algorithms \cite{KMS98}, and
algorithms solving constraint satisfaction problems \cite{AOW15, RRS17} are
captured by SoS. On the other hand SoS has real difficulty arguing
about integers and in particular parities. Indeed, Grigoriev
\cite{gri01xor} showed that SoS requires degree $\Omega(n)$ to refute
a random \emph{xor} constraint satisfaction problem of the appropriate
(constant) density. After this initial lower bound it took a few years
to develop good lower bounds methods for SoS, but in recent years
there has been a flurry of strong SoS degree lower bounds
\cite{MPW15SumOfSquaresPlantedClique, BHKKMP16clique, kmow17anycsp}.

In order to rule out that algorithms captured by SoS can solve MCSP
efficiently, we need to encode the claim that a given function has a
small circuit as a propositional formula. We work with the encoding
suggested by Razborov \cite{Razborov98}, which encodes this claim that
the function $f: \set{0,1}^n \rightarrow \set{0,1}$ has a circuit of
size $s$ by a propositional formula $\circuit_s(f)$ over
$O(s^2 + s \cdot 2^n) = O(s \cdot 2^n)$ variables as follows. We have
$\Theta(s^2)$ \emph{structure variables} to encode all possible size
$s$ circuits, and for every assignment $\alpha \in \set{0,1}^n$ we
then have an additional $\Theta(s)$ \emph{evaluation variables} that
simulate the evaluation of the circuit on each input, and constraints
forcing the circuit to output the correct value on each input
$\alpha$.

A closely related question to the MCSP problem is the question of how
hard it is to actually prove strong circuit lower bounds. For example,
are there efficient refutations of the statement
$\NP \subseteq \Ppoly$, assuming the statement is false? This
question, as proposed by Razborov \cite{Razborov98}, can also be
investigated by studying above formula: consider
$\circuit_{n^{O(1)}}(\SAT)$, where $\SAT$ is the function that outputs
$1$ if and only if the input is an encoding of a satisfiable CNF. This
is, essentially, a propositional encoding of the claim that $\SAT$ has
a circuit in $\Ppoly$. Hence proving lower bounds for
$\circuit_{n^{O(1)}}(\SAT)$ rules out efficient proofs of
$\NP \not\subseteq \Ppoly$ in the proof system under consideration.

Experience suggests that studying such meta-mathematical questions is
difficult. This problem is no exception to this rule and, even though
the formula has been conjectured to be hard for strong proof systems
such as extended Frege, progress has been slow. The only proof systems
for which we have unconditional, superpolynomial lower bounds on
proofs of the $\circuit_s(f)$ formula are Resolution
\cite{Raz04,Razborov04ResolutionLowerBoundsPM}, small width
DNF-Resolution \cite{Razborov15PRG} and Polynomial Calculus
\cite{Razborov98, Razborov15PRG}. The resolution size and Polynomial
Calculus degree lower bounds follow from a reduction of the pigeonhole
principle to $\circuit_s(f)$. In fact, this reduction was a main
motivation for a long line of work \cite{RazbWigYao02, PitRaz04,
  Raz04, Razborov04ResolutionLowerBoundsPM} eventually establishing
strong resolution lower bounds for the weak pigeonhole principle.  The
other size lower bounds follow from a general connection between
pseudo-random generator lower bounds and MCSP lower bounds as outlined
in \cite{ABRW04Pseudorandom,Razborov15PRG}, building on Krajíček's
iterability trick \cite{Krajicek01}.

As the pigeonhole principle is easy for the SoS proof system
\cite{GHA02proofs}, we cannot hope to borrow the hardness from that
formula. Neither do we have strong enough pseudorandom generator lower
bounds for SoS to employ that connection. In fact, to date, we have no
unconditional (degree) lower bounds for any semi-algebraic proof
system, that is, proof systems that manipulate polynomial inequalities
such as SoS or Cutting Planes. Furthermore it has been stated
\cite{razborov21youtube, razborov22website} as an explicit open
problem to prove SoS degree lower bounds for the formula
$\circuit_s(f)$.

\subsection{Our Results}

Our first result gives a lower bound on the degree needed to refute
$\circuit_s(f)$ in SoS.  This lower bound is very general and in fact
applies to \emph{every} Boolean function
$f: \set{0,1}^n \rightarrow \set{0,1}$.

\begin{restatable}{theorem}{MainDegree}
  \label{thm:main-degree}
  For all $\eps > 0$ there is a $d = d(\eps)$ such that the following
  holds.  For $n \in \N$, all $s \ge n^{d}$ and any Boolean function
  $f:\set{0,1}^{n} \rightarrow \set{0,1}$ on $n$ bits, SoS requires
  degree $\Omega_\eps(s^{1-\eps})$ to refute $\circuit_s(f)$.
\end{restatable}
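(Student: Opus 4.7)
The plan is to prove the degree lower bound by reducing a known-hard constraint satisfaction problem to $\circuit_s(f)$. Following the strategy foreshadowed in the abstract, the reduction has three components. First, fix a CSP $\calF$ over $m$ variables for which SoS requires degree $\Omega(m)$ and whose lower bound depends only on good expansion of the constraint-variable incidence graph: for concreteness, a random (or carefully constructed) expanding 3-XOR instance together with Grigoriev's bound. Second, design a ``template'' size-$s$ circuit in which the topology and most of the gate choices are hardcoded, but $m$ ``free slots'' remain to be filled; the template is built so that for any target function $f$, the slots could in principle be set to make the circuit compute $f$. Third, define a substitution $\sigma$ sending each variable of $\circuit_s(f)$ to a local polynomial in the CSP variables of $\calF$, chosen so that every axiom of $\circuit_s(f)$ pulls back under $\sigma$ either to a syntactic identity or to a constraint of $\calF$.

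Given such a substitution, I would invoke a standard SoS substitution lemma (in the style of \cite{AOW15, kmow17anycsp}) to lift a pseudo-expectation for $\calF$ into one for $\circuit_s(f)$: the pseudo-expectation of a circuit monomial is computed by applying $\sigma$ and evaluating the resulting polynomial under the CSP pseudo-expectation. Locality of $\sigma$, meaning each circuit variable depends on only $O(1)$ CSP variables, guarantees a constant-factor blow-up in degree, so that a degree-$d$ SoS refutation of $\circuit_s(f)$ yields a degree-$O(d)$ SoS refutation of $\calF$. Choosing parameters so that $m \approx s^{1-\eps}$ (which forces the threshold $s \geq n^{d(\eps)}$ in the theorem statement, since we need enough room inside the template for $\Theta(m)$ slot-gates) and combining with the $\Omega(m)$ lower bound for $\calF$ gives the claimed $\Omega(s^{1-\eps})$ bound.

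The hard part lies in constructing the template and substitution, which must simultaneously satisfy three competing requirements: \emph{expressiveness}, so that the free slots can realize any $f$ and the reduction is oblivious to $f$; \emph{expansion}, so that the bipartite graph of evaluation-constraints against slot-variables inherits the expansion on which the SoS lower bound for $\calF$ relies; and \emph{locality}, so that $\sigma$ maps each circuit variable to an $O(1)$-local polynomial in CSP variables. Expressiveness pushes slot-gates toward influencing many inputs, whereas expansion demands that each slot be touched by only a small and ``spread-out'' set of input-indexed constraints. I would resolve this tension with a layered template: a bulk of hardcoded gates that perform a generic lookup-style computation on each input, together with a sparse expander-like arrangement of slot gates whose contribution to each evaluation-axiom corresponds to exactly one $\calF$-constraint, with $f$-dependent default values used on the non-influential slots so that every remaining axiom of $\circuit_s(f)$ collapses to a trivial identity under $\sigma$.
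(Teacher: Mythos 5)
Your high-level plan --- hardcode a size-$s$ template circuit with $m$ free slot-gates, and transfer an expansion-based XOR lower bound through a substitution on the variables of $\circuit_s(f)$ --- is indeed the paper's route (\cref{lem:restriction} plus \cref{lem:circuit-to-poly}). But your ``expressiveness'' requirement is both impossible and unnecessary, and chasing it derails the construction. A variable substitution preserves unsatisfiability, and a template with $m \approx s^{1-\eps}$ free Boolean slots has only $2^m$ completions, far fewer than the $2^{2^n}$ target functions, so no template can ``in principle be set to compute $f$'' for every $f$. What is actually needed is almost the opposite condition ($m$-independence: \emph{every} slot assignment yields a valid circuit, so the structure axioms become identities), combined with the key observation you never state: after the restriction, each output axiom $\outwire_\alpha(s)=f(\alpha)$ becomes an XOR constraint $\oplus_{i\in N(\alpha)} y_i = f(\alpha)$, and Grigoriev's lower bound (\cref{thm:sos-xor}) holds for \emph{every} right-hand side, satisfiable or not. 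That is the mechanism that makes the bound apply to all $f$; satisfiability of the restricted formula is irrelevant, and indeed for hard $f$ it is unsatisfiable.

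The concrete instantiation also fails in the stated parameter range. Every one of the $2^n$ output axioms must pull back to a constraint whose right-hand side is the arbitrary bit $f(\alpha)$. With a fixed 3-XOR instance and an $O(1)$-local substitution over $m \le s^{1-\eps}$ variables (with $s$ possibly as small as $n^d$), many distinct inputs $\alpha$ would be forced onto identical or heavily overlapping constraint supports; an adversarial $f$ giving such inputs different values is then refutable in constant degree, and expansion of the $2^n$-constraint-by-$m$-variable incidence graph is destroyed. Avoiding this forces the constraint arity to grow, $k = \poly(n)$ in the paper, obtained from the \emph{explicit} unbalanced expanders of \cref{thm:expander}; explicitness is essential (a random instance will not do) because the $m$ selector subcircuits computing the neighbor relation, each of size $\poly(n,k)$, must themselves be hardwired into the size-$s$ template. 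Consequently the evaluation variables are $k$-local, not $O(1)$-local, functions of the slots, and the degree transfer (\cref{lem:circuit-to-poly}, via \cref{def:determined}) loses a factor $k=\poly(n)$. This $\poly(n)$ loss --- not merely ``room for the slot gates'' --- is the real source of the hypothesis $s \ge n^{d(\eps)}$.
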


It is worthwhile to point out that the proof of \cref{thm:main-degree}
is not specific to the SoS proof system. In fact we outline a general
reduction that shows that if one has a CSP lower bound of the form of
\cref{thm:sos-xor} that only requires good expansion of the underlying
constraint-variable incidence graph and the proof system is expressive
enough so that one can replace variables by local Boolean functions,
then one obtains strong lower bounds for the $\circuit_s(f)$ formula.

The lower bound of $\Omega_\eps(s^{1-\eps})$ on the degree is
essentially tight: if $f$ does not have a circuit of size $s$ then
there exists an SoS refutation of this statement in degree $O(s)$.

\begin{proposition}
  Let $s \in N$ and $f: \set{0,1}^n \rightarrow \set{0,1}$ be a
  Boolean function on $n$ bits that requires circuits of size larger
  than $s$ to be computed. Then there is a degree $O(s)$ SoS
  refutation of $\circuit_s(f)$.
\end{proposition}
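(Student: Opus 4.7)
The plan is to enumerate all legal size-$s$ circuit structures as a case analysis inside SoS and derive a local contradiction in each case using the evaluation axioms for a single well-chosen input. The hypothesis that $f$ has no circuit of size $s$ enters precisely in choosing this input.

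First I would fix notation: let $x$ denote the (at most $\Theta(s^2)$) structure variables and, for each $\alpha\in\set{0,1}^n$, let $y_{\bullet,\alpha}$ denote the $\Theta(s)$ evaluation variables at input $\alpha$. Any $\set{0,1}$-assignment to $x$ that satisfies the well-formedness axioms of $\circuit_s(f)$ uniquely determines a circuit $C_\sigma$ of size at most $s$. For each such $\sigma$, I introduce an \emph{indicator polynomial}
\[
I_\sigma \;=\; \prod_{i=1}^{s}\chi^{\sigma}_{i}(x),
\]
where $\chi^\sigma_i$ is a product of $O(1)$ structure variables (the type of gate $i$ and its two input sources as specified by $\sigma$). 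Each $\chi^\sigma_i$ has degree $O(1)$, so $\deg I_\sigma = O(s)$. Using the well-formedness axioms (``each gate has exactly one type and exactly two input sources''), one obtains, per gate, a degree-$O(1)$ SoS identity $\sum_{\text{choice}}\chi^{\text{choice}}_i = 1$ modulo the axioms. Taking the product of these $s$ local identities gives a partition of unity $\sum_\sigma I_\sigma \equiv 1$ modulo the axioms, derivable in SoS degree $O(s)$.

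Next, for each legal $\sigma$, the assumption on $f$ supplies an input $\alpha_\sigma$ with $C_\sigma(\alpha_\sigma)\neq f(\alpha_\sigma)$. Multiplying the gate-evaluation axioms at input $\alpha_\sigma$ by $I_\sigma$ pins the evaluation variables to $C_\sigma$'s computation: inductively one derives $I_\sigma\cdot\bigl(y_{i,\alpha_\sigma}-c^{(i)}_\sigma\bigr)=0$ for each gate $i$, where $c^{(i)}_\sigma\in\set{0,1}$ is the correct value of gate $i$ of $C_\sigma$ on input $\alpha_\sigma$. The inductive step has degree $O(1)$ in the axiom plus the degree of $I_\sigma$, so each stage stays in degree $O(s)$. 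The final output axiom $y_{\text{out},\alpha_\sigma}=f(\alpha_\sigma)$ combined with $c^{(\text{out})}_\sigma=C_\sigma(\alpha_\sigma)\neq f(\alpha_\sigma)$ then forces $I_\sigma=0$ in degree $O(s)$. Summing over $\sigma$ and using the partition of unity $\sum_\sigma I_\sigma=1$ produces the desired refutation in degree $O(s)$.

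The step I expect to be the main obstacle is the partition-of-unity derivation: since the structure encoding uses $\Theta(s^2)$ one-hot variables, a naive product of full per-gate indicators would have degree $\Theta(s^2)$. The key point is to choose $\chi^\sigma_i$ using only the handful of structure variables that are \emph{asserted} by $\sigma$ (and not the $\Theta(s)$ negated ones), and to derive $\sum_{\text{choice}}\chi^{\text{choice}}_i=1$ by directly invoking the ``exactly one'' well-formedness axiom at gate $i$ rather than by inclusion-exclusion over all $\Theta(s)$ possibilities. This keeps each local identity in degree $O(1)$ and the final refutation in degree $O(s)$.
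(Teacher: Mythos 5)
Your proposal is correct and follows essentially the same route as the paper's proof: your indicator polynomials $I_\sigma$ built only from the asserted one-hot structure variables are exactly the paper's ``circuit monomials,'' your partition of unity $\sum_\sigma I_\sigma = 1$ is derived the same way (gate-by-gate multiplication of the exactly-one axioms, giving degree $O(s)$ since each gate contributes $O(1)$ variables), and your per-$\sigma$ step of pinning the evaluation variables at a distinguishing input $\alpha_\sigma$ by structural induction and contradicting the output axiom matches the paper's derivation of $-m$ for each circuit monomial $m$. The only nuance the paper makes explicit that you leave implicit is that the per-circuit derivation must be purely a linear combination of axioms (no squares) so that its sign can be flipped before summing against the partition of unity, but since your derivation is of exactly that form this is not a gap.
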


We also prove a result about the minimum size (number of monomials)
required for SoS to refute $\circuit_s(f)$. This result holds for all
functions that ``almost'' have a circuit of size $s$, in the sense that
they have an errorless heuristic circuit (see the survey \cite{BT06})
of size $s/2$ and extremely small error probability
with respect to the uniform distribution.
Formally, we let $\calF_n(s,t)$ denote the class of Boolean functions
that consists of all functions $f: \set{0,1}^n \rightarrow \set{0,1}$
for which there is a Boolean circuit
$C_f: \set{0,1}^n \rightarrow \set{0, 1, \bot}$ of size at most $s$
such that
\begin{enumerate}
\item if $C_f(\alpha) \neq \bot$, then $C_f(\alpha) = f(\alpha)$, and
\item $C_f(\alpha) = \bot$ on at most $t$ inputs.
\end{enumerate}
In other words the circuit $C_f$ computes $f$ correctly on all except
$t$ inputs. Note that technically the output of the circuit $C_f$ is
two bits with the first one indicating whether the output is $\bot$ or
the value of the second bit.  We believe that above presentation is
more intuitive and hope that the slight abuse of notation causes no
confusion. With the class of functions $\calF_n(s,t)$ at hand we can
state our main SoS size lower bound.

\begin{restatable}{theorem}{MainSize}
  \label{thm:main-size}
  For all $\eps > 0$ there is a $d = d(\eps)$ such that the following
  holds. Let $n \in N$ and $s \in \N$ such that $s \ge n^d$. If
  $t \ge s$ and $f \in \calF_n(s/2, t)$, then it holds that SoS
  requires size $\exp\big(\Omega_\eps(s^{2-\eps}/t)\big)$ to refute
  $\circuit_s(f)$.
\end{restatable}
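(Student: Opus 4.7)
The plan is to lift the degree lower bound of \cref{thm:main-degree} to a size lower bound via a standard size--degree trade-off for SoS, after arguing that for $f \in \calF_n(s/2, t)$ the formula $\circuit_s(f)$ behaves---up to substitution by local Boolean functions---like a constraint satisfaction problem on only $\Theta(t)$ variables, so that squaring the degree bound and dividing by the effective number of variables yields exactly the target $\exp(\Omega(s^{2-\eps}/t))$.

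First I would revisit the reduction underlying \cref{thm:main-degree}. Per the framework described in the introduction, degree hardness of $\circuit_s(f)$ is established by embedding an expanding CSP (for SoS, a sparse random 3-XOR instance in the spirit of Grigoriev) by substituting each CSP variable with a local Boolean function of the structure and evaluation variables. I would re-examine this reduction under the additional hypothesis $f \in \calF_n(s/2, t)$: the errorless heuristic circuit $C_f$ of size $s/2$ correctly computes $f$ on all but $t$ bad inputs, so the entire evaluation subformula on good inputs can be ``frozen'' by hardcoding the partial computations of $C_f$ and contributes no hardness. One can therefore aim to embed a CSP on only $N = \Theta(t)$ variables into the bad-input subformula while retaining enough expansion that the degree lower bound from \cref{thm:main-degree} still reads $d = \Omega(s^{1 - \eps/2})$.

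Next I would apply the size--degree trade-off for SoS (in the style of Atserias and Hakoniemi): any SoS refutation of size $S$ of a polynomial system on $N$ variables can be converted into one of degree $O(\sqrt{N \log S})$. Combined with $d = \Omega(s^{1 - \eps/2})$ and $N = \Theta(t)$, this forces
\[
  S \;\geq\; \exp\!\bigl(\Omega(d^{2}/N)\bigr) \;=\; \exp\!\bigl(\Omega(s^{2 - \eps}/t)\bigr),
\]
which is exactly the conclusion of \cref{thm:main-size}; the hypothesis $s \geq n^{d(\eps)}$ absorbs the polylogarithmic factors incurred in the trade-off.

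The main obstacle, I expect, is the first step: rigorously realising $\circuit_s(f)$ as a substitution-instance of a CSP on only $\Theta(t)$ variables without degrading the degree lower bound below $\Omega(s^{1 - \eps/2})$. This requires simultaneously exploiting two features of $\calF_n(s/2, t)$: that $C_f$ lets us kill off the good-input part of the formula in an SoS-friendly way, and that a $\Theta(t)$-variable random 3-XOR instance still exhibits the vertex expansion that the framework of \cref{thm:main-degree} ultimately rests on. Once the reduction has been tightened to this parameter setting, the rest---invoking the size--degree trade-off and checking that $s \geq n^{d(\eps)}$ suppresses the polylogarithmic losses---is routine accounting.
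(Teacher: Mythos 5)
Your overall skeleton is the same as the paper's: use the errorless heuristic circuit $C_f$ to freeze the formula on the good inputs, so that the restricted instance has only about $t\cdot\poly(n)$ variables, keep the degree lower bound of roughly $s^{1-\eps/2}$, apply the Atserias--Hakoniemi trade-off (\cref{thm:tradeoff}) to the restricted formula, and conclude via the fact that variable substitutions cannot increase refutation size (\cref{lem:subformula}). However, the step you yourself flag as the main obstacle is not ``routine accounting'' and your concrete plan for it is mis-specified in a way that would not work. First, the hard CSP is \emph{not} over $\Theta(t)$ variables, and cannot be: the CSP variables must be realized as unset structure gadgets (gates whose constant input is left free) inside a circuit of size $s$, together with selector subcircuits computing, for each input $\alpha$, which CSP variables occur in $\alpha$'s constraint. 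Since $t\ge s$, you cannot fit $\Theta(t)$ such gadgets (plus their selectors) into the size budget; in the paper the CSP stays over $m\approx s/\poly(n)$ structure variables, exactly as in the degree bound. Second, a ``sparse random 3-XOR in the spirit of Grigoriev'' is incompatible with the construction: the constraint graph must be computed by $\poly(n,k)$-size selector circuits hard-wired into the formula, which is precisely why the paper uses the \emph{explicit} unbalanced expanders of \cref{thm:expander}; a random constraint graph on the bad inputs has no small-circuit description when $t\gg s$.

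Where $t$ actually enters is different from what you propose: after the restriction of \cref{lem:restriction}, the surviving variables are the $m$ CSP variables \emph{plus} $O(k)$ evaluation variables per bad input (the XOR-evaluation chain collapses on unselected positions because a deselected input feeds a constant $0$), for a total of $O(t\cdot k+m)$ variables. The degree bound $\Omega(r/k)=\Omega_\eps(s^{1-\eps/3})$ comes from \cref{thm:sos-xor} applied to the expander over the $m\approx s/\poly(n)$ variables (expansion is retained because the bad-input constraints are a subset of the expander's constraint side), and only then does \cref{thm:tradeoff} give $\exp\big(\Omega((s^{1-\eps/3}-O(k))^2/(tk+m))\big)=\exp\big(\Omega_\eps(s^{2-\eps}/t)\big)$ using $t\ge s\ge m$ and $k\le s^{\eps/3}$. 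So your final arithmetic happens to land on the right exponent, but the justification of both the variable count and the degree bound — i.e., the analogue of \cref{lem:restriction} with its explicitness, $m$-independence and $k$-determinedness guarantees — is missing, and the specific realization you sketch (a $\Theta(t)$-variable random 3-XOR embedded in the bad-input subformula) would fail.
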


This yields non-trivial size lower bounds for $t$ as large as
$s^{2-\eps}/\omega(1)$.  Furthermore, note that once $t \gg s \log s$
there are functions that require such large circuits.  For example
setting $s = 2^{n^{0.99}}$ and $t = s^{1.5}$, the theorem shows that
there are functions $f$ that do not have circuits of size $s$, but SoS
requires size $2^{2^{\Omega(n^{0.99})}}$ to prove this.

It is natural to wonder whether SoS fares better in the monotone
setting. In other words, whether SoS can refute the claim that a
complex monotone function has a small monotone circuit. The following
two theorems show that this is not the case for the set
$\calM_n(\ell)$ of monotone $\ell$-slice functions.  Recall that
$\calM_n(\ell)$ consist of all Boolean functions $f$ on $n$ bits such
that $f(\alpha) = 0$ for all $\alpha$ with Hamming weight less than
$\ell$, and $f(\alpha) = 1$ for all $\alpha$ with Hamming weight
greater than $\ell$ (note that any such $f$ is monotone).

We define a variant $\circuit_s^{\mon}(f)$ of the $\circuit_s(f)$
formula, which instead encodes the claim that $f$ has a monotone
circuit of size $s$, and prove the following theorem.

\begin{restatable}{theorem}{MonotoneDegree}
  \label{thm:main-monotone-degree}
  For all $\eps > 0$ there is a $d = d(\eps)$ such that the following
  holds.  For all $n, \ell \in \N$, all $s \ge n^{d}$ and any monotone
  slice function $f \in \calM_n(\ell)$ SoS requires degree
  $\Omega_\eps(s^{1-\eps})$ to refute
  $\circuit_s^{\mon}(f)$.
\end{restatable}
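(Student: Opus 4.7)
The plan is to mimic the proof of \cref{thm:main-degree}, adapting the argument to the monotone setting. The paper explicitly states that the proof of \cref{thm:main-degree} is a general reduction from CSPs (such as random XOR via \cref{thm:sos-xor}) whose SoS lower bounds rely only on good expansion of the constraint-variable incidence graph, provided that the ambient proof system is expressive enough to substitute variables by local Boolean functions. Since SoS retains this expressiveness irrespective of whether the target circuit class is monotone, and since the underlying CSP hardness is oblivious to the downstream formula, essentially the same scheme should apply.

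First I would define $\circuit_s^{\mon}(f)$ precisely: the structure variables now encode only AND/OR gates (the $\isneg$ variables are removed), and the evaluation axioms are rewritten to reflect monotone gate semantics. Because $f \in \calM_n(\ell)$ is a slice function, monotonicity pins down its value on all inputs of weight $\ne \ell$, so the effectively constraining evaluation subformulas are those coming from inputs of weight exactly $\ell$, of which there are $\binom{n}{\ell}$. This is the domain on which the CSP-to-MCSP substitution will live.

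The main step is to carry out the substitution: for each variable of a hard expander CSP instance, I would define a local Boolean function of the evaluation variables of $\circuit_s^{\mon}(f)$ on a carefully chosen collection of slice inputs, in such a way that (a) the CSP constraints become pseudo-consequences of $\circuit_s^{\mon}(f)$, and (b) the constraint-variable incidence graph of the lifted system retains expansion sufficient to invoke \cref{thm:sos-xor}. Locality of the substituting functions is what guarantees (b); once both properties are secured, the CSP degree lower bound lifts verbatim to $\circuit_s^{\mon}(f)$ as in the non-monotone proof.

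The principal obstacle is item (a): in the non-monotone setting one has full access to negations when choosing substitution functions, whereas here only AND/OR-combinations are directly available inside the circuit. I expect this to be circumvented by selecting pairs or small tuples of slice inputs that differ in only a few coordinates, so that the evaluation variables across these related inputs encode enough ``switching'' behavior to realize the required local Boolean functions despite the monotone gate restriction; here the slice structure is genuinely helpful, since by Berkowitz-type arguments slice functions tolerate a monotone encoding with only polynomial overhead. With such substitutions established, the expansion-based CSP lower bound of \cref{thm:sos-xor} lifts to $\circuit_s^{\mon}(f)$ and yields the claimed $\Omega_\eps(s^{1-\eps})$ degree bound.
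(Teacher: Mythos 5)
There is a genuine gap, and it concerns precisely the step you label as the ``main step''. In the paper's reduction (both for \cref{thm:main-degree} and for this theorem, via \cref{lem:restriction-monotone}), the CSP variables are \emph{not} obtained as local Boolean functions of the evaluation variables of $\circuit^{\mon}_s(f)$. Instead one applies a restriction to the \emph{structure} variables that hard-wires almost the entire circuit --- monotone selector circuits computing the neighbor relation of the explicit expander of \cref{thm:expander}, plus an XOR aggregator --- leaving exactly $m$ free structure variables $Y$, and these free structure variables \emph{are} the CSP variables. The evaluation variables are then eliminated by substituting each of them with a low-degree polynomial in $Y$ (this is the $k$-determinedness of \cref{def:determined}, exploited in \cref{lem:circuit-to-poly}), so that a degree-$d$ refutation of the restricted formula yields a degree-$dk$ refutation of the XOR instance, contradicting \cref{thm:sos-xor}. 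Your direction of substitution --- mapping each CSP variable to a function of the MCSP evaluation variables so that ``the CSP constraints become pseudo-consequences of $\circuit^{\mon}_s(f)$'' --- transfers hardness the wrong way: it would convert a refutation of the CSP into a refutation of the MCSP formula, which yields no lower bound for the latter. To get the lower bound you must convert an arbitrary refutation of $\circuit^{\mon}_s(f)$ into a refutation of the hard CSP, which is what the restriction-plus-polynomial-substitution machinery accomplishes.

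The second gap is the monotone-specific one, which is the actual new content of this theorem. Your plan to realize negations through ``pairs or small tuples of slice inputs that differ in few coordinates'' with ``switching behavior'' of evaluation variables cannot work as stated: the restricted object must itself be a legal \emph{monotone circuit}, evaluated input by input, so nothing in the construction may combine evaluation variables across different inputs $\alpha$; the negations one needs are of functions of $\alpha$ and of the free gates $Y$, inside a single evaluation. The paper's mechanism is different: restricted to the $\ell$-slice, the negation of a slice function is again computable by a small monotone circuit (\cref{lem:slice}, \cref{clm:to-slice}), so one hard-wires both the on-slice selectors $\sel_i^{\mon}$ and their on-slice negations $\overline\sel_i^{\mon}$, duplicates the free gates into $Y$ and $\overline Y$ (with $\overline Y$ tied to $Y$ through the variable substitution $\bar y_i$), places them on the slice with monotone threshold circuits, and monotonizes the XOR circuit by pushing its negations down via De Morgan until they reach these precomputed negated leaves; an extra check shows the resulting circuit is a slice function for every assignment to $Y$, which is what makes the substitution $m$-independent. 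You invoke the right fact (the Berkowitz slice phenomenon) but not the construction that uses it, so the proposal is missing the step that makes the monotone case go through.
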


As in the non-monotone case, we can also obtain size lower bounds for
the monotone-MCSP. Akin to the general size lower bound we consider
monotone Boolean slice functions that have good monotone errorless
heuristic circuits. Let $\calM_n(\ell, s, t) \subseteq \calM_n(\ell)$
be the class of monotone Boolean $\ell$-slice functions
$f: \set{0,1}^n \rightarrow \set{0,1}$ for which there is a
(not necessarily monotone) Boolean circuit
$C_f^\mon: \set{0,1}^n \rightarrow \set{0,1, \bot}$ of size $s$ such
that
\begin{enumerate}
\item for all $\ell$-slice inputs $\alpha \in \binom{[n]}{\ell}$ it
  holds that if $C^\mon_f(\alpha) \neq \bot$, then
  $C^\mon_f(\alpha) = f(\alpha)$, and
\item $C^\mon_f(\alpha) = \bot$ on at most $t$ inputs
  $\alpha \in \binom{[n]}{\ell}$.
\end{enumerate}

\begin{restatable}{theorem}{MonotoneSize}
  \label{thm:main-monotone-size}
  For all $\eps > 0$ there is a $d = d(\eps)$ such that the following
  holds.  For $n, \ell \in \N$, all $s \ge n^{d}$ and $t \ge s$ and
  monotone function $f \in \calM_n(\ell, s/10, t)$ SoS requires size
  $\exp\big(\Omega_\eps(s^{2-\eps}/t)\big)$ to refute
  $\circuit_s^{\mon}(f)$.
\end{restatable}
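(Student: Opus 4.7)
The plan is to follow the general framework outlined in the introduction: combine the monotone-slice reduction that proves \cref{thm:main-monotone-degree} with the expansion-based CSP size lower bounds (of KMOW type, see \cite{kmow17anycsp}) that power \cref{thm:main-size}. Concretely, I would exhibit a sequence of variable substitutions and restrictions reducing $\circuit_s^{\mon}(f)$ to a random 3-XOR instance with $\Theta(t)$ constraints over $\Theta(s)$ fresh variables whose constraint-variable incidence graph has very good (near-unique-neighbour) expansion, and then invoke the CSP size lower bound to conclude $\exp\bigl(\Omega_\eps(s^{2-\eps}/t)\bigr)$.

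First, I would take the errorless heuristic circuit $C_f^{\mon}$ of size $s/10$ promised by $f \in \calM_n(\ell, s/10, t)$, which agrees with $f$ on every $\ell$-slice input $\alpha \in \binom{[n]}{\ell}$ except on a bad set $B$ of size at most $t$, and hard-wire part of the structure variables of $\circuit_s^{\mon}(f)$ so that the first block of gates computes a monotone simulation of $C_f^{\mon}$ on the $\ell$-slice. The reason the theorem budgets $s/10$ rather than $s/2$ as in the non-monotone \cref{thm:main-size} is precisely to absorb the polynomial overhead of the standard slice trick: any (not necessarily monotone) circuit of size $m$ admits a monotone circuit of size $\mathrm{poly}(n) \cdot m$ that agrees with it on the $\ell$-slice. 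The remaining constant fraction of the $s$ gates is then free, and the evaluation axioms force those gates to override the hard-wired output on every input in $B$ so that the value $f(\alpha)$ is produced instead.

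Next, imitating the reduction underlying \cref{thm:main-monotone-degree}, I would substitute the still-free structure variables by small monotone gadgets that compute parity on fresh variables when restricted to the slice, and then perform a random restriction of the gadget variables so that, for each bad input $\alpha \in B$, the chain of evaluation axioms forcing the output to equal $f(\alpha)$ collapses to a bounded-width linear constraint over the surviving variables. The only new ingredient beyond the degree argument is certifying that, with positive probability over the wiring and the restriction, the resulting bipartite constraint-variable graph has expansion strong enough to feed into the expanding-XOR SoS size lower bound, namely every subset of up to $\Omega_\eps(s^{2-\eps}/t)$ constraints touches strictly more than $(1-\eps)$ as many variables as it contains; this follows from a standard union bound over bad subsets, exactly as in \cite{kmow17anycsp}.

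The main obstacle I anticipate is verifying that the slice-preserving monotone simulation of the XOR gadgets does not destroy the expansion of the constraint-variable graph. Because that simulation introduces many auxiliary wires, one might worry about spurious short cycles between constraints that share these wires, which would ruin expansion. The resolution is to keep the auxiliary wires local to each gadget, so that they can be absorbed into the variable side of the bipartition without inflating the constraint count, at the cost only of a $(1 \pm o(1))$ factor in the expansion constants. Once this is checked, plugging the resulting random CSP into the KMOW-style SoS size lower bound yields the claimed $\exp\bigl(\Omega_\eps(s^{2-\eps}/t)\bigr)$ bound.
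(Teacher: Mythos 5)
Your reduction skeleton (hard-wiring $C_f^{\mon}$, monotonizing the auxiliary circuitry via the slice trick, selector circuits feeding an XOR over the free structure variables, and collapsing the evaluation chains for the at most $t$ bad slice inputs) is indeed the shape of the paper's \cref{lem:restriction-monotone}. But two of your steps would fail as proposed. First, you cannot work with a \emph{random} 3-XOR instance or a random wiring: the constraint--variable incidence graph must be realized by selector circuits that are hard-coded into the structure variables of a circuit of total size $s$, and since the number of constraints (bad inputs on the $\ell$-slice) can be as large as $t \ge s$, an arbitrary random bipartite graph on these constraints simply cannot be described within the size budget. This is precisely why the paper uses the \emph{explicit} unbalanced expanders of \cref{thm:expander}, whose neighbor relation is computable by circuits of size $\poly(n,k)$, and why the constraints have width $k = \poly(n)$ rather than $3$; the hardness of the resulting XOR system then comes deterministically from \cref{thm:sos-xor}, with no probabilistic argument over wirings available or needed.

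Second, the concluding step ``plug the resulting CSP into a KMOW-style SoS size lower bound'' appeals to a result that does not exist in the form you need: \cite{kmow17anycsp}, like \cref{thm:sos-xor}, gives \emph{degree} lower bounds, not monomial-size lower bounds. The paper gets size from degree via the Atserias--Hakoniemi size--degree tradeoff (\cref{thm:tradeoff}), applied \emph{to the restricted circuit formula itself}; this is exactly where the errorless heuristic circuit earns its keep, since the restriction of \cref{lem:restriction-monotone} leaves only $O(t \cdot k + m)$ variables, so a degree bound of $\Omega_\eps(s^{1-\eps/3})$ (transferred through \cref{lem:circuit-to-poly}) yields size $\exp\big(\Omega_\eps(s^{2-\eps}/t)\big)$, which then passes to the unrestricted $\circuit_s^{\mon}(f)$ by \cref{lem:subformula}. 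Note also that you could not instead prove a size bound for the XOR instance and pull it back: the passage from the restricted circuit formula to the pure XOR system uses a \emph{polynomial} substitution (\cref{lem:sos degree polynomial substitution}), which preserves degree up to a factor $k$ but can blow up the number of monomials, so size lower bounds do not transfer along that map. Finally, your stated expansion requirement (each small set $W$ of constraints touching more than $(1-\eps)|W|$ variables) is far too weak for XOR refutation hardness; one needs expansion of the form $|N(W)| \ge 2|W|$ as in \cref{thm:sos-xor}.
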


\subsection{Overview of Proof Techniques}

\paragraph{Degree Lower Bound:}
The main idea that drives our result is a reduction from an expanding
\emph{xor} constraint satisfaction problem to the $\circuit_s(f)$
formula. The reduction is achieved through a careful restriction of
the $\circuit_s(f)$ formula, such that each input
$\alpha \in \set{0, 1}^n$ to the circuit specifies an \emph{xor}
constraint over some new set of variables $Y$.  These $Y$ variables
are a subset of roughly $\Theta(s^{1-\epsilon})$ out of the
$\Theta(s^2)$ many structure variables of the $\circuit_s(f)$ formula.
All other structure variables apart from the $Y$ variables are fixed
to constant values in this step.
This will then result
in an XOR-CSP instance with $2^n$ constraints over the variables
$Y$. All that SoS has to prove is that there is no satisfying
assignment to this XOR-CSP instance. By ensuring that the
constraint-variable incidence graph is sufficiently expanding, SoS
requires large degree to refute the restricted formula (see
\cref{thm:sos-xor}). At the same time, we need the constraint graph to
be very explicit so that it can be encoded into a small circuit.  For
this we utilize a construction of unbalanced expanders by Guruswami et
al.~\cite{GUV09Unbalanced} (see \cref{thm:expander}).  This reduction
then immediately yields \cref{thm:main-degree}.

\pa{Maybe in the above paragraph we also want to point out that all
  the $s \cdot 2^n$ evaluation variables are still alive, effectively
  giving SoS a bunch of auxiliary variables to use when refuting the
  XOR instance, and that some care needs to be taken to deal with
  this?}

\kr{I think this is too technical at this point? I would rather not
  comment on this.}

This lower bound may also be viewed as implementing the general
program sketched by Razborov \cite{Razborov15PRG} relating pseudorandom
generators in proof complexity to the MCSP problem.  However, we prefer
to describe it as a direct reduction to the MCSP problem.

\paragraph{Size Lower Bound:}
In order to obtain size lower bounds, we would like to apply the
degree-size tradeoff due to Atserias and Hakoniemi
\cite{AH18SosTradeoff} to 
\cref{thm:main-degree}. Unfortunately the formula is
over too many variables to be able to conclude a meaningful size lower
bound: it is defined over roughly $\Omega(2^n \cdot s)$
variables.

Instead of applying \cref{thm:main-degree}, we restrict our attention
to functions with all except the at most $t$ $\bot$-outputs computed
by the corresponding errorrless heuristic circuit. If we choose $t$
small enough, then we are able to heavily restrict $\circuit_s(f)$ and
significantly reduce the number of variables to the point where the
Atserias-Hakoniemi degree-size tradeoff is applicable.

\paragraph{Monotone Circuits:}

We prove these theorems by adapting the proofs for the non-monotone setting.
The idea is to work over the $\ell$th slice and disregard all other
inputs. The key feature that makes this work is the fact that the
monotone circuit complexity of a slice function is essentially the same as
the (ordinary) circuit complexity (see \cref{lem:slice}).  This lets us
convert all subcircuits used in the reduction to small monotone circuits (if
we only work on the slice).

The size lower bound goes along the same lines as the proof of
\cref{thm:main-size}. 

\subsection{Organization}

In \cref{sec:preliminaries}, we provide the necessary background
material. In \cref{sec:circ-rest} we set up the general framework for our
lower bounds with some preliminary definitions and lemmas.
Then in \cref{sec:demorgan-LB} we prove the main degree \cref{thm:main-degree} and
size \cref{thm:main-size} lower bounds. We
prove the monotone lower bounds \cref{thm:main-monotone-degree} and
\cref{thm:main-monotone-size} in \cref{sec:monotone-degree-LB}.  In
\cref{sec:upper-bound} we explain how SoS of degree $O(s)$ can refute
$\circuit_s(f)$ (provided $f$ does not have a circuit of size $s$).
Finally in \cref{sec:conclusion} we give some concluding remarks.

\section{Preliminaries}
\label{sec:preliminaries}

All logarithms are in base $2$.  For integers $n \ge 1$ we write
$[n] = \set{1, 2, \ldots, n}$ and for a set $U$ we denote the power
set of $U$ by $2^{U}$. Further, for a set $V \subseteq U$ we let
$\overline V$ be the complement of $V$ with respect to $U$, that is,
$\overline V = U \setminus V$. We write
$\binom{[n]}{\ell} \subseteq \set{0,1}^n$ for the set of binary
strings with Hamming weight $\ell$. For a string
$\alpha \in \set{0,1}^n$ we let
$|\alpha| = \sum_{i \in [n]} \alpha_i$.

We sometimes want to supress dependencies on constants and write
$f(n, \eps) \in O_\eps\big(g(n, \eps)\big)$, respectively
$f(n, \eps) \in \Omega_\eps\big(g(n, \eps)\big)$, to mean that there
exists a function $c(\eps) > 0$ such that there is an $n_0$ and for all
$n \ge n_0$ it holds that $f(n, \eps) \le c(\eps) \cdot g(n, \eps)$,
respectively $f(n, \eps) \ge c(\eps)\cdot g(n, \eps)$.

\begin{definition}\label{def:explicit}
  A sequence of bipartite graphs $\set{G_n=(U_n, V_n, E_n)}_{n\in \N}$
  with $\deg(u) =d$ for all $u \in U_n$ is \emph{explicit} if there is
  an algorithm that given $(n, u, j)$, where $n \in \N, u \in U_n$ and
  $j\in [d]$, computes the $j$th neighbor of vertex $u$ in the graph
  $G_n$ in time $\poly(\log n + \log |U| + \log d)$.
\end{definition}

From now on it is understood that whenever we talk about an explicit
graph we actually mean to say that there is a sequence of explicit
graphs with above properties.

For a graph $G = (V, E)$ and $W \subseteq V$ we denote by
\[
  N(W) = \setdescr{u \in V \setminus W}{(w,u) \in E \text{ for some } w
    \in W}
\]
the set of neighbors of $W$.

\begin{definition}
  A bipartite graph $G = (U, V, E)$ is an $(r, d, c)$-expander if every
  vertex $u \in U$ has degree $\deg(u) = d$ and every set
  $W \subseteq U$ of size $|W| \le r$ satisfies $|N(W)| \ge c \cdot |W|$.
\end{definition}

A key ingredient in our proofs is the following result on the
existence of strong explicit expanders.

\begin{theorem}[\cite{GUV09Unbalanced}]\label{thm:expander}
  For all constants $\gamma > 0$,
  every $M \in \N$, $r \le M$,
  and $\eps > 0$,
  there is an
  $N \le d^2 \cdot r^{1 + \gamma}$ and an
  explicit
  $(r, d, (1- \eps)d)$-expander
  $G=(U, V, E)$,
  with $|U| = M$,
  $|V| = N$, and
  $d = O\big( ((\log M)(\log r) / \eps )^{1+1/\gamma}\big)$.
\end{theorem}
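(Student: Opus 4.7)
The statement is the Guruswami-Umans-Vadhan unbalanced expander construction derived from Parvaresh-Vardy list-decodable codes; my plan is to reproduce their argument. Fix a prime power $q$, an irreducible polynomial $E(Y) \in \mathbb{F}_q[Y]$ of degree $n = \lceil \log M / \log q \rceil$, and integer parameters $h, m$ to be tuned later. Identify $U$ with a size-$M$ subset of $\mathbb{F}_q[Y]/(E) \cong \mathbb{F}_{q^n}$, viewing each $f \in U$ as a polynomial of degree $<n$. For $\alpha \in \mathbb{F}_q$, declare the $\alpha$-th neighbor of $f$ to be
\[
  \Gamma(f,\alpha) \defeq \bigl(\alpha,\, f(\alpha),\, (f^h \bmod E)(\alpha),\, \ldots,\, (f^{h^{m-1}} \bmod E)(\alpha)\bigr) \in \mathbb{F}_q^{m+1}\eqperiod
\]
This yields degree $d = q$ and right-side size $|V| \le q^{m+1}$, with explicitness immediate since each coordinate is computable in $\mathrm{poly}(n, m, \log q, \log h)$ time by iterated squaring modulo $E$.

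\textbf{Expansion.} The core algebraic fact is the Parvaresh-Vardy identity: with $g_i \defeq f^{h^i} \bmod E$ we have $g_0 = f$ and $g_i \equiv g_0^{h^i} \pmod{E}$ by construction. To prove the expansion condition, suppose for contradiction that some $W \subseteq U$ with $|W| \le r$ satisfies $|N(W)| < (1-\eps) q |W|$. A dimension count produces a nonzero polynomial $Q(Y, X_0, \ldots, X_{m-1}) \in \mathbb{F}_q[Y, X_0, \ldots, X_{m-1}]$ with $\deg_Y Q < D_Y$ and $\deg_{X_i} Q < D$ vanishing on every point of $N(W)$, provided $D_Y D^m$ exceeds the number of points to vanish on. For each $f \in W$ the univariate polynomial $Q(Y, g_0(Y), \ldots, g_{m-1}(Y)) \in \mathbb{F}_q[Y]$ vanishes at every $\alpha \in \mathbb{F}_q$, so once $q > D_Y + D m (n-1)$ it is identically zero, and the PV identity then converts this into $P(f) \equiv 0 \pmod{E}$, where $P(X) \defeq Q(Y, X, X^h, \ldots, X^{h^{m-1}}) \in \mathbb{F}_{q^n}[X]$ has $\deg_X P \le D h^{m-1}$. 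Assuming $h \ge D$ (so that the substitution $X_i \to X^{h^i}$ does not collapse $P$ to zero by unique base-$h$ expansion), every $f \in W$ is a root of $P$, bounding $|W|$ by $D h^{m-1}$.

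\textbf{Parameter balancing and main obstacle.} The technical heart is choosing $q, h, m, D, D_Y$ so that the resulting bound on $|W|$ contradicts the assumption on $W$, while simultaneously matching the claimed asymptotics. One must ensure (i) the interpolation condition $D_Y D^m > |N(W)|$, (ii) the nonvanishing condition $q > D_Y + Dmn$, (iii) the separation $h \ge D$, and (iv) the root bound $D h^{m-1}$ is small enough to yield the contradiction. Setting $m = \lceil 1 + 1/\gamma \rceil$, $h$ of order $r^{1/m}$, $D$ of order $((1-\eps) q r/D_Y)^{1/m}$, and $q$ the smallest prime power exceeding the threshold forced by (i)--(iii) balances the constraints and produces $d = q = O\bigl(((\log M)(\log r)/\eps)^{1+1/\gamma}\bigr)$ together with $|V| \le q^{m+1} \le d^2 r^{1+\gamma}$, as required. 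The delicacy of this parameter optimization---in particular ensuring the argument remains quantitatively tight across the full range of set sizes $|W| \le r$---is the main obstacle.
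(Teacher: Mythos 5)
This theorem is not proved in the paper at all --- it is imported verbatim from Guruswami--Umans--Vadhan \cite{GUV09Unbalanced} --- so the benchmark is the original GUV argument, and your proposal correctly identifies its skeleton (the Parvaresh--Vardy map $\Gamma(f,\alpha)=(\alpha, f(\alpha), (f^h \bmod E)(\alpha),\dots)$, interpolation, the PV identity, and a root-count). However, as written the expansion argument has a genuine gap that is not just ``delicate parameter optimization.'' With a box-shaped monomial set ($\deg_Y Q < D_Y$, $\deg_{X_i} Q < D$ for every $i$), the substituted polynomial $P(X)=Q(Y,X,X^h,\dots,X^{h^{m-1}})$ has $\deg_X P \ge h^{m-1}$ as soon as $Q$ actually uses $X_{m-1}$, and in any case your conditions (iii) $h\ge D$ and (iv) $Dh^{m-1}<|W|$ are incompatible whenever $|W|\le h^{m-1}$ (since $D\ge 1$); more generally one checks that the box shape only yields a contradiction when $|W|$ is close to $h^m$. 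Since the expander property must hold for \emph{every} $|W|\le r$ with one fixed graph, the framework itself must change. GUV's fix is precisely to tailor the monomial set to $|W|=s$: they take $Q(Y,X_0,\dots,X_{m-1})=\sum_{0\le e<s} c_e(Y)\prod_k X_k^{e_k}$ where $(e_0,\dots,e_{m-1})$ are the base-$h$ digits of $e$ and $\deg c_e < A\approx (1-\eps)q$; this gives $A\cdot s > |N(W)|$ monomials for interpolation while forcing $\deg_X P < s$ by construction, so a nonzero $P$ with $s$ roots is an immediate contradiction for every $s\le h^m$.

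Two further points. First, your parameter choice is backwards: fixing $m=\lceil 1+1/\gamma\rceil$ and $h\approx r^{1/m}$ forces, via the expansion requirement (the analogue of $(n-1)(h-1)m\le \eps q$, or your conditions (i)--(ii) combined), a field size $q\gtrsim h\cdot n m/\eps = r^{\Omega(1)}$, i.e.\ degree polynomial in $r$ --- exponentially worse than the claimed $d=O\bigl(((\log M)(\log r)/\eps)^{1+1/\gamma}\bigr)$ in exactly the regime this paper needs ($M=2^n$, $r$ as large as $2^{\Omega(n^{0.99})}$). The correct setting is the opposite: $h=\Theta\bigl(((\log M)(\log r)/\eps)^{1/\gamma}\bigr)$ polylogarithmic, $m=\lceil \log r/\log h\rceil$ growing, and $q$ a prime power of order $h^{1+\gamma}$, which yields $d=q$ as claimed and $|V|=q^{m+1}\le d^2 (h^{m-1})^{1+\gamma}\le d^2 r^{1+\gamma}$. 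Second, the nonvanishing of $P$ over $\mathbb{F}_{q^n}$ needs more than $h\ge D$: since $\deg_Y Q$ must be close to $q \gg n$, a nonzero coefficient polynomial $Q_j(Y)$ can be divisible by $E(Y)$ and die under reduction; one first divides $Q$ by the largest power of $E$ dividing it, which is harmless because $E$, being irreducible of degree $n\ge 2$, has no roots in $\mathbb{F}_q$, so vanishing on $N(W)$ is preserved.
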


\kr{Igor pointed this out:
  \url{https://eccc.weizmann.ac.il/report/2018/159/}. Quickly checking
  the parameters I believe it does not give anything better. Need to
  check more carefully}

For our purposes it is more relevant to compute the neighbor relation
$\neigh(u,v)$ indicating whether $(u, v) \in E$ rather than the
neighbor function as in \cref{def:explicit}, but this is an immediate
consequence of being able to compute the neighbor function.

\begin{claim}
  \label{claim:explicit neigh}
  If $G = (U, V, E)$ is explicit then the neighbor relation
  $\neigh: U \times V \rightarrow \{0,1\}$ is computable by a circuit
  of size
  $d \cdot \big( \poly(\log n + \log |U| + \log d) + 2 \log |V| + 1
  \big)$.
\end{claim}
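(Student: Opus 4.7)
The plan is to build the circuit for $\neigh$ in three layers, corresponding to: (i) computing all $d$ neighbors of $u$; (ii) comparing each one to $v$; and (iii) ORing the comparison outputs together.

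For layer (i), the explicitness hypothesis says that the map $(u,j) \mapsto N_j(u)$ (the $j$-th neighbor of $u$, encoded in $\log |V|$ bits) can be computed by a Turing machine in time $\poly(\log n + \log |U| + \log d)$. By the standard simulation of time-bounded Turing machines by Boolean circuits, this yields, for each fixed $j \in [d]$, a circuit on input $u$ of size $\poly(\log n + \log |U| + \log d)$ that outputs the $\log |V|$-bit string $N_j(u)$. Instantiating one such circuit for each $j \in [d]$ (with $j$ hard-coded), and running them in parallel on the shared input $u$, costs at most $d \cdot \poly(\log n + \log |U| + \log d)$ gates.

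For layer (ii), I would compute, for each $j$, the equality test $[N_j(u) = v]$ using the standard circuit that XORs corresponding bit pairs, ORs the $\log |V|$ resulting bits, and negates. This uses $\log |V|$ XORs, $\log |V| - 1$ ORs, and $1$ NOT, for a total of $2 \log |V|$ gates per $j$, hence $d \cdot 2 \log |V|$ gates across all $j$. Layer (iii) is then a single OR of the $d$ equality bits, using $d - 1$ gates. Summing the three layers and factoring out $d$ gives the stated bound $d \cdot \bigl(\poly(\log n + \log |U| + \log d) + 2 \log |V| + 1\bigr)$.

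There is essentially no obstacle: the only nontrivial ingredient is the simulation of a $\poly$-time Turing machine by a $\poly$-size circuit, which I invoke as a standard fact; the rest is arithmetic on circuit sizes. If a reviewer wanted the constant in the $2 \log |V|$ term verified precisely, I would just spell out the equality-tester construction above and note that the total is bounded by $2 \log |V| + 1$ per $j$ after folding the single NOT and the final OR tree into the per-$j$ accounting.
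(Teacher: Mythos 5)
Your construction is correct and is precisely the argument the paper leaves implicit (the claim is stated there without proof, as an ``immediate consequence'' of explicitness): hard-code each $j \in [d]$, simulate the $\poly(\log n + \log|U| + \log d)$-time neighbor algorithm by a circuit of that size, test equality of its output with $v$, and OR the $d$ results, which yields exactly the stated bound. The only caveat is that your per-$j$ count of $2\log|V|+1$ charges each XOR as a single gate, whereas the paper's circuits are over the fan-in-$2$ basis $\{\wedge,\vee,\neg\}$, where the bitwise comparison costs a constant factor more; this only shifts constants and is immaterial to how the claim is used downstream, where the bound is invoked only as $\poly(n,k)$.
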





A slice function is a Boolean function $f$ such that there is a
$\ell \in [n]$ with $f(\alpha) = 0$ whenever $|\alpha| < \ell$, and
$f(\alpha) = 1$ whenever $|\alpha| > \ell$.  Note that all slice
functions are monotone.

The circuit complexity $\csize(f)$ of a Boolean function $f$ is the
size of the smallest circuit over the basis $\vee, \wedge$, and
$\lnot$ (with fan-in $2$). Similarly the monotone circuit
complexity $\cmonsize(f)$ of a monotone Boolean function $f$
is the size of the smallest circuit over the basis $\vee$, and
$\wedge$. We have the following useful inequality between these
measures.

\begin{lemma}[\cite{B82}]\label{lem:slice}
  If $g$ is any slice function on $n$ bits, then
  $\cmonsize(g) \le 2 \csize(g) + O(n^2 \log n)$.
\end{lemma}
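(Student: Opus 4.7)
The plan is to take an optimal De Morgan circuit for $g$ and convert it to a monotone one of comparable size, using the fact that on the $\ell$-slice a negation can be replaced by a threshold function. Concretely, if $|x| = \ell$ then $\bar{x}_i = 1$ iff exactly $\ell$ of the remaining $n-1$ coordinates are $1$, i.e., iff the monotone threshold function $T^{n-1}_{\ell}(x_{-i})$ equals $1$. Thus on the $\ell$-slice every negated literal coincides with a monotone function of the remaining inputs, which is the lever that lets us eliminate negations without blowing up the circuit.

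I would carry this out in three steps. First, convert a size-$\csize(g)$ De Morgan circuit for $g$ into negation normal form by pushing negations to the leaves using De Morgan's laws; in the worst case this doubles the size (we may need to keep both a gate and its negation wherever fan-out demands it), yielding a circuit of size at most $2\csize(g)$ whose internal gates are $\wedge$ and $\vee$ and whose leaves are positive or negated variables. Second, for each $i \in [n]$ construct once a monotone circuit $Y_i$ of size $O(n \log n)$ computing $T^{n-1}_{\ell}(x_{-i})$, using an explicit sorting network such as AKS; substituting $Y_i$ for every occurrence of $\bar{x}_i$ gives a monotone circuit $\tilde{g}$ of size at most $2\csize(g) + O(n^2 \log n)$ that agrees with $g$ on all inputs of Hamming weight exactly $\ell$. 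Third, correct the behavior off the slice via a monotone wrapper.

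The main obstacle is that $\tilde{g}$ may compute arbitrary values on inputs of weight $\neq \ell$, whereas $g$ must be $0$ below the slice and $1$ above it, so $\tilde{g}$ alone is not yet a monotone circuit for $g$. To repair this I would compute the two additional thresholds $T^n_{\ell}(x)$ and $T^n_{\ell+1}(x)$ (again with an $O(n \log n)$ sorting network) and output
\[
  C^{\mon}(x) \;=\; T^n_{\ell+1}(x) \;\vee\; \bigl(T^n_{\ell}(x) \wedge \tilde{g}(x)\bigr).
\]
This is manifestly monotone, and a case split on $|x|$ shows $C^{\mon} = g$ everywhere: if $|x| > \ell$ the first disjunct is $1$, if $|x| < \ell$ both terms are $0$, and if $|x| = \ell$ the first is $0$ while the second reduces to $\tilde{g}(x) = g(x)$. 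The total size is $2\csize(g) + O(n^2 \log n)$, matching the claimed bound.
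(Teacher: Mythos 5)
Your proof is correct. Note that the paper does not prove \cref{lem:slice} at all --- it is imported from Berkowitz \cite{B82} as a black box --- and your argument is essentially the standard one behind that citation: a dual-rail/negation-normal-form conversion at a factor-$2$ cost, followed by substituting the pseudo-complement $T^{n-1}_{\ell}(x_{-i})$ for each negated leaf, with monotone threshold circuits of size $O(n\log n)$ (e.g.\ from sorting networks) accounting for the $O(n^2\log n)$ additive term. One remark: your third step, the wrapper $T^n_{\ell+1}(x) \vee \bigl(T^n_{\ell}(x)\wedge \tilde g(x)\bigr)$, is harmless but not actually needed. Off the slice the pseudo-complements are comparable to the true complements: if $|x|<\ell$ then $T^{n-1}_{\ell}(x_{-i}) \le \bar x_i$, and if $|x|>\ell$ then $T^{n-1}_{\ell}(x_{-i}) \ge \bar x_i$. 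Since the NNF circuit is monotone as a function of its leaf values, this gives $\tilde g(x) \le g(x)=0$ below the slice and $\tilde g(x) \ge g(x)=1$ above it, so $\tilde g$ already computes $g$ everywhere and costs only $2\csize(g)+O(n^2\log n)$. Your wrapper is the same trick the paper uses in \cref{clm:to-slice}, where it genuinely is needed because there the circuit being put onto the slice is arbitrary rather than obtained by this substitution.
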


Finally we also rely on the following simple claim.

\begin{claim}
  \label{claim:0 mod bool}
  Let $p: \R^n \rightarrow \R$ be a degree $d$ polynomial such that
  $p(x) = 0$ for all $x \in \{0,1\}^n$.  Then $p$ can be written as
  \[
    p(x) = \sum_{i \in [n]} q_i(x) \cdot (x_i^2-x_i)
  \]
  where each term in the sum has degree at most $d$.
\end{claim}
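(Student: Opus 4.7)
The plan is to \emph{multilinearize} $p$ modulo the ideal generated by the Boolean axioms $\{x_i^2 - x_i\}_{i \in [n]}$ and then observe that the remaining multilinear polynomial must vanish identically. Concretely, I would reduce each monomial of $p$ to a multilinear monomial by repeatedly applying the identity $x_i^k - x_i = (x_i^2 - x_i)(x_i^{k-2} + x_i^{k-3} + \cdots + 1)$ for $k \ge 2$. For a single monomial $m = x_i^{e_i} \cdot m'$ with $m'$ not involving $x_i$, this gives
\[
  m \;=\; x_i \cdot m' \;+\; (x_i^2 - x_i)\cdot\bigl(x_i^{e_i - 2} + \cdots + 1\bigr)\cdot m'\eqcomma
\]
which separates $m$ into a piece of strictly smaller $x_i$-degree plus a piece in the ideal.

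The key observation is that the ``ideal piece'' has degree exactly $\deg(m)$: its degree is $2 + (e_i - 2) + \deg(m') = e_i + \deg(m') = \deg(m) \le d$. The first summand has degree at most $\deg(m) - (e_i - 1) \le d$ as well, so the total degree of the expression never exceeds $d$. Iterating this rewriting over all variables and all monomials of $p$ produces a decomposition
\[
  p(x) \;=\; \tilde p(x) \;+\; \sum_{i \in [n]} q_i(x)\cdot (x_i^2 - x_i)\eqcomma
\]
where $\tilde p$ is multilinear of degree at most $d$ and each term $q_i(x)(x_i^2 - x_i)$ has degree at most $d$.

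To finish, I would argue that $\tilde p \equiv 0$ as a polynomial. For any $x \in \{0,1\}^n$ the terms $(x_i^2 - x_i)$ vanish, so $\tilde p(x) = p(x) = 0$. But a multilinear polynomial in $n$ variables is uniquely determined by its values on $\{0,1\}^n$ (the monomials $\{\prod_{i \in S} x_i\}_{S \subseteq [n]}$ are linearly independent as functions on the Boolean cube, since the $2^n$-by-$2^n$ evaluation matrix is triangular with ones on the diagonal under any inclusion-compatible ordering of subsets). Hence $\tilde p$ is the zero polynomial, and the decomposition above is exactly the desired one.

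There is no real obstacle here; the only thing to be careful about is verifying that the degree never grows during the rewriting, which is immediate from the identity above because the factor $(x_i^2 - x_i)$ of degree $2$ is compensated by lowering the $x_i$-degree by $2$. The ``multilinear polynomials are determined by their values on the cube'' fact is standard and could alternatively be phrased as the statement that the Fourier/parity basis on the Boolean cube is a basis.
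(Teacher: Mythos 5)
Your argument is correct and follows essentially the same route as the paper's proof sketch: multilinearize $p$ using the axioms $x_i^2 - x_i$ while tracking degrees, then conclude that the multilinear remainder is identically zero since multilinear polynomials form a basis for functions on $\{0,1\}^n$. The extra degree bookkeeping you supply is exactly the detail the paper leaves implicit, so nothing further is needed.
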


\begin{proof}[Proof sketch]
  We take the polynomial $p$ and multilinearize it, using the
  appropriate polynomial $x_i^2 - x_i$. Eventually we are left with a
  sum of polynomials of the form $q_i(x)\cdot(x_i^2 - x_i)$ and a
  multilinear polynomial $\tilde p(x)$ which is $0$ on all Boolean
  inputs.  As multilinear polynomials are a basis for Boolean
  functions this implies that $\tilde p(x)$ is equal to the $0$
  polynomial and hence the claim follows.
\end{proof}

\subsection{Sum of Squares}
\label{sec:known-SoS-results}

Let $\calP = \set{p_1 = 0, \ldots, p_m = 0}$ be a system of polynomial
equations over the set of variables
$X = \set{x_1, \ldots, x_n, \bar{x}_1, \ldots, \bar{x}_n}$.  Each
$p_i$ is called an \emph{axiom}, and throughout the paper we always
assume that $\calP$ includes all axioms $x_i^2 - x_i$ and
$\bar{x}_i^2 - \bar{x}_i$, ensuring that the variables are Boolean, as
well as the axioms $1 - x_i - \bar{x}_i$, making sure that the ``bar''
variables are in fact the negation of the ``non-bar'' variables.

\begin{definition}[Sum-of-Squares]
  \emph{Sum-of-Squares (SoS)} is a static, semi-algebraic proof
  system. An SoS proof of $f \ge 0$ from $\calP$ is a sequence of
  polynomials $\pi = (t_1, \ldots, t_m; s_1, \ldots, s_a)$ such that
  \begin{align*}
    \sum_{i \in [m]} t_i p_i +
    \sum_{i \in [a]} s^2_i = f \eqperiod 
  \end{align*}
  The \emph{degree} of a proof $\pi$ is
  $\Deg(\pi) = \max \set{ \max_{i \in [m]} \deg(t_i) + \deg(p_i),
    \max_{i \in [a]} 2 \deg(s_i) }$, an \emph{SoS refutation of
    $\calP$} is an SoS proof of $-1 \ge 0$ from $\calP$, and the
  \emph{SoS degree to refute $\calP$} is the minimum degree of any SoS
  refutation of $\calP$: if we let $\pi$ range over all SoS
  refutations of $\calP$, we can write
  $\RefuteDeg{SoS}{\calP} = \min_\pi \Deg(\pi)$.  The \emph{size of an
    SoS refutation $\pi$}, $\Size(\pi)$, is the sum of the number of
  monomials in each polynomial in $\pi$ and the \emph{size of refuting
    $\calP$} is the minimum size over all refutations
  $\RefuteSize{SoS}{\calP} = \min_{\pi} \Size(\pi)$.
\end{definition}

Let us recall some well-known results about SoS. Given a bipartite
graph $G = (U, V, E)$, and $b \in \set{0,1}^{|U|}$ we denote by
$\Phi(G,b)$ the following XOR-CSP instance defined over $G$: for each
$v \in V$ there is a Boolean variable $x_v$, and for every vertex
$u \in U$ there is a constraint $\oplus_{v \in N(u)} x_v = b_u$.
We encode this in the obvious way as a system of polynomial equations: 
\begin{align*}
  \big\{
  \prod_{v \in N(u)} (1 - 2 \cdot x_v) = 1 - 2 \cdot b_u
  \mid u \in U
  \big\} \eqcomma
\end{align*}
along with the Boolean axioms and the negation axioms for the $x$
variables. The first theorem we need to recall is the classic lower
bounds for XOR-CSPs by Grigoriev.

\begin{theorem}[\cite{gri01xor}]\label{thm:sos-xor}
  For $n \in \N$, all $k = k(n)$ and $r = r(n)$ the following holds.
  Let $G = (U, V, E)$ be an $(r, k, 2)$-expander with $|V| = n$. Then
  for every $b \in \set{0,1}^{|U|}$ SoS requires degree $\Omega(r)$ to
  refute the claim that there is a satisfying assignment to
  $\Phi(G, b)$.
\end{theorem}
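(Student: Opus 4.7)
The plan is to prove this via SoS duality: I would construct a pseudoexpectation operator $\widetilde{\mathbb{E}}$ on polynomials of degree at most $D = \Omega(r)$ that satisfies $\widetilde{\mathbb{E}}[1] = 1$, vanishes on all axioms of $\Phi(G,b)$ multiplied by appropriately low-degree polynomials, and is positive on squares. The existence of such an operator immediately rules out any SoS refutation of degree at most $D$. As a preliminary step, I would pass to the $\pm 1$ basis via the substitution $y_v = 1 - 2x_v$, so that $y_v^2 = 1$ and the constraint for each $u \in U$ becomes $\prod_{v \in N(u)} y_v = (-1)^{b_u}$. It then suffices to define $\widetilde{\mathbb{E}}$ on the multilinear monomials $\chi_S = \prod_{v \in S} y_v$ and extend linearly.

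The central definition uses expansion. For $T \subseteq U$, write $N_\oplus(T) = \triangle_{u \in T} N(u)$ for the symmetric difference of neighborhoods. The expansion hypothesis should ensure that $N_\oplus$ is injective on subsets of size at most $r/2$: if two distinct $T, T'$ had the same image then $T \triangle T'$ would be a nonempty set of size at most $r$ with $N_\oplus(T \triangle T') = \emptyset$, which contradicts $(r,k,2)$-expansion since in that case every vertex of $N(T \triangle T')$ would have to be hit at least twice. Declare $S \subseteq V$ \emph{derivable} if $S = N_\oplus(T)$ for some (necessarily unique) $T$ with $|T| \le r/2$, and set
\[
  \widetilde{\mathbb{E}}[\chi_S] \;=\; \begin{cases} \prod_{u \in T} (-1)^{b_u} & \text{if $S$ is derivable with witness $T$,} \\ 0 & \text{otherwise.} \end{cases}
\]

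With $D = \lfloor r/8 \rfloor$ (say), I would then verify the three pseudoexpectation conditions. Normalization is immediate from $T = \emptyset$. For the axioms, observe that multiplication by $\chi_{N(u)}$ acts on $\chi_S$ by $S \mapsto S \triangle N(u)$, which on witnesses shifts $T$ to $T \triangle \{u\}$ and multiplies the assigned sign by $(-1)^{b_u}$, so the axiom $\chi_{N(u)} - (-1)^{b_u}$ is annihilated against every $\chi_S$ of degree at most $D - k$ (uniqueness of the witness ensures that neither side of the axiom becomes spuriously nonzero). For positivity, the moment matrix $M[S,S'] = \widetilde{\mathbb{E}}[\chi_{S \triangle S'}]$ restricted to $|S|, |S'| \le D/2$ is block-diagonal with respect to the equivalence relation $S \sim S'$ iff $S \triangle S'$ is derivable (entries across blocks vanish by definition). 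Within a block, fixing a representative $S_0$ and letting $T_S$ be the unique small witness of $S \triangle S_0$, the composition rule $T_{S \triangle S'} = T_S \triangle T_{S'}$ — valid because all relevant witnesses have size at most $r$, so uniqueness applies — shows that $M[S,S'] = \epsilon_S \epsilon_{S'}$ for the sign vector $\epsilon_S = \prod_{u \in T_S}(-1)^{b_u}$. Each block is therefore a rank-one PSD matrix, and $M \succeq 0$ follows.

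The main obstacle is condition (iii): the positivity check. Concretely, one must choose $D$ small enough (a constant fraction of $r$) so that every witness $T$ and every pairwise composition $T_S \triangle T_{S'}$ encountered in the moment matrix stays within the uniqueness regime provided by expansion, while simultaneously large enough to yield the claimed $\Omega(r)$ bound. Tracking these constants carefully — and arguing that expansion forces $N_\oplus$ to behave like an injective group homomorphism on the relevant scale — is the heart of the argument; the axiom check and normalization are then straightforward bookkeeping.
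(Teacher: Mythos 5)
The paper does not actually prove this statement -- it is imported wholesale from Grigoriev -- so the only question is whether your construction stands on its own. Your overall plan (a pseudoexpectation supported on sets in the $\mathbb{F}_2$-span of small collections of constraint neighborhoods, with rank-one blocks giving PSDness) is exactly the standard Grigoriev/Schoenebeck argument, and it is the right approach. The gap is in the one place where you invoke the hypothesis: you claim that $N_\oplus(W)=\emptyset$ for a nonempty $W$ with $|W|\le r$ contradicts $(r,k,2)$-expansion ``since every vertex of $N(W)$ would have to be hit at least twice.'' Run the edge count: if every vertex of $N(W)$ has at least two neighbors in $W$, then $k|W| \ge 2|N(W)| \ge 4|W|$, which only says $k\ge 4$ -- no contradiction once $k\ge 4$. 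Factor-$2$ \emph{vertex} expansion does not give unique (odd-degree) neighbors; for that you need expansion $c>k/2$, since the boundary bound is $|\partial W| \ge 2|N(W)|-k|W| \ge (2c-k)|W|$. Without it, injectivity of $N_\oplus$, well-definedness of $\widetilde{\mathbb{E}}$, the axiom annihilation, and the rank-one block structure are all unsupported. Indeed an $(r,k,2)$-expander with $k\ge 4$ may contain two constraints with identical neighborhoods; choosing mismatched right-hand sides for them yields a degree-$O(k)$ refutation, so no argument can recover $\Omega(r)$ from the bare factor-$2$ hypothesis -- the proof genuinely needs the stronger expansion (which the GUV graphs used elsewhere in the paper do supply, having expansion $(1-\eps)d \gg d/2$).

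A second, related issue is the one you defer to ``tracking constants'': with derivability defined as ``some witness of size $\le r/2$,'' closure fails at the edge. Multiplying by a constraint turns a witness of size exactly $r/2$ into one of size $r/2+1$, and composing two in-block witnesses can give size up to $r$, outside the regime where you defined $\widetilde{\mathbb{E}}$ to be nonzero (this also threatens transitivity of your block relation). The standard fix is a size-collapse lemma, again powered by boundary expansion: if $|N_\oplus(T)| \le D$ and $|T| \le r$, then $|T| \le D/(2c-k)$, so every witness of a set of size at most $D$ is automatically far below $r/2$ and all compositions stay in the uniqueness regime. So the skeleton of your proof is correct, but both load-bearing steps require expansion strictly greater than $k/2$ (equivalently, positive boundary expansion), not the factor-$2$ expansion you argue from.
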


We also need to recall the size-degree tradeoff by Atserias and
Hakoniemi.

\begin{theorem}[\cite{AH18SosTradeoff}]\label{thm:tradeoff}
  Let $\calP$ be a system of polynomial equations over $n$ Boolean
  variables and degree at most $k$. If $d$ is the minimum degree
  SoS requires to refute $\calP$, then the minimum size of an SoS
  refutation of $\calP$ is at least $\exp(\Omega((d-k)^2/n))$.
\end{theorem}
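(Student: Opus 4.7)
The plan is to prove the contrapositive via a random restriction argument analogous to the Ben-Sasson--Wigderson width-size tradeoff for Resolution, adapted to the algebraic setting of SoS. Concretely, I would show that any SoS refutation of size $S$ can, after fixing few variables, be transformed into a low-degree SoS refutation of a restricted subsystem, and then lift this back to a degree bound on $\calP$ via pseudo-expectation duality.

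Suppose $\pi = (t_1,\ldots,t_m;\, s_1,\ldots,s_a)$ is an SoS refutation of $\calP$ of size $S$. Then the total number of distinct monomials appearing across the $t_i$ and $s_j$ is at most $S$. Fix a degree threshold $d^{\star}$ (to be optimized later) and call a monomial \emph{heavy} if its degree exceeds $d^{\star}$; there are at most $S$ heavy monomials. Now apply a random restriction $\sigma$ that independently fixes each variable to $0$ with probability $\rho = \Theta(\log S / d^{\star})$. A fixed heavy monomial survives only if none of its (at least $d^{\star}$) variables is set, which happens with probability at most $(1-\rho)^{d^{\star}} \le e^{-\rho d^{\star}} \le 1/(2S)$. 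By linearity of expectation, there exists a restriction $\sigma$ fixing only $O(\rho n) = O(n \log S / d^{\star})$ variables that kills every heavy monomial. The resulting restricted tuple $\pi|_\sigma$ is an SoS refutation of $\calP|_\sigma$ of degree $O(d^{\star} + k)$.

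The crucial — and hardest — step is to convert the degree bound for $\calP|_\sigma$ back into a degree bound for $\calP$. Here I would invoke pseudo-expectation duality: if $\calP$ requires SoS degree $d$, then there is a degree-$(d-1)$ pseudo-expectation $\tilde{E}$ that satisfies the axioms of $\calP$. Conditioning $\tilde E$ on the partial assignment $\sigma$ should produce a valid degree-$(d-1-|\sigma|)$ pseudo-expectation for $\calP|_\sigma$, forcing $\calP|_\sigma$ to require SoS degree strictly more than $d - 1 - |\sigma|$. Combining with the restriction argument gives $d - |\sigma| \le O(d^{\star} + k)$, i.e., $d \le O(k + d^{\star} + n \log S / d^{\star})$. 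Balancing by choosing $d^{\star} = \Theta(\sqrt{n \log S})$ yields $d - k \le O(\sqrt{n \log S})$, equivalently the claimed size bound $S \ge \exp\bigl(\Omega((d-k)^2/n)\bigr)$.

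The main obstacle is the conditioning step. A naive definition $\tilde{E}_\sigma[f] = \tilde{E}[f \cdot \prod_{i} \mathbf{1}[x_i = c_i]] / \tilde{E}[\prod_{i} \mathbf{1}[x_i = c_i]]$ can fail because the denominator might vanish, and one still has to verify non-negativity on squares under the restricted degree budget. Handling this cleanly requires either coupling the choice of $\sigma$ to the pseudo-expectation (so that conditioning is well-defined), or passing through a more robust dual object such as the positive-semidefinite moment matrix of $\tilde{E}$ and arguing that restricting fixes at most $|\sigma|$ rows and columns, degrading the feasibility degree by a controlled amount. Getting the loss in the degree to be exactly linear in $|\sigma|$ — as needed for the balancing to yield $(d-k)^2/n$ — is where the bulk of the technical work lies.
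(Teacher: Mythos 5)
The theorem you are asked about is not proved in the paper at all; it is quoted from Atserias and Hakoniemi \cite{AH18SosTradeoff}, so the relevant comparison is with their proof. Your one-shot random-restriction plan has a genuine gap at exactly the step you yourself flag as the hardest one: the claim that a degree lower bound for $\calP$ transfers to $\restrict{\calP}{\sigma}$ with a loss only linear in the number of fixed variables. That claim is false for arbitrary restrictions. Take $\calP$ to consist of a hard unsatisfiable system over variables $y_1,\dots,y_{n-1}$ (requiring SoS degree $d$) together with the single benign axiom $1 - x_1 = 0$; then $\calP$ still requires degree $d$, yet the restriction $x_1 \mapsto 0$ (exactly the kind your random restriction applies) produces the axiom $1=0$ and a degree-$0$ refutation, so the degree can collapse from $d$ to $0$ after fixing one variable. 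Pseudo-expectation conditioning only excludes such restrictions because the denominator $\pedef\bigl[\prod_i \ind{x_i = c_i}\bigr]$ vanishes on them --- but that is precisely the unresolved tension: your $\sigma$ is chosen (setting variables to $0$ at random) so as to kill every heavy monomial of the given size-$S$ refutation, with no guarantee that it lies in the support of the dual pseudo-expectation, while restrictions that \emph{are} supported by the pseudo-expectation need not kill heavy monomials at all (fixing a variable to the value the pseudo-expectation forces merely shortens a monomial). Reconciling these two requirements within the budget $|\sigma| = O(n\log S/d^{\star})$ is the entire content of the theorem, and the sketch offers no mechanism for it.

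This is also why the actual proof in \cite{AH18SosTradeoff} does not proceed by a one-shot random restriction (note that even for resolution, the Ben-Sasson--Wigderson size-width tradeoff is not proved that way). Their argument is an induction in the style of Ben-Sasson--Wigderson and Impagliazzo--Pudl\'ak--Sgall, restricting one carefully chosen variable at a time (the one occurring in the most ``fat'' monomials) and recursing on both the number of variables and the number of fat monomials; the delicate step --- converting a low-degree refutation of a restricted system into a low-degree derivation of the corresponding literal from the unrestricted $\calP$, the SoS analogue of turning a refutation of $F|_{x=1}$ into a derivation of $\neg x$ --- is handled by a zero-gap duality theorem for preordered vector spaces with an order unit, not by conditioning a pseudo-expectation on a pre-chosen assignment. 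So your outline is a genuinely different route, but as it stands the lifting-back step fails, and the counterexample above shows it cannot be repaired simply by ``coupling $\sigma$ to the pseudo-expectation'' without abandoning the guarantee that all heavy monomials are killed by few fixed variables.
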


\subsection{Restrictions}
\label{sec:restrictions}

Let $\calP = \set{p_1 = 0, \ldots, p_m = 0}$ be a system of polynomial
equations over the set of Boolean variables
$X = \set{x_1, \ldots, x_n, \bar{x}_1, \ldots, \bar{x}_n}$.
For a map
$\rho: \set{x_1, \ldots, x_n} \to \set{0,1, x_1, \ldots, x_n,
  \bar{x}_1, \ldots, \bar{x}_n}$ denote by $\restrict{\calP}{\rho}$
the system of polynomial equations $\calP$ restricted by $\rho$, i.e.,
\begin{align*}
\restrict{\calP}{\rho} = \set{ &p_1(\rho(x_1), \ldots, \rho(x_n)) = 0,\\
  &p_2(\rho(x_1), \ldots, \rho(x_n)) = 0,\\ &\vdots\\ &p_m(\rho(x_1),
  \ldots, \rho(x_n)) = 0 }\eqcomma
\end{align*}
where it is understood that $\rho(\bar x_i) = \overline{\rho(x_i)}$,
with the convention $\bar{\bar{x}}_i = x_i$, $\bar 0 = 1$ and vice
versa. Throughout the paper all our restrictions set the bar variables
to the negation of the non-bar variables. As such it makes sense to
treat the pair of variables $(x_i, \bar x_i)$ as one variable and we
say that $\calP$ has $n$ \emph{unset} variables.

\begin{definition}[Variable Substitution]
  We say that a system of polynomial equations $\calP'$ is a
  \emph{variable substitution of $\calP$} if there is a map
  $\rho: \set{x_1, \ldots, x_n} \to \set{0,1, x_1, \ldots, x_n,
    \bar{x}_1, \ldots, \bar{x}_n}$ such that
  $\calP' = \restrict{\calP}{\rho}$, where we ignore polynomial
  equations of the form $0=0$.
\end{definition}

The following well-known lemma states that a system of polynomial
equations $\calP$ is at least as hard as any of its variable substitutions.
\begin{lemma}\label{lem:subformula}
  Let $\calP, \calP'$ be systems of polynomial equations such that
  $\calP' \subformula \calP$. Then,
  \begin{enumerate}[label=$(\roman*)$]
  \item $\RefuteDeg{SoS}{\calP} \ge \RefuteDeg{SoS}{\calP'}$, and
  \item $\RefuteSize{SoS}{\calP} \ge \RefuteSize{SoS}{\calP'}$.
  \end{enumerate}
\end{lemma}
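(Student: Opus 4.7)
The plan is to prove both parts by showing that any SoS refutation of $\calP$ can be converted, via the very same substitution $\rho$, into an SoS refutation of $\calP'$ of no larger degree and no larger size. Concretely, suppose $\calP' = \restrict{\calP}{\rho}$ and let $\pi = (t_1, \ldots, t_m; s_1, \ldots, s_a)$ be an SoS refutation of $\calP$, so that
\[
  \sum_{i \in [m]} t_i \, p_i + \sum_{j \in [a]} s_j^2 = -1 \eqperiod
\]
I would apply $\rho$ to both sides and set $\pi' = (\restrict{t_1}{\rho}, \ldots, \restrict{t_m}{\rho}; \restrict{s_1}{\rho}, \ldots, \restrict{s_a}{\rho})$. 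Since substitution commutes with polynomial arithmetic over $\R$, this yields
\[
  \sum_{i \in [m]} \restrict{t_i}{\rho} \cdot \restrict{p_i}{\rho} + \sum_{j \in [a]} \bigl(\restrict{s_j}{\rho}\bigr)^2 = -1 \eqperiod
\]
The terms with $\restrict{p_i}{\rho} = 0$ contribute nothing and can be discarded, while the remaining $\restrict{p_i}{\rho}$ are precisely the axioms of $\calP'$. Thus $\pi'$ is a valid SoS refutation of $\calP'$.

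To conclude part $(i)$, I would observe that substitution can only reduce the degree of a polynomial: every monomial $c \cdot x_{i_1} \cdots x_{i_k}$ is mapped to $c \cdot \rho(x_{i_1}) \cdots \rho(x_{i_k})$, which has degree at most $k$ (strictly smaller if some $\rho(x_{i_\ell}) \in \{0,1\}$). Hence $\deg(\restrict{t_i}{\rho}) + \deg(\restrict{p_i}{\rho}) \le \deg(t_i) + \deg(p_i)$ and $2 \deg(\restrict{s_j}{\rho}) \le 2 \deg(s_j)$, so $\Deg(\pi') \le \Deg(\pi)$. For part $(ii)$, the same monomial-by-monomial accounting applies: each monomial of a polynomial in $\pi$ maps to a single monomial in the corresponding polynomial of $\pi'$, and distinct monomials may only collapse or cancel under the substitution. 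Consequently every polynomial in $\pi'$ has at most as many monomials as its counterpart in $\pi$, giving $\Size(\pi') \le \Size(\pi)$.

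There is essentially no hard step here; the only mild point of care is bookkeeping around the ``bar'' variables and the default axioms. One must check that $\rho$ preserves the Boolean axioms $x_i^2 - x_i$, $\bar x_i^2 - \bar x_i$ and the negation axioms $1 - x_i - \bar x_i$: if $\rho(x_i) \in \{x_j, \bar x_j\}$ these become the analogous axioms on the image variable (with $\rho(\bar x_i) = \overline{\rho(x_i)}$), and if $\rho(x_i) \in \{0,1\}$ they collapse to $0 = 0$ and are discarded. Either way, $\pi'$ is a well-formed SoS refutation of $\calP'$ with $\Deg(\pi') \le \Deg(\pi)$ and $\Size(\pi') \le \Size(\pi)$, which proves both inequalities upon minimizing over $\pi$.
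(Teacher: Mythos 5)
Your proof is correct and follows exactly the argument the paper has in mind: the paper only sketches it, saying one should hit a refutation of $\calP$ with the substitution $\rho$ and observe that neither degree nor size can increase, which is precisely what you carry out in detail (including the bookkeeping for the Boolean, negation, and trivialized axioms).
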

The lemma is easy to verify by considering an SoS refutation of
$\calP$ and hitting it with the appropriate variable substitution. The
restricted proof is now a refutation of $\calP'$ and it can be seen
that the degree/size of the restricted refutation is at most the
degree/size of the original refutation.

We also consider more general substitutions.

\begin{definition}[Polynomial Substitution]
  Functions
  $\rho: \set{x_1, \ldots, x_n} \rightarrow \R[x]_{\le k}$ that map
  variables to polynomials of degree at most $k$ are called
  \emph{polynomial substitutions}.
\end{definition}

For polynomial substitutions we have the following well-known lemma.

\begin{lemma}
  \label{lem:sos degree polynomial substitution}
  Let $\calP$ be a system of polynomial equations and let $\rho$ be a
  polynomial substitution mapping variables to polynomials of degree
  at most $k$. Then,
  $\RefuteDeg{SoS}{\calP} \ge
  \RefuteDeg{SoS}{\restrict{\calP}{\rho}}/k$.
\end{lemma}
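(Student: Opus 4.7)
The plan is to lift any SoS refutation of $\calP$ to a refutation of $\restrict{\calP}{\rho}$ by applying $\rho$ polynomial-by-polynomial, and then to show that this increases the degree by at most a multiplicative factor of $k$. The key observation is that polynomial substitution is a ring homomorphism, so any polynomial identity over the variables of $\calP$ is preserved when we substitute the $\rho(x_i)$ in place of $x_i$.

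Concretely, let $\pi = (t_1, \ldots, t_m; s_1, \ldots, s_a)$ be an SoS refutation of $\calP$ of minimum degree $d = \RefuteDeg{SoS}{\calP}$, so that
\[
  \sum_{i \in [m]} t_i \cdot p_i + \sum_{j \in [a]} s_j^2 = -1 \eqperiod
\]
Applying $\rho$ to both sides yields
\[
  \sum_{i \in [m]} \rho(t_i) \cdot \rho(p_i) + \sum_{j \in [a]} \rho(s_j)^2 = -1 \eqcomma
\]
where $\rho(q)$ denotes the polynomial obtained by replacing every occurrence of $x_\ell$ in $q$ by $\rho(x_\ell)$. By the definition of $\restrict{\calP}{\rho}$, the polynomials $\rho(p_i)$ are exactly the axioms of $\restrict{\calP}{\rho}$, so the above identity is a valid SoS refutation $\restrict{\pi}{\rho}$ of $\restrict{\calP}{\rho}$.

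For the degree bound, substituting $\rho$ turns a polynomial of degree $D$ into one of degree at most $kD$, because each variable $x_\ell$ is replaced by a polynomial of degree at most $k$. Consequently, for each axiom term we have
\[
  \deg(\rho(t_i)) + \deg(\rho(p_i)) \le k\bigl(\deg(t_i) + \deg(p_i)\bigr) \le k \cdot d \eqcomma
\]
and for each square term $2\deg(\rho(s_j)) \le 2k\deg(s_j) \le k \cdot d$. Hence $\Deg(\restrict{\pi}{\rho}) \le k \cdot d$, giving $\RefuteDeg{SoS}{\restrict{\calP}{\rho}} \le k \cdot \RefuteDeg{SoS}{\calP}$, which is the desired inequality.

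There is no real obstacle here; the argument is purely algebraic. The only point worth checking is the degree bookkeeping, and in particular that a substituted axiom term of shape $\rho(t_i)\rho(p_i)$ has degree bounded by $k\cdot d$ uniformly over $i$ (rather than, say, $kd + k\deg(p_i)$), which works out exactly because the original refutation satisfies $\deg(t_i) + \deg(p_i) \le d$ for every axiom.
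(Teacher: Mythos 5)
Your proof is correct and follows exactly the argument the paper sketches: apply the substitution (a ring homomorphism) to a minimum-degree refutation of $\calP$, note that axioms map to axioms of $\restrict{\calP}{\rho}$ and squares to squares, and track that degrees grow by at most a factor $k$. The degree bookkeeping you spell out is the only point of substance, and you handle it correctly.
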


This lemma can again be verified by considering a refutation of
$\calP$. Substitute each variable $x_i$ in the proof by
$\rho(x_i)$. This results in a refutation of $\restrict{\calP}{\rho}$,
whose degree is at most a factor $k$ larger than the degree of the
refutation of $\calP$.

\subsection{The Circuit Size Formula}
\label{sec:circuit-encoding}
The formula $\circuit_s(f)$ encodes the claim that the function $f$,
given as a truthtable $f \in \set{0, 1}^{2^n}$, can be computed by a
circuit of size $s$ over $n$ Boolean inputs $x_1, \ldots, x_n$. The
encoding is not essential but for concreteness let us fix one encoding
of this claim. We deviate from the encoding used by Razborov
\cite{Razborov98, Razborov04ResolutionLowerBoundsPM} and do not
present the formula as a propositional formula but rather as a system
of polynomial equations. In order to encode below constraints as a
constant width CNF formula, as done by Razborov, one needs to
introduce extension variables. Despite this difference it is not
difficult to see that our lower bound also works against the CNF
encoding. In \cref{sec:encoding} we directly show that a low degree
SoS refutation of the CNF encoding gives rise to a low degree SoS
refutation of the encoding used in this paper (see
\cref{prop:cnf-poly}). Thus a lower bound for our encoding implies a
lower bound for the CNF encoding. As the presentation is simpler in
the polynomial encoding, we present it as follows.

We also need to define the monotone version of $\circuit_s(f)$ denoted
by $\circuit^{\mathsf{mon}}_s(f)$. The later is a restriction of the
former with the $\isneg(v)$ (see below) variable, for all $v \in [s]$,
set to $0$. This forces the circuit to only contain $\land$ and $\lor$
gates, i.e., the circuit is monotone.

All variables introduced in the following are Boolean variables and we
implicitly add the Boolean axiom $y(1 - y) = 0$ for each variable $y$
and further implicitly introduce the ``bar variable'' $\bar{y}$ along
with the negation axiom $y = 1 - \bar{y}$ (and the corresponding
Boolean axiom) ensuring that $\bar{y}$ is always the negation of $y$.

Let us first describe the \emph{structure variables} which are used to
describe the circuit that supposedly computes the function $f$.

We view the $s$ gates as being indexed from $1$ to $s$ in topological
order with gate $s$ being the output.
For each gate $v \in [s]$ there are three variables $\isneg(v),
\isor(v), \isand(v)$ indicating the operation computed at
$v$. Similarly for a gate $v \in [s]$ and a wire $a \in \set{1,2}$ we
have variables $\isfromconst(v, a), \isfromvar(v, a), \isfromgate(v, a)$
indicating whether the input wire $a$ of $v$ is connected to a
constant, a variable or a gate.

Further, we have the variables $\constval(v, a)$, $\isvar(v, a, i)$
and $\isgate(v, a, u)$, for $a \in \set{1,2}$, $i \in [n]$ and
$u < v$, specifying the constant value, input $x_i$ or gate $u$, the
input wire $a$ of $v$ is connected to (assuming $a$ is connected to
the corresponding kind).

The second set of variables are the \emph{evaluation variables}, which
describe what value is computed at each $v$ on input
$\alpha = \alpha_1, \ldots, \alpha_n$ (i.e., we have $x_i = \alpha_i$).

For each gate $v \in [s]$ and assignment $\alpha \in \set{0,1}^n$ we
have a Boolean variable $\outwire_\alpha(v)$ indicating the value
computed at gate $v$ on input $\alpha$. The Boolean variable
$\inwire_{\alpha}(v, a)$ indicates the value brought to the vertex
$v \in [s]$ on wire $a \in \set{1,2}$ on input $\alpha$.

Note that there is a total of
$3s + 6s + 2s + 2sn + 2{s \choose 2} = \Theta(s^2+sn)$ structure
variables, and a total of $3 s 2^n$ evaluation variables, for a total
of $\Theta(s^2 + s 2^n)$ variables in $\circuit_s(f)$.

The formula consists of the following axioms. For the sake of
readability we omit some universal quantifiers: the variable
$a \in \set{1,2}$ in
\cref{eq:ax-from,eq:ax-in-gt,eq:ax-in-gt2,eq:prod-zero-in,eq:prod-zero-gate,eq:inwire-const,eq:inwire-var,eq:inwire-gate}
as well as the variable $\alpha \in \set{0,1}^n$ in
\cref{eq:inwire-const,eq:inwire-var,eq:inwire-gate,eq:ax-neg,eq:ax-or,eq:ax-and,eq:ax-correct}
are implicitly universally quantified.

Let us first describe
the axioms on the structure of the circuit. In the following
section we refer to this set of axioms as the \emph{structure
  axioms}. The first axioms ensure that every wire is connected to a
single kind
\begin{align}
  \label[axiom]{eq:ax-from}
  \isfromconst(v, a) +
  \isfromvar(v, a) +
  \isfromgate(v, a) = 1 \quad
  \forall \, v \in [s] \eqcomma
\end{align}
and similarly the next axioms make sure that each gate is of
precisely one kind
\begin{align}
  \label[axiom]{eq:ax-fn}
  \isneg(v) + \isor(v) + \isand(v) = 1
  \quad \forall \, v \in [s] \eqperiod
\end{align}
The final structure axioms ensure that the variables, which indicate
to what input or gate a fixed wire is connected to, always sum to one
(except for gate $1$ which cannot have any inputs from other gates)
\begin{align}
  \label[axiom]{eq:ax-in-gt}
  \sum_{i = 1}^{n}\isvar(v,a,i) &= 1
  \quad \forall v \in [s], \text{ and}\\
  \label[axiom]{eq:ax-in-gt2}
  \sum_{u = 1}^{v-1}\isgate(v,a,u) &= 1
  \quad \forall v \in [s] \setminus\{1\} \eqperiod
\end{align}
We further strengthen our encoding by adding the axioms
\begin{align}
  \label[axiom]{eq:prod-zero-in}
  \isvar(v,a,i)\isvar(v,a,j) &= 0
  \quad\forall v \in [s],\, i < j \in [n], \text{ and}\\
  \label[axiom]{eq:prod-zero-gate}
  \isgate(v,a,u)\isgate(v,a,u') &= 0
  \quad\forall u < u' < v \in [s] \eqperiod
\end{align}
Note that \cref{eq:prod-zero-in,eq:prod-zero-gate} are implied by
\cref{eq:ax-in-gt,eq:ax-in-gt2}. We add these axioms in order to argue
that a short refutation of the CNF encoding of this principle leads to
a short refutation of the present encoding.

The second group of axioms are the \emph{evaluation axioms} and they
ensure that the evaluation variables indeed compute the intended
values. We start by making sure that the wires carry the value
intended by the structure axioms. If a wire is connected to a
constant, then the evaluation variable associated with that wire
should always be equal to the constant
\begin{align}
  \label[axiom]{eq:inwire-const}
  \isfromconst(v,a) \cdot
  \big( \inwire_{\alpha}(v,a) - \constval(v, a) \big) = 0 \eqcomma
\end{align}
and similarly in case if a wire is connected to an input or a gate
\begin{align}
  \label[axiom]{eq:inwire-var}
  \isfromvar(v, a) \cdot \isvar(v, a, i) \cdot
  \big( \inwire_\alpha(v, a) - \alpha_i\big) &= 0
  \eqcomma\\
  \label[axiom]{eq:inwire-gate}
  \isfromgate(v, a) \cdot \isgate(v, a, u) \cdot
  \big(\inwire_\alpha(v, a) - \outwire_\alpha(u)\big) &= 0 \eqperiod
\end{align}
The final set of evaluation axioms makes sure that the output
evaluation variable of a gate is correctly related to the input
evaluation variables:
\begin{align}
  \label[axiom]{eq:ax-neg}
  \isneg(v) \cdot
  \outwire_{\alpha}(v)
  &=
    \isneg(v) \cdot
    \overline{\inwire_{\alpha}(v, 1)}\eqcomma\\
  \label[axiom]{eq:ax-or}
  \isor(v) \cdot
  \outwire_{\alpha}(v)
  &=
    \isor(v)
    \cdot \big(
    1 -
    \overline{\inwire_{\alpha}(v, 1)} \cdot
    \overline{\inwire_{\alpha}(v, 2)}
    \big) \eqcomma\\
  \label[axiom]{eq:ax-and}
  \isand(v) \cdot
  \outwire_{\alpha}(v)
  &=
    \isand(v) \cdot
    \inwire_{\alpha}(v, 1) \cdot
    \inwire_{\alpha}(v, 2) \eqperiod
\end{align}

Last but not least we have the axioms that ensure that the circuit
outputs the function specified by the truthtable
\begin{align}
  \label[axiom]{eq:ax-correct}
  \outwire_{\alpha}(s) &= f(\alpha) \eqperiod
\end{align}

\section{On Circuits and Restrictions}
\label{sec:circ-rest}

Let $G=(U, V, E)$ be a bipartite graph with $U = \set{0,1}^n$ and
$V = [m]$. As in the XOR-CSP setup (\cref{sec:known-SoS-results}) we
think of vertices in $U$ as constraints and vertices in $V$ as
variables. More specifically, we think of each vertex $\alpha \in U$
as an \emph{xor} constraint over the variables in the neighborhood
$\oplus_{i \in N(\alpha)} v_i = b_\alpha$, for a constraint vector
$b \in \set{0,1}^U$. Given an assignment $\beta \in \set{0,1}^m$ to
the variables $V$, we let $f_{G, \beta}: U \rightarrow \set{0,1}$ be
the function defined by
$f_{G, \beta}(\alpha) = \oplus_{i \in N(\alpha)} v_i$.  In other
words, viewing $f_{G,\beta}$ as a vector in $\set{0,1}^U$, it is the
unique constraint vector such that the XOR-CSP instance, defined over
$G$, is satisfied by the assignment $\beta$. Let us denote the set of
all such constraint vectors that give rise to a satisfiable XOR-CSP
instance by
\[
  \calF_G = \set{f_{G, \beta} \,|\, \beta \in \set{0,1}^m} \eqperiod
\]
In order for SoS to refute an XOR-CSP instance defined over $G$,
it must prove that the given constraint vector is not in the set
$\calF_G$.

On the other hand in order for SoS to refute the formula $\circuit_s(f)$
it needs to show that there is no circuit of size at most $s$
computing $f$. That is, SoS needs to show that $f$ is not in the set
\[
  \calC_{\emptyset} =
  \set{
    T:
    \set{0,1}^n \rightarrow \set{0,1}
    \text{~such that~}
    \circuit_s(T)
    \text{~is satisfiable}
  } \eqperiod
\]
More generally, if we restrict $\circuit_s(f)$ by a restriction
$\rho$, then the proof system must prove that $f$ is not a member
of the family of truthtables
\[
  \calC_\rho =
  \set{
    T:
    \set{0,1}^n \rightarrow \set{0,1}
    \text{~such that~}
    \restrict{\circuit_s(T)}{\rho}
    \text{~is satisfiable}
  } \eqperiod
\]

In the following we show that there is a well-behaved restriction
$\rho$ such that $\calC_\rho = \calF_G$ for some explicit graphs
$G$. In other words, once we consider the restricted formula
$\restrict{\circuit_s(f)}{\rho}$, SoS needs to rule
out that $f$ is a valid right hand side of an XOR-CSP instance. But we
know that if $G$ is a moderate expander, then low degree SoS cannot
determine whether the XOR-CSP instance is satisfiable and hence we
obtain our lower bound.

Let us first formalize the properties we require from $\rho$. We start
off by restricting our attention to a certain natural class of variable
substitutions. Namely, we do not want that the structure of the circuit
depends on evaluation variables.

\begin{definition}[natural variable substitutions]
  A variable substitution $\rho$ to the variables of $\circuit_s(f)$ is
  \emph{natural} if there is \emph{no} structure variable $y$ such
  that $\rho(y)$ is an evaluation variable.
\end{definition}

In order to motivate the next definition, let us informally
describe the natural restriction $\rho$ and explain the properties of
$\rho$ we require.

For now we can think of $\rho$ as a restriction to the structure
variables (though for the size lower bounds we also need to restrict
some of the evaluation variables). Some set of $m$ structure variables remains
undetermined. Let us denote these variables by $y_1, \ldots, y_m$. We
intend to choose $\rho$ such that on a given input
$\alpha \in \set{0,1}^n$ to the circuit, it is forced to compute
$\oplus_{i \in N(\alpha)} y_i$.  In other words, given such a
restriction $\rho$, we are \emph{essentially} left with an XOR-CSP
problem over $G$, with right hand side $f$.  There is however a
difference in that the encoding is non-standard: the evaluation
variables act like extension variables that correspond to the
functions computed at each gate of the circuit. In order to argue that
the known degree lower bound for the XOR-CSP problem implies a degree
lower bound for the problem at hand, we need to get rid of these
extension variables. This can be done if the functions computed at the
gates are of low degree in the $y$ variables.

Recall from \cref{sec:restrictions} that a system of polynomial
equations $\calP$ has $n$ unset variables if there are $n$ tuples of
variables $(x, \bar x)$ such that at least one variable of each tuple
occurs in $\calP$ and all variables in these tuples are unset, i.e.,
they are not fixed to a constant.

\begin{definition}[$k$-determined]
  \label{def:determined}%
  Let $\rho$ be a variable substitution to the variables of
  $\circuit_s(f)$ and suppose that $\rho$ leaves $m$ structural
  variables $Y = \set{y_1, \ldots, y_m}$ unset.  Then $\rho$ is
  \emph{$k$-determined} if for every $v \in [s]$ and
  $\alpha \in \set{0,1}^n$ there are multilinear polynomials
  $$\gout_{v, \alpha}, \ginone_{v, \alpha}, \gintwo_{v, \alpha}:
  \set{0,1}^m \rightarrow \set{0,1}$$ depending on at most $k$
  variables such that the following holds. For all $T \in \calC_\rho$
  and all total assignments $\sigma$ that satisfy
  $\restrict{\circuit_s(T)}{\rho}$ it holds that
  \begin{align}
    \label{eq:k-determined-substitution}
    \restrict{\outwire_\alpha(v)}{\rho \cup \sigma} &=
    \gout_{v, \alpha}(\beta) \eqcomma &
    \restrict{\inwire_\alpha(v, 1)}{\rho \cup \sigma} &= \ginone_{v,
    \alpha}(\beta) \eqcomma \text{~and} &
    \restrict{\inwire_\alpha(v, 2)}{\rho \cup \sigma} &= \gintwo_{v,
    \alpha}(\beta) \eqcomma
  \end{align}
  where $\beta \subseteq \sigma$ is the assignment to $Y$.
\end{definition}

However, \cref{def:determined} is not quite sufficient. For example,
there is no guarantee that $\calC_\rho$ is non-empty, i.e., that the
restriction $\rho$ describes a valid (partial) circuit. More generally, we
need the additional guarantee that there are still many viable
circuits that the restricted formula can describe: if there is just a
single setting of the $Y$ variables such that all structural axioms
are satisfied, then the formula may be refuted in constant
degree. Hence we need to ensure that there are many viable assignments
to the $Y$ variables that satisfy all structure axioms. This leads us
to the following definition.

\begin{definition}[$m$-independent]
  A variable substitution $\rho$ to the variables of the formula
  $\circuit_s(f)$ is \emph{$m$-independent} if $\rho$ leaves exactly
  $m$ structural variables $Y = \set{y_1, \ldots, y_m}$ unset, and for
  every assignment $\beta \in \set{0,1}^Y$ it holds that
  $|\calC_{\rho \cup \beta}| = 1$.
\end{definition}

With these definitions at hand we can state the lemma that drives all
our lower bounds.

\begin{lemma}
  \label{lem:circuit-to-poly}
  Let $\rho$ be a natural $m$-independent $k$-determined variable substitution
  of $\circuit_s(f)$, and let $Y$ and
  $\gout_{u, \alpha}$ be as in \cref{def:determined}. If there is an
  SoS refutation of $\restrict{\circuit_s(f)}{\rho}$ of degree $d$,
  then there is a degree $d \cdot k$ SoS refutation of the system
  of polynomial equations
  \begin{align}
    \label{eqn:lb transfer lemma}
    \set{
      \gout_{s,\alpha}(Y) = f(\alpha) \mid \alpha \in \set{0,1}^n
    }
    \cup
    \set{
      y_i^2 = y_i \mid i \in [m]
    } \eqperiod
  \end{align}
\end{lemma}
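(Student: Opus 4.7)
The plan is to apply a polynomial substitution $\tau$ to the given SoS refutation of $\restrict{\circuit_s(f)}{\rho}$ and reinterpret the result as a refutation of the system~\eqref{eqn:lb transfer lemma}. I would let $\tau$ be the identity on the surviving structure variables $y_1, \ldots, y_m$, and set $\tau(\outwire_\alpha(v)) = \gout_{v,\alpha}(Y)$, $\tau(\inwire_\alpha(v,1)) = \ginone_{v,\alpha}(Y)$, and $\tau(\inwire_\alpha(v,2)) = \gintwo_{v,\alpha}(Y)$. Each of these polynomials lies in $\R[Y]$ and has degree at most $k$, so by \cref{lem:sos degree polynomial substitution}, applying $\tau$ to a degree-$d$ refutation of $\restrict{\circuit_s(f)}{\rho}$ produces a degree-$dk$ SoS proof of $-1 \ge 0$ from the substituted axiom system $\restrict{(\restrict{\circuit_s(f)}{\rho})}{\tau}$.

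The substantive step would then be to show that every axiom of $\restrict{(\restrict{\circuit_s(f)}{\rho})}{\tau}$ is derivable in degree $O(k)$ from the target system~\eqref{eqn:lb transfer lemma}; substituting these derivations into the refutation above then produces a degree-$dk$ refutation of~\eqref{eqn:lb transfer lemma}. I would split the axioms into four groups. First, the correctness axiom~\cref{eq:ax-correct} becomes $\gout_{s,\alpha}(Y) - f(\alpha)$ after $\tau$, which is exactly a target axiom. Second, the Boolean axioms $y_i^2 - y_i$ are fixed by $\tau$ and already lie in~\eqref{eqn:lb transfer lemma}. Third, the structure axioms only mention structure variables, so naturality of $\rho$ makes them polynomials in $Y$ alone, and $m$-independence forces them to vanish on every $\beta \in \set{0,1}^Y$ (otherwise $\calC_{\rho \cup \beta}$ would be empty). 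Fourth, the evaluation axioms and the Boolean axioms of the evaluation variables become polynomials in $Y$ after $\tau$, and I would argue these also vanish on every $\beta \in \set{0,1}^Y$: by $m$-independence every $\beta$ extends to a satisfying assignment for the unique $T \in \calC_{\rho \cup \beta}$, and by the $k$-determined property the evaluation variables under that assignment take exactly the values $\gout, \ginone, \gintwo$ evaluated at $\beta$, so each substituted axiom coincides at $\beta$ with the original axiom under a satisfying assignment and hence is zero. Having established vanishing on the Boolean cube, \cref{claim:0 mod bool} rewrites each such axiom as $\sum_i q_i(Y)(y_i^2 - y_i)$ with every summand of degree at most that of the axiom itself.

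The main conceptual hurdle is the fourth case---verifying that the substituted evaluation axioms vanish everywhere on $\set{0,1}^Y$---which is exactly why the definitions of $k$-determined and $m$-independent come as a pair: the former supplies polynomials that describe the evaluation variables of any satisfying assignment, while the latter guarantees that every $\beta$ extends to such an assignment. The remaining degree bookkeeping is routine: the original axioms have constant degree, so $\tau$ inflates them to degree $O(k)$; the rewriting via \cref{claim:0 mod bool} preserves this bound; and since $\max_i (\deg(t_i) + \deg(p_i)) \le d$ in the original refutation, multiplying the rewritten axioms through by the $\tau(t_i)$'s keeps every term at degree at most $dk$, yielding the claimed refutation of~\eqref{eqn:lb transfer lemma}.
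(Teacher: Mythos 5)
Your proposal is correct and takes essentially the same route as the paper: it substitutes the evaluation variables by the $k$-determined polynomials, invokes \cref{lem:sos degree polynomial substitution} for the degree-$dk$ bound, uses $m$-independence (together with $k$-determinedness) to show the leftover structure, evaluation, and Boolean axioms vanish identically on $\set{0,1}^Y$, and absorbs them into the Boolean axioms of \eqref{eqn:lb transfer lemma} via \cref{claim:0 mod bool}. The only detail the paper makes explicit that you leave implicit is that the substitution (the paper's $\hat\rho$) must also send every bar variable --- including $\bar y_i$ and the barred evaluation variables appearing in \cref{eq:ax-neg,eq:ax-or} --- to one minus its substituted counterpart, so that the negation axioms become $0=0$ and the substituted system is genuinely a system over $Y$ alone.
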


For proving this lemma, we consider the natural extension of $\rho$
which substitutes all evaluation variables by appropriate
degree-$k$ polynomials as indicated by \cref{def:determined}.

\begin{definition}
  For a $k$-determined restriction $\rho$ (with associated polynomials
  $\gout_{v,\alpha}, \ginone_{v,\alpha}, \gintwo_{v,\alpha}$) of
  $\circuit_s(f)$, we denote by $\hat\rho$ the polynomial substitution
  that extends $\rho$ by first substituting any bar variable $\bar x$
  by $1 - x$ and then substituting all evaluation variables as
  follows:
  \begin{align*}
    \hat\rho(\outwire_\alpha(v)) &=
    \gout_{v, \alpha}(Y) \eqcomma &
    \hat\rho(\inwire_\alpha(v, 1)) &= \ginone_{v,
    \alpha}(Y) \eqcomma \text{~and} &
    \hat\rho(\inwire_\alpha(v, 2)) &= \gintwo_{v,
    \alpha}(Y) \eqperiod
  \end{align*}
\end{definition}

Note that the formula $\restrict{\circuit_s(f)}{\hat\rho}$ is
defined only over $Y$. Let us stress that there are no ``bar''
variables left in the formula. The main observation used to prove
\cref{lem:circuit-to-poly} is the following claim, which establishes
that the formula \eqref{eqn:lb transfer lemma} is in fact essentially
the same as $\restrict{\circuit_s(f)}{\hat\rho}$.

\begin{claim}
  \label{claim:tau-rho}
  Let $\rho$ be a natural $m$-independent $k$-determined variable substitution
  of $\circuit_s(f)$.  Then
  $\restrict{\circuit_s(f)}{\hat\rho}$ can be written as
  $$
  \restrict{\circuit_s(f)}{\hat\rho} = \calP \cup \calQ \eqcomma
  $$
  where $\calP$ is the formula \eqref{eqn:lb transfer lemma} and
  $\calQ$ only consists of axioms that are satisfied for all
  assignments $\beta \in \{0,1\}^Y$.
\end{claim}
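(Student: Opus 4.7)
My plan is to walk through the axiom families of $\circuit_s(f)$ one by one and show that each axiom, after being hit with $\hat\rho$, either lands in $\calP$ or else becomes a polynomial identity in $Y$ that vanishes on every Boolean assignment, hence belongs to $\calQ$. The output axiom \cref{eq:ax-correct}, $\outwire_\alpha(s) = f(\alpha)$, maps directly to $\gout_{s,\alpha}(Y) = f(\alpha)$, which together with the Boolean axioms $y_i^2 = y_i$ on the surviving structural variables $y_1, \ldots, y_m$ constitutes $\calP$. All remaining Boolean axioms (those on structural variables fixed by $\rho$ and those on evaluation variables) together with the negation axioms $y + \bar y = 1$ become trivial identities: the evaluation-variable Booleanity $\gout_{v,\alpha}(Y)^2 = \gout_{v,\alpha}(Y)$ holds pointwise on $\{0,1\}^Y$ because by definition $\gout_{v,\alpha}, \ginone_{v,\alpha}, \gintwo_{v,\alpha}$ are $\{0,1\}$-valued, and $\hat\rho$ is defined so that $\hat\rho(\bar y) = 1 - \hat\rho(y)$ for every variable $y$.

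The substantive cases are the structure axioms \cref{eq:ax-from,eq:ax-fn,eq:ax-in-gt,eq:ax-in-gt2,eq:prod-zero-in,eq:prod-zero-gate} and the evaluation axioms \cref{eq:inwire-const,eq:inwire-var,eq:inwire-gate,eq:ax-neg,eq:ax-or,eq:ax-and}. For the structure axioms I would invoke $m$-independence directly: for every $\beta \in \{0,1\}^Y$ the set $\calC_{\rho \cup \beta}$ is nonempty, so there exist a truthtable $T$ and a total satisfying assignment $\sigma \supseteq \rho \cup \beta$ of $\restrict{\circuit_s(T)}{\rho \cup \beta}$. Since the structure axioms involve only structural variables and do not depend on $T$, they are already satisfied by $\rho \cup \beta$ alone. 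Hence the corresponding polynomial in $Y$ vanishes at $\beta$, and since $\beta$ was arbitrary, this axiom goes into $\calQ$.

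For the evaluation axioms I would combine $m$-independence with $k$-deter\-minedness. Fix $\beta \in \{0,1\}^Y$ and let $(T, \sigma)$ be the witness supplied by $m$-independence; note that $T \in \calC_\rho$ since $\calC_{\rho \cup \beta} \subseteq \calC_\rho$, so \cref{def:determined} applies. It then gives us $\sigma(\outwire_\alpha(v)) = \gout_{v,\alpha}(\beta)$ and analogous equalities for the two $\inwire$ variables. But this is precisely the value that $\hat\rho$ assigns to those evaluation variables after evaluating $Y$ at $\beta$. The structural variables appearing in the axiom were already fixed by $\rho \cup \beta$, which agrees with $\sigma$ on them. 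Therefore the axiom evaluated under $\hat\rho$ at $\beta$ coincides numerically with the same axiom evaluated under $\sigma$, which is $0$ because $\sigma$ satisfies $\restrict{\circuit_s(T)}{\rho}$. Hence this axiom also belongs to $\calQ$, completing the decomposition.

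The main obstacle I anticipate is bookkeeping rather than a deep difficulty: I need to verify carefully that the bar-variable rewriting built into $\hat\rho$ commutes with the polynomial substitution of the evaluation variables, and to confirm that the universal quantification in \cref{def:determined} is compatible with selecting a single witness pair $(T, \sigma)$ per $\beta$ from $m$-independence. Once these routine checks are in place, the case analysis above exhausts all axioms of $\circuit_s(f)$ and yields the desired decomposition $\restrict{\circuit_s(f)}{\hat\rho} = \calP \cup \calQ$.
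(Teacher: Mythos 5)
Your proposal is correct and follows essentially the same route as the paper's proof: output axioms and the Boolean axioms on $Y$ form $\calP$, Booleanity of the $\{0,1\}$-valued polynomials and the convention $\hat\rho(\bar x) = 1-\hat\rho(x)$ dispose of the remaining Boolean and negation axioms, and $m$-independence (combined with $k$-determinedness) shows the structure and evaluation axioms vanish on all of $\{0,1\}^Y$. If anything, you spell out the role of $k$-determinedness for the evaluation axioms more explicitly than the paper, which compresses that step into the assertion that $\hat\rho(\beta)\cup\beta$ satisfies $\circuit_s(T)$.
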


\begin{proof}
  Note that the set of output axioms \eqref{eq:ax-correct} of
  $\circuit_s(f)$ under $\hat\rho$ equals the first part of
  \eqref{eqn:lb transfer lemma}, and that the Boolean axioms on the $Y$ variables in
  $\restrict{\circuit_s(f)}{\hat\rho}$ are exactly the second part
  of \eqref{eqn:lb transfer lemma}.

  The remaining axioms of $\restrict{\circuit_s(f)}{\hat\rho}$, which
  are not present in \eqref{eqn:lb transfer lemma}, are the Boolean
  axioms on the variables outside $Y$, the negation axioms, as well as
  \cref{eq:ax-from,eq:ax-fn,eq:ax-in-gt,eq:ax-in-gt2,eq:prod-zero-in,eq:prod-zero-gate,eq:inwire-const,eq:inwire-var,eq:inwire-gate,eq:ax-neg,eq:ax-or,eq:ax-and}.

  The Boolean axioms may turn into polynomials of degree at most
  $2k$. Because the polynomials we substitute the variables with are
  Boolean valued, we see that these substituted axioms are satisfied
  for all assignments $\beta \in \set{0,1}^Y$ and we can thus put them
  into the set $\calQ$.

  The negation axioms all become ``$0=0$'' under $\hat{\rho}$ since
  $\hat{\rho}(\bar{x}) = 1 - \hat{\rho}(x)$.
  
  Finally we need to argue that the
  \cref{eq:ax-from,eq:ax-fn,eq:ax-in-gt,eq:ax-in-gt2,eq:prod-zero-in,eq:prod-zero-gate,eq:inwire-const,eq:inwire-var,eq:inwire-gate,eq:ax-neg,eq:ax-or,eq:ax-and}
  are also of the form $p(Y)=0$ for a polynomial $p$ which is
  identically $0$ on all of $\{0,1\}^m$.  This in turn follows
  immediately from the assumption that $\rho$ is $m$-independent: for
  every $\beta \in \{0,1\}^Y$, there exists some $T$ such that the
  complete assignment $\hat\rho(\beta) \cup \beta$ satisfies
  $\circuit_s(T)$.  But since none of the remaining
  \cref{eq:ax-from,eq:ax-fn,eq:ax-in-gt,eq:ax-in-gt2,eq:prod-zero-in,eq:prod-zero-gate,eq:inwire-const,eq:inwire-var,eq:inwire-gate,eq:ax-neg,eq:ax-or,eq:ax-and}
  depends on $T$, they must then all be satisfied for every
  $\beta \in \{0,1\}^Y$.
\end{proof}

Using this claim we can easily prove \cref{lem:circuit-to-poly}.

\begin{proof}[Proof of \cref{lem:circuit-to-poly}]
  Suppose $\restrict{\circuit_s(f)}{\rho}$ has a refutation in degree
  $d$.  By \cref{lem:sos degree polynomial substitution}, there then
  exists a degree $d \cdot k$ refutation of
  $\restrict{\circuit_s(f)}{\hat\rho}$.

  By \cref{claim:tau-rho}, this new refutation is \emph{almost} a
  refutation of \eqref{eqn:lb transfer lemma}, except that
  $\restrict{\circuit_s(f)}{\hat\rho}$ has an additional set $\calQ$
  of axioms that the refutation may use.  However, each of these
  additional axioms is of the form $p(Y) = 0$ for a polynomial which
  is identically $0$ on the entire Boolean cube.  By \cref{claim:0 mod
    bool}, such an axiom can be rewritten as a linear combination of
  the Boolean axioms.  Since the Boolean axioms are present in
  \eqref{eqn:lb transfer lemma}, this yields a refutation of that
  formula in degree $d \cdot k$.
\end{proof}

\section{Lower Bounds for General Circuits}
\label{sec:demorgan-LB}

We state the following lemma general enough so that we can apply it
for the degree as well as the size lower bound. As explained
previously, for the size lower bounds we rely on functions that almost
have circuits of size $s$. Recall that we consider the class of
functions $\calF_n(s,t)$ that consists of all Boolean functions
$f: \set{0,1}^n \rightarrow \set{0,1}$ for which there is a Boolean
circuit $C_f: \set{0,1}^n \rightarrow \set{0, 1, \bot}$ of size at most
$s$ such that
\begin{enumerate}
\item if $C_f(\alpha) \neq \bot$, then $C_f(\alpha) = f(\alpha)$, and
\item $C_f(\alpha) = \bot$ on at most $t$ inputs.
\end{enumerate}
The following lemma establishes the existence of $m$-independent
$k$-deter\-mined variable substitutions that result in XOR-CSP instances
over explicit graphs.

\begin{lemma}
  \label{lem:restriction}
  For all $k, m, n, t \in \N$ satisfying $m \le 2^n$, and any explicit
  bipartite graph $G = (U, V, E)$ such that $|U| = 2^n$, $|V| = m$ and
  all $u \in U$ are of degree $\deg(u) \le k$, the following holds.
  There is a constant $C > 0$, depending on the explicitness of $G$,
  such that for all $s \ge C \cdot m \cdot n^C\cdot k^C$ and any
  Boolean function $f \in \calF_n(s/2,t)$ there is a natural
  $m$-independent $k$-determined variable substitutions $\rho$ for the
  formula $\circuit_s(f)$ such that
  \begin{align*}
    \gout_{s, \alpha}(Y) =
    \begin{cases}
      f(\alpha), &\text{if~} C_f(\alpha) \neq \bot,\\
      \oplus_{i \in N(\alpha)} y_i, &\text{otherwise}
    \end{cases}
  \end{align*}
  for all $\alpha \in \set{0,1}^n$ and $\gout_{s, \alpha}$ and $Y$
  as in \cref{def:determined}.  Furthermore, the formula $\restrict{\circuit_s(f)}{\rho}$ is over
  $O\big(t \cdot k + m\big)$ variables.
\end{lemma}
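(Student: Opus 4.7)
The plan is to explicitly build a Boolean circuit $C^*$ of size at most $s$, encode its topology into $\rho$, and leave precisely $m$ of the structure variables unset; these will play the role of the $y_i$'s. On input $\alpha \in \set{0,1}^n$, the circuit $C^*$ performs six steps: (i) run the errorless heuristic circuit $C_f$ (of size $\le s/2$) to produce the pair $(b_\bot(\alpha), b_v(\alpha))$; (ii) for each $\ell \in [k]$, compute the index $j_\ell(\alpha) \in [m]$ of the $\ell$th neighbor of $\alpha$ in $G$ via the explicit neighbor function, which by \cref{claim:explicit neigh} needs $\poly(\log n, \log m, \log k)$ gates per output bit; (iii) apply for each $\ell$ a selector subcircuit of size $O(m \log m)$ computing $y_{j_\ell(\alpha)} = \bigvee_{i \in [m]} \bigl( \ind{j_\ell(\alpha) = i} \wedge y_i \bigr)$; (iv) mask each selected bit with $b_\bot(\alpha)$ through an AND gate, obtaining $w_\ell$; (v) XOR the $w_\ell$'s in a balanced tree of $O(k)$ gates to obtain $X(\alpha)$; and (vi) multiplex the output to $X(\alpha)$ when $b_\bot(\alpha) = 1$ and $b_v(\alpha)$ otherwise. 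The $m$ unset structure variables are realized as the constants $\constval(g_i, 1) = y_i$ feeding into $m$ dedicated AND gates $g_1, \ldots, g_m$ (with second input fixed to $1$), whose outputs are fanned out to every selector. The total size is $s/2 + O(km \log m) + \poly(n, m, k)$, which fits in $s$ once $s \ge C m n^C k^C$ for a sufficiently large constant $C$.

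With $C^*$ at hand, $\rho$ assigns every structure variable the value prescribed by $C^*$ except for the $m$ variables $y_i$. For each gate $v$ of $C^*$ and each input $\alpha$, the substitution $\rho$ then replaces $\outwire_\alpha(v)$ (and analogously each $\inwire_\alpha(v, a)$) as follows. The circuit is engineered so that for every fixed $\alpha$ each gate in steps (i)--(iv) evaluates either to a Boolean constant or to a single one of the $y_i$'s (possibly negated); the corresponding evaluation variable is substituted by that constant or literal. The only evaluation variables that genuinely depend on more than one $y_i$ sit in the XOR tree of step (v) and in the multiplexer of step (vi); but thanks to the masking in step (iv), all of these are identically $0$ whenever $b_\bot(\alpha) = 0$, so $\rho$ fixes them to $0$ for such $\alpha$ and leaves them alive only on the at most $t$ inputs with $b_\bot(\alpha) = 1$. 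This produces $O(tk)$ live evaluation variables together with the $m$ variables $Y$, for the claimed total of $O(tk + m)$.

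The three required properties then follow. The substitution $\rho$ is \emph{natural} because every substituted structure variable is mapped to a Boolean constant. It is \emph{$k$-determined}: on each input $\alpha$ every gate of $C^*$ computes a Boolean function of the input bits and of the at most $k$ variables $y_{j_1(\alpha)}, \ldots, y_{j_k(\alpha)}$, so the multilinearization of these functions yields polynomials $\gout_{v,\alpha}, \ginone_{v,\alpha}, \gintwo_{v,\alpha}$ depending on at most $k$ variables. Finally, $\rho$ is \emph{$m$-independent}: once extended by any $\beta \in \set{0,1}^Y$, the circuit $C^*$ is fully specified and computes a unique Boolean function $T_\beta$, which is the unique member of $\calC_{\rho \cup \beta}$. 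The output polynomial reads $\gout_{s,\alpha}(Y) = b_v(\alpha) = f(\alpha)$ when $C_f(\alpha) \neq \bot$, and $\gout_{s,\alpha}(Y) = \bigoplus_{\ell \in [k]} y_{j_\ell(\alpha)} = \bigoplus_{i \in N(\alpha)} y_i$ otherwise, exactly as required.

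The step I expect to be most delicate in the full proof is the gate-by-gate bookkeeping that certifies the many evaluation-variable substitutions are mutually consistent with the axioms \cref{eq:inwire-const,eq:inwire-var,eq:inwire-gate,eq:ax-neg,eq:ax-or,eq:ax-and}, and that checks the live-variable count tightly. In particular, one must track for each $(v, \alpha)$ whether the gate value collapses to a constant, to a single literal $y_i$, or remains a genuine multivariate function of the $y_i$'s, and verify that in the first two cases the associated $\outwire$-axiom is satisfied (becoming trivial), while in the third case --- which arises only for $\alpha$ with $b_\bot(\alpha) = 1$ --- only $O(k)$ evaluation variables above the XOR tree survive, producing precisely the XOR identity $\bigoplus_{i \in N(\alpha)} y_i = f(\alpha)$ desired by the reduction.
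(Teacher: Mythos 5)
Your construction is correct and implements the same reduction as the paper: hard-wire a circuit consisting of the errorless heuristic $C_f$, graph-derived selector circuitry, an XOR aggregator and an output multiplexer, leave $m$ constant-value structure variables free as the $Y$-gates, and then verify naturality, $k$-determinedness and $m$-independence exactly as you do. The one genuine architectural difference is in how the neighbours' $y$-variables reach the XOR. The paper builds, for each $i \in [m]$, a small circuit $\sel_i$ computing the neighbour \emph{relation} ($\poly(n,k)$ gates each), ANDs it with the $y_i$-gate, and feeds all $m$ results into an iterated XOR chain $\chi_1,\ldots,\chi_m$; the $O(tk+m)$ variable count is then recovered \emph{post hoc} by substituting the output variable of $\chi_i$ by that of $\chi_{i-1}$ whenever $\sel'_i(\alpha)=0$. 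You instead compute the $k$ neighbour \emph{indices} and route $y_{j_\ell(\alpha)}$ through $k$ many $m$-to-$1$ multiplexers into a $k$-input XOR tree, so only $O(k)$ XOR evaluation variables per $\bot$-input exist by construction; this makes the ``furthermore'' count immediate, at the price of $O(km\log m)$ extra hard-wired gates (harmless given $s \ge C m n^C k^C$) and a reliance on the neighbour function rather than the relation. Two small points to tighten: (i) since the lemma only assumes $\deg(\alpha)\le k$, you must mask out indices $\ell>\deg(\alpha)$ (a repeated neighbour would cancel in the XOR and change $\gout_{s,\alpha}$), which the relation-based selectors handle automatically; and (ii) your claim that every gate depends only on $y_{j_1(\alpha)},\ldots,y_{j_k(\alpha)}$ is not literally true for the dedicated gates $g_i$ with $i\notin N(\alpha)$ — each such gate depends on the single variable $y_i$, which still gives $k$-determinedness but the statement should be phrased that way.
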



For the degree lower bound (\cref{thm:main-degree}) we will set $t =
2^n$ and use the trivial $C_f$ which always outputs $\bot$, so the
reader who wishes a simplified version of the lemma can focus on this
special case.

\begin{proof}
  We consider the formula $\circuit_s(f)$ and let the first $m$ gates
  of the formula be denoted by $Y$. We restrict the formula such that
  each gate in $Y$ computes an \emph{or} of two constants. The first
  wire to the gate is fixed to the constant $0$, whereas the second
  wire is only restricted to carry either the constant $0$ or $1$.  In
  the end these will be the only structural variables that are not
  restricted to a constant. In the following we think of the gates $Y$
  as Boolean variables; as $m$ additional input bits to our circuit.

  Further, we restrict another part of the formula such that one part
  of the circuit described by the formula computes the circuit
  $C_f$. Recall that we pretend that the output of $C_f$ is in
  $\set{0,1,\bot}$, but it actually outputs two bits $C_f^1$ and
  $C_f^2$, where $C_f^1(\alpha) = 1$ if and only
  if $C_f^2(\alpha) = f(\alpha)$.

  Finally we also want to hard code the bipartite graph
  $G(\set{0,1}^n, Y, E)$ into our circuit. Since $G$ is very large
  this requires $G$ to be explicit. That is, we require small circuits
  $\sel_1, \ldots, \sel_{m}$, where given any $\alpha \in \set{0,1}^n$, $\sel_i(\alpha)$ is $1$ if and only if the vertex $y_i \in Y$ is a
  neighbor of the vertex $\alpha$.
  By \cref{claim:explicit neigh} these circuits $\sel_i$ are each of size
  \[
    k \cdot \left( \poly(n + \log k) + 2 \log m + 1 \right)
    \le \poly(n, k) \eqperiod
  \]
  The restriction $\rho$ restricts some structural variables such that
  a part of the circuit computes $\sel_1, \ldots, \sel_m$. We connect
  each output of the $\sel_i$ circuit by an \emph{and} gate to the
  negation of $C_f^1$. Denote the resulting circuits by
  $\sel'_1, \ldots, \sel'_m$.  Observe that the circuits $\sel_i'$
  output $0$ whenever $C_f^2(\alpha) = f(\alpha)$ and otherwise output
  $\sel_i$. We think of these circuits as ``selector circuits'' which
  indicate whether on input $\alpha \in \set{0,1}^n$ (to the original
  variables $x_1,\ldots, x_n$ over which the circuit is defined) the
  variable $y_i \in Y$ appears in the constraint for $\alpha$.

  The output of these selector circuits $\sel_i'$ is connected to the
  gate $y_i$ by an \emph{and} gate. All these $m$ \emph{and} gates are
  in turn connected to a circuit computing the \emph{xor} of these
  gates. Finally, to ensure that the circuit computes $f(\alpha)$ on
  inputs $\alpha$ such that $C_f(\alpha) \neq \bot$, we connect
  $C_f^1$ with $C_f^2$ by an \emph{and} gate which is then connceted
  by a \emph{or} gate to the output of the \emph{xor} circuit. This
  completes the description of the restriction on the structure
  variables. A depiction of the resulting circuit can be found in
  \cref{fig:circuit-restricted}.
  \begin{figure}
    \centering
    \includegraphics[width=\linewidth,height=\textheight,keepaspectratio]{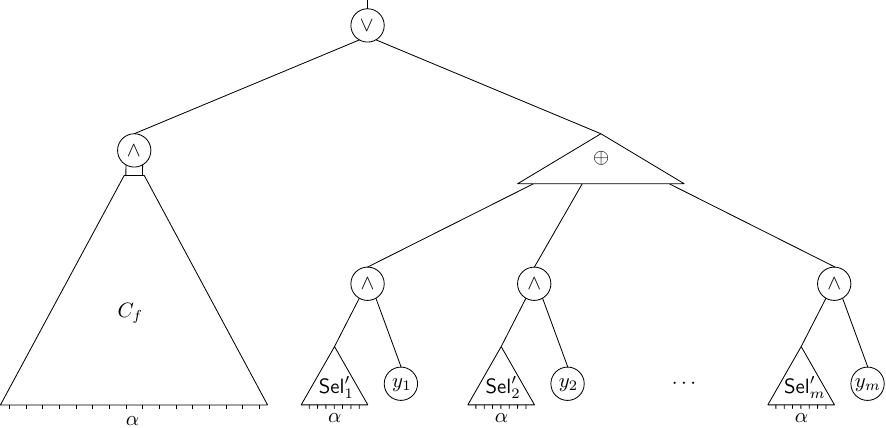}
    \caption{A schematic depiction of the formula after hitting it
      with the described restriction.}
    \label{fig:circuit-restricted}
  \end{figure}

  Note that this implements the intended semantics: for each
  input $\alpha \in \set{0,1}^n$ the selector circuits output $1$ on
  some variables $y_i$ which are then \emph{xor}'ed, and the
  restricted circuit outputs
  \begin{align}
    \bigoplus_{i \in N(\alpha)} y_i \eqcomma
  \end{align}
  unless $C_f(\alpha) \neq \bot$, in which case the output of the
  circuit is $f(\alpha)$ and all selector circuits output $0$. We
  require that $s$ is larger than the size of the described circuit
  which is of size $O\big(m \cdot \poly(n, k)\big) + s/2$.

  We have the intended semantics of the circuit and need to ensure the
  furthermore property: that the restricted formula is over few
  variables. First, since the selector circuits $\sel'_i$ are fixed,
  all evaluation variables for these subcircuits can be fixed to
  constants. The same holds for the circuit $C_f$. Similarly, since
  the $y_i$ gate always carries the value of the $y_i$ variable, all
  $2^n \cdot m$ wire variables corresponding to the $Y$ variables can
  be substituted by the corresponding $y_i$ variable and are thus
  restricted away.

  After these restrictions the only evaluation variables left are
  those for the evaluation of the $\oplus$ circuit.  For $\alpha$ such
  that $C_f(\alpha) \neq \bot$, the selector circuits are hard-wired
  to $0$ and in particular the inputs to the $\oplus$ circuit is
  hard-wired to $0$, meaning that these evalation variables can be
  restricted away.

  There remains then only the $O(t \cdot m)$ evaluation variables
  corresponding to the evaluation of the $\oplus$ circuit for inputs
  $\alpha$ such that $C_f(\alpha) = \bot$. Let us, without loss of
  generality, use an \emph{xor}-circuit which iteratively \emph{xor}s
  each variable.  Concretely, let it have subcircuits $\chi_i$ where
  $\chi_1 = \sel'_1 \wedge y_1$ and
  $\chi_{i} = \chi_{i-1} \oplus (\sel'_i \wedge y_i)$ for $i > 1$, and
  $\chi_m$ is the overall output of the $\oplus$ circuit.

  The only observation required is that if the circuit
  $\sel'_i(\alpha) = 0$, then $\chi_i$ gets a $0$ as input from index
  $i$, independent of the value of $y_i$. Hence the output wire
  variable of the circuit $\chi_i$ indexed by the input $\alpha$ can
  be substituted by the output of the circuit $\chi_{i-1}$. Hence for
  each $\alpha$ such that $C_f(\alpha) = \bot$, we can reduce the
  number of free wire variables indexed by $\alpha$ to $O(k)$, as each
  $\oplus$-constraint is over at most $k$ variables.  As $C_f$ outputs
  $\bot$ on at most $t$ inputs, we end up with a restriction leaving
  only a total of $O(t \cdot k + m)$ remaining variables in the
  restricted formula.

  This completes the description of the restriction $\rho$.  The only
  part that remains is to verify that $\rho$ is natural,
  $k$-determined, and $m$-independent.  That $\rho$ is natural is
  immediate -- it does not substitute any structural variable by an
  evaluation variable.  For $k$-determinedness, note that for a fixed
  input $\alpha$ at most $k$ selector circuits output $1$, and thus
  for every gate $u$ the value of $\outwire_\alpha(u)$ as a function
  of $Y$ can be computed by a function over those $k$ variables.
  Finally, each assignment to the remaining structure variables $Y$
  gives a valid circuit and thus $\rho$ is $m$-independent.
\end{proof}

We are ready to prove the degree lower bound, restated here for
convenience.

\MainDegree*

\begin{proof}
  Let $G = (U, V, E)$ be an explicit bipartite graph as in
  \cref{thm:expander}, with $U = \{0,1\}^n$,
  $k = O_{\gamma}\big((n \log r)^{1 + 1/\gamma}\big)$, and
  $|V| \le k^2 r^{1+\gamma}$ for parameters $\gamma > 0$ and
  $r \le 2^n$ to be fixed later. Apply \cref{lem:restriction} with $t=2^n$ along with
  $C_f = \bot$ to obtain, for $s \ge m \cdot \poly(n, k)$, a natural
  $m$-independent $k$-determined variable substitution $\rho$ for
  $\circuit_s(f)$ such that
  $\gout_{s, \alpha}(Y) = \oplus_{i \in N(\alpha)} y_i$. In words, the
  circuit of the restricted formula on input $\alpha$ computes an
  \emph{xor} of the neighborhood of the vertex $\alpha$ of $G$.

  Apply \cref{lem:circuit-to-poly} to $\rho$ to conclude that if there
  is an SoS refutation of $\restrict{\circuit_s(f)}{\rho}$ of degree
  $d$, then there is a degree $d \cdot k$ SoS refutation of the system
  of polynomial equations computing
  \begin{align}
    \nonumber
    \calP_G
    &= \Big\{
    \bigoplus_{i \in N(\alpha)} y_i = f(\alpha)
    :
    \alpha \in \set{0,1}^n
    \Big\}\,
    \cup
    \set{
    y_i^2 = y_i\mid i \in [m]
    }
    \eqperiod
  \end{align}
  As the graph $G$ is a strong expander, we can apply
  \cref{thm:sos-xor} to get an SoS degree lower bound of $\Omega(r)$
  for the XOR-CSP instance $\calP_G$ defined over $G$, which in turn
  gives us an $\Omega(r/k)$ degree lower bound for the
  $\restrict{\circuit_s(f)}{\rho}$ formula and hence also for the
  unrestricted formula.

  Let us fix the parameters. We want to choose $r$ as large as
  possible. However, the larger we choose $r$, the larger $m$ may
  become, since \cref{thm:expander} only guarantees that
  $m \le k^2 r^{1+\gamma}$. Let us analyze how large $r$ can be chosen
  in terms of $n$ and $s$.

  Note that $k = \poly_\gamma(n)$, where we use that $r \le 2^n$, and we write
  $\poly_\gamma(n)$ to denote some polynomial in $n$ whose degree and
  coefficients may depend on $\gamma$. Hence we may choose
  \begin{align}
    m = \frac{s}{\poly_{\gamma}(n)} \eqcomma
  \end{align}
  according to the requirement on $s$ in \cref{lem:restriction}.  From
  the guarantees of \cref{thm:expander} we know that
  $r \ge (m/k^2)^{1/(1+\gamma)}$. Substituting $m$ according to the
  previous equation we get that
  \begin{align}
    r
    \ge
    \left(
    \frac{s}{
    k^2 \poly_{\gamma}(n)
    }
    \right)^{\frac{1}{1 + \gamma}}
    =
    \frac{s^{1/(1 + \gamma)}}{\poly_{\gamma}(n)} \eqperiod
  \end{align}
  Hence if we choose $\gamma$ small enough so that
  $\frac{1}{1+\gamma} > 1-\eps/2$ and then require $s$ to be large
  enough such that the final $\poly_{\gamma}(n)$ is at most
  $s^{\eps/2}$, we obtain the claimed lower bound.
\end{proof}

In the following we prove the claimed size lower bound.

\MainSize*

\begin{proof}
  Apply \cref{lem:restriction} with the graphs from
  \cref{thm:expander} as in the proof of \cref{thm:main-degree}. We
  get a natural $m$-independent $k$-determined variable substitution
  $\rho$ and the formula $\restrict{\circuit_s(f)}{\rho}$ over
  $O(t \cdot k + m)$ variables.  To this formula we then apply
  \cref{lem:circuit-to-poly} to obtain a degree lower bound of
  $\Omega(r/k)$, akin to the proof of \cref{thm:main-degree}. By
  setting the parameters as in the aforementioned proof we get the
  same degree lower bound of $\Omega_\eps(s^{1-\eps/3})$ for the
  formula $\restrict{\circuit_s(f)}{\rho}$. As this formula is over
  few variables we can apply \cref{thm:tradeoff} to obtain an SoS size
  lower bound of
  $\exp \Big( \Omega_\eps\big((s^{1- \eps/3} - 3k)^2/(t \cdot k +
  m)\big) \Big)$ for the restricted formula. As variable substitutions
  may only decrease the size of a refutation, the same lower bound
  also holds for the unrestricted formula. We obtain the desired lower
  bound by choosing $s$ large enough such that
  $s^{\eps/3} \ge k = \poly_\eps(n)$ and by recalling that
  $t \ge s \ge m$. 
\end{proof}

\section{Lower Bounds for Monotone Circuits}
\label{sec:monotone-degree-LB}

Recall that $\calM_n(\ell)$ denotes all Boolean monotone $\ell$-slice
functions on $n$ bits: all Boolean functions
$f: \{0,1\}^n \rightarrow \{0,1\}$ that output $0$ on all inputs of
Hamming weight less than $\ell$ and $1$ on all inputs of Hamming
weight larger than $\ell$. There is no restriction on the output for
inputs of Hamming weight $\ell$ and we have
$|\calM_n(\ell)| = 2^{\binom{n}{\ell}}$.
Further, recall that $\calM_n(\ell, s, t) \subseteq \calM_n(\ell)$ is
the class of monotone Boolean $\ell$-slice functions
$f: \set{0,1}^n \rightarrow \set{0,1}$ for which there is a
(not necessarily monotone) Boolean circuit
$C_f^\mon: \set{0,1}^n \rightarrow \set{0,1, \bot}$ of size $s$ such
that
\begin{enumerate}
\item for all $\ell$-slice inputs $\alpha \in \binom{[n]}{\ell}$ it
  holds that if $C^\mon_f(\alpha) \neq \bot$, then
  $C^\mon_f(\alpha) = f(\alpha)$, and
\item $C^\mon_f(\alpha) = \bot$ on at most $t$ inputs
  $\alpha \in \binom{[n]}{\ell}$.
\end{enumerate}

It is very convenient to work with slice functions as we have a handle
on their monotone circuit complexity: by \cref{lem:slice} their
monotone circuit size is the same as their ordinary circuit size up to
a polynomial size increase. Hence we do not need to worry whether the
functions needed for the reduction have small monotone circuits, as
long as we are working on a slice only. 

The proof of the monotone lower bound is an adaption of the argument
used to prove
\cref{lem:restriction}. 
The idea is to work over
the $\ell$th slice and disregard all other inputs.  By
\cref{lem:slice} we can implement our selector circuits by small
monotone circuits. We then also need to take care of the negations in
the $\oplus$-circuit. We push the negations down until they either hit
a gate in $Y$ or a selector circuit. We create a set $\overline Y$
gates, which we can think of as the negation of the gates in $Y$ and
also create negated selector circuits (on the $\ell$th slice). By
doing so we can now get rid of the last negations by appropriately
connecting the appropriate circuits. The following corollary of
\cref{lem:slice} will be useful to us.

\begin{claim}
  \label{clm:to-slice}
  Let $C$ be a Boolean circuit on $n$ input bits of size $s$. Then,
  for $\ell \in [n]$, there is a monotone Boolean circuit $C^\mon$ of
  size $2s + \poly(n)$ computing the $\ell$-slice function that is
  equal to $C$ on the $\ell$-slice.
\end{claim}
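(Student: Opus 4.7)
The plan is to reduce the claim to \cref{lem:slice} (Berkowitz) by first exhibiting a small (non-monotone) circuit for an appropriate slice function extending $C$ restricted to the $\ell$-slice, and then invoking the lemma to convert it into a monotone one.

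First I would define the slice function $g : \{0,1\}^n \to \{0,1\}$ by
\[
  g(\alpha) =
  \begin{cases}
    0 & \text{if } |\alpha| < \ell,\\
    C(\alpha) & \text{if } |\alpha| = \ell,\\
    1 & \text{if } |\alpha| > \ell.
  \end{cases}
\]
By construction $g$ is an $\ell$-slice function and agrees with $C$ on the $\ell$-slice, so the only remaining task is to exhibit a monotone circuit of size $2s+\poly(n)$ computing $g$.

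Next I would bound $\csize(g)$, the ordinary circuit complexity of $g$. The point is that $g$ can be written as
\[
  g(\alpha) \;=\; [\,|\alpha| > \ell\,] \;\vee\; \bigl([\,|\alpha| = \ell\,]\wedge C(\alpha)\bigr).
\]
The two threshold predicates $[|\alpha|>\ell]$ and $[|\alpha|=\ell]$ depend only on $n$ and $\ell$, and each can be computed by a (non-monotone) circuit of size $\poly(n)$ using a standard counter circuit that computes $|\alpha|$ as an $O(\log n)$-bit integer and then compares it to $\ell$. Hooking these up to the given circuit $C$ of size $s$ using a couple of extra gates yields a circuit for $g$ of size at most $s + \poly(n)$, so $\csize(g) \le s + \poly(n)$.

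Finally, since $g$ is a slice function, \cref{lem:slice} applies and gives
\[
  \cmonsize(g) \;\le\; 2\,\csize(g) + O(n^2\log n) \;\le\; 2s + \poly(n),
\]
which is exactly the claimed bound. Taking $C^{\mon}$ to be any monotone circuit achieving $\cmonsize(g)$ completes the proof. There is no real obstacle here; the only thing to be careful about is absorbing the lower-order $\poly(n)$ from the counter/threshold gadget and from the $O(n^2\log n)$ overhead of Berkowitz's construction into the single $\poly(n)$ term, and observing that the factor $2$ in the final bound comes precisely from the factor $2$ in \cref{lem:slice}.
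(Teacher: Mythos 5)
Your proposal is correct and matches the paper's own argument: the paper likewise builds the non-monotone circuit $(C \wedge \calT_{\ge \ell}) \vee \calT_{>\ell}$ (a cosmetic variant of your $[\,|\alpha|>\ell\,] \vee ([\,|\alpha|=\ell\,]\wedge C)$) using $\poly(n)$-size threshold circuits, and then applies \cref{lem:slice} to monotonize it, absorbing the overheads into the $\poly(n)$ term exactly as you do.
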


\begin{proof}
  Let $\calT_{\ge \ell}$ be the threshold function that outputs $1$ if
  and only if the Hamming weight of an input $\alpha \in \set{0,1}^n$
  is at least $\ell$. Connect the output of $C$ by an \emph{and} gate
  to a circuit computing $\calT_{\ge \ell}$. The output of this
  circuit is then connected by an \emph{or} gate to the output of a
  circuit computing $\calT_{> \ell}$. Let us denote this new circuit
  by $C'$.

  The circuit $C'$ clearly outputs $1$ whenever the input is of
  Hamming weight larger than $\ell$. Furthermore, on the $\ell$-slice
  it is equal to $C$ because $\calT_{\ge \ell}$ outputs $1$ while
  $\calT_{> \ell}$ outputs $0$. Finally the output is $0$ if the
  Hamming weight is less than $\ell$ because the output of both
  threshold functions is $0$.

  Clearly the size of the circuits computing the threshold functions
  is $\poly(n)$. We apply \cref{lem:slice} to conclude that there is a
  monotone circuit $C^\mon$ computing the same function as $C'$ of
  size $2s + \poly(n)$.
\end{proof}

Before stating the following lemma we need to adapt some terminology
to the monotone setting. Observe that $\circuit_s^\mon(f)$ is a
restriction of $\circuit_s(f)$. Let $\tau$ be such that
$\restrict{\circuit_s(f)}{\tau} = \circuit^\mon_s(f)$. This allows us
to naturally extend $k$-determined restrictions to the monotone
setting: a restriction $\rho$ is a $k$-determined restriction for
$\circuit^\mon_s(f)$ if the restriction $\rho \tau$ is a
$k$-determined restriction for $\circuit_s(f)$. Similarly we can
extend $m$-independence to the monotone setting. This will later allow
us to use \cref{lem:circuit-to-poly} even though we are working with
the monotone formula.

\begin{lemma}
  \label{lem:restriction-monotone}
  For all $k, \ell, m, n, t \in \N$ satisfying $m \le 2^n$, and any
  explicit bipartite graph $G = (U, V, E)$ such that $|U| = 2^n$,
  $|V| = m$ and all $u \in U$ are of degree $\deg(u) \le k$, the
  following holds.  There is a constant $C > 0$, depending on the
  explicitness of $G$, such that for all
  $s \ge C \cdot m \cdot n^C \cdot k^C$ and any
  $f \in \calM_n(\ell, s/10, t)$ there is a natural $m$-independent
  $k$-determined variable substitution $\rho$ for the formula
  $\circuit^\mon_s(f)$ such that
  \begin{align*}
    \gout_{s, \alpha}(Y) =
    \begin{cases}
      1, &\text{if~} |\alpha| > \ell,\\
      0, &\text{if~} |\alpha| < \ell,\\
      f(\alpha), &\text{if~} |\alpha| = \ell \text{~and~} C_f^\mon(\alpha) \neq \bot,\\
      \oplus_{i \in N(\alpha)} y_i, &\text{otherwise},\\
    \end{cases}
  \end{align*}
  for $\gout_{s, \alpha}$ and $Y$ as in \cref{def:determined}.

  Furthermore, the formula $\restrict{\circuit^\mon_s(f)}{\rho}$ is
  over $O(t \cdot k + m)$ variables.
\end{lemma}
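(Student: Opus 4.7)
The plan is to mimic the proof of \cref{lem:restriction}, carefully adjusting the construction to respect the monotone structure of $\circuit^{\mon}_s(f)$. There are two main ingredients.

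First, to keep the number of unset structural variables at $m$ while still providing access to the negated variables needed by the XOR subcircuit: reserve $m$ gates as the free variables $y_i$ (each an OR of two $\constval$'s with the first fixed to $0$ and the second left as a free variable $c_i$), and also reserve $m$ parallel ``complement'' gates $\bar y_i$ (each an OR of two $\constval$'s with the first fixed to $0$ and the second substituted, via $\rho$, by the bar variable $\bar c_i$). Since variable substitutions are permitted to map a variable to a bar variable, this yields $\bar y_i = 1 - y_i$ structurally without consuming an extra unset variable, so $|Y|=m$ and $m$-independence is preserved.

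Second, build the remaining circuitry from $\wedge/\vee$ gates using \cref{clm:to-slice}: the selectors $\sel'_i$, their negations $\overline{\sel'_i}$, and a circuit $D(\alpha) = C_f^1 \wedge C_f^2$ all admit monotone $\ell$-slice replacements $\sel'^{\mon}_i$, $\widetilde{\sel'_i}$, and $D^{\mon}$ of essentially the same size. The target $\bigoplus_{i \in N(\alpha)} y_i$ is then produced by maintaining a pair of accumulators $\chi_j$ and $\tilde \chi_j$ (where $\tilde \chi_j$ plays the role of $\overline{\chi_j}$ on the slice), updated by the monotone recurrences
\begin{align*}
  \chi_j &= \bigl(\chi_{j-1} \wedge (\widetilde{\sel'_j} \vee \bar y_j)\bigr) \vee \bigl(\tilde \chi_{j-1} \wedge \sel'^{\mon}_j \wedge y_j\bigr)\eqcomma\\
  \tilde \chi_j &= \bigl(\tilde \chi_{j-1} \wedge (\widetilde{\sel'_j} \vee \bar y_j)\bigr) \vee \bigl(\chi_{j-1} \wedge \sel'^{\mon}_j \wedge y_j\bigr)\eqcomma
\end{align*}
initialized with $\chi_0 = \calT_{>\ell}(\alpha)$ and $\tilde \chi_0 = \calT_{\geq \ell}(\alpha)$, and the overall output is $\chi_m \vee D^{\mon}(\alpha)$.

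Checking the three required properties is then routine. Naturalness is immediate; $m$-independence holds since every $\beta \in \{0,1\}^Y$ fixes a fully determined monotone circuit. The main obstacle, which this dual-accumulator construction is designed to overcome, is $k$-determinedness at \emph{off-slice} inputs: for $|\alpha| \neq \ell$ the monotone slice selectors collapse (both $\sel'^{\mon}_j$ and $\widetilde{\sel'_j}$ equal $0$ for $|\alpha| < \ell$ and $1$ for $|\alpha| > \ell$), and a direct induction on the recurrences shows that $\chi_j, \tilde \chi_j$ then propagate as constants equal to $\calT_{>\ell}(\alpha)$, depending on \emph{no} $y_i$. On the slice, the recurrence collapses to the iterative non-monotone XOR of \cref{lem:restriction}, so $\chi_j$ depends only on those $y_i$ with $\sel'_i(\alpha)=1$, of which there are at most $k$. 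The variable count $O(t \cdot k + m)$ follows by the same wire-substitution argument as in \cref{lem:restriction}: off-slice and on-slice-non-$\bot$ evaluation variables all collapse through the constant $\chi_j, \tilde \chi_j$, while the at most $t$ slice inputs on which $C_f^\mon$ outputs $\bot$ each contribute only $O(k)$ remaining free evaluation variables.
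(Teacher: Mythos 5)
Your proposal is correct and follows essentially the same route as the paper's proof: two families of ``variable gates'' tied together by substituting bar variables (so only $m$ structural variables remain unset), sliced selector circuits and their sliced negations via \cref{clm:to-slice}, a dual-rail monotonized $\oplus$-circuit, and the final OR with $C^\mon_{f,1} \wedge C^\mon_{f,2}$, with the variable count handled exactly as in \cref{lem:restriction}. The only real divergence is cosmetic: the paper forces correct off-slice behaviour by putting the $y_i$ and $\bar y_i$ gates themselves onto the slice with threshold circuits and then pushing negations through a generic XOR circuit, whereas you fold the thresholds into the initialization $\chi_0 = \calT_{>\ell}$, $\tilde\chi_0 = \calT_{\ge\ell}$ of an explicit dual accumulator -- both yield the same semantics, $k$-determinedness and $m$-independence.
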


\begin{proof}
  This proof is an adaptation of the argument of the proof
  \cref{lem:restriction}. Let us describe the natural $m$-independent
  $k$-determined restriction $\rho$ for the formula
  $\circuit_s^\mon(f)$.
  
  As in the proof of \cref{lem:restriction} we have gates that act as
  Boolean variables. But instead of having a single set $Y$ of
  variables we now have two sets $Y$ and $\overline Y$, each of size
  $m$. We think of the variables in $\overline Y$ as the negations of
  the variables in $Y$ and ensure this by applying the appropriate
  variable substitution for all $\alpha \in \set{0,1}^n$ and $i \in [m]$.

  According to \cref{clm:to-slice} we may assume that the circuit
  $C_f^\mon$ computes a monotone $\ell$-slice function in both outputs
  $C_{f,1}^\mon, C_{f,2}^\mon$ for a mild increase in size;
  $|C_f^\mon| \le s/5 + \poly(n) \le s/4$ for $s$ large enough. Recall
  that the first output of $C_f^\mon$ indicates whether the second
  output bit is equal to $f$ on the $\ell$-slice. Let
  $\overline C^\mon_{f,1}$ be the negation of $C_{f,1}^\mon$ on the
  $\ell$-slice. In other words,
  $\overline C^\mon_{f,1}(\alpha) = \lnot C_{f,1}^\mon(\alpha)$ if
  $\alpha$ has Hamming weight $\ell$, and
  $\overline C^\mon_{f,1}(\alpha) = C_{f,1}^\mon(\alpha)$ otherwise.

  The monotone circuit $C_f^\mon$ is of size at most $s/4$ and hence
  according to \cref{lem:slice} there is a monotone circuit of size
  $s/2 + \poly(n) \le 5s/8$ computing
  $\overline C^\mon_{f,1}(\alpha)$.

  We restrict the formula such that a part of the circuit is
  equivalent to $C_f^\mon$ and another part is equal to
  $\overline C^\mon_{f,1}$. Note that the size of these two circuits
  is at most $7s/8$ by above discussion.

  Recall that because $G(\set{0,1}^n, Y, E)$ is explicit, there are
  circuits $\sel_1,$ $\sel_2, \ldots, \sel_m$, each of size $\poly(n,k)$, where
  each $\sel_i$ computes, given an input $\alpha \in \set{0,1}^n$,
  whether the vertex $y_i \in Y$ is a neighbor of the vertex $\alpha$.
  Let $\overline \sel_i = \lnot \sel_i$ and denote by $\sel_i^\mon$
  (respectively $\overline \sel_i^\mon$) the circuit obtained by
  applying \cref{clm:to-slice} to $\sel_i$ (to $\overline \sel_i$
  respectively). By the guarantees of \cref{clm:to-slice} all these
  $2m$ circuits are of size $\poly(n,k)$.

  We restrict the formula such that a part of the circuit computes the
  functions
  \begin{align}
    \sel_1^\mon, \ldots, \sel^\mon_m,
    \overline\sel^\mon_1, \ldots, \overline\sel^\mon_m \eqperiod
  \end{align}
  From these $\ell$-slice selector circuits we can then define
  selector circuits that take $C_f^\mon$ into account. Namely, we
  connect $\sel_i^\mon$ by an \emph{and} gate to the output of
  $\overline C_{f,1}^\mon$ to obtain the circuit $\sel_{i}^{\prime\mon}$ and
  similarly connect $\overline \sel^\mon_i$ by an \emph{or} gate to
  $C_{f,1}^\mon$ to obtain the circuit $\overline \sel_i^{\prime\mon}$. 

  Finally, we also put each variable $y_i$ and $\bar y_i$ onto the
  slice by the same construction used in the proof of
  \cref{clm:to-slice}: connect the variable $y_i$ (respectively
  $\bar y_i$) by an \emph{and} to the threshold circuit
  $\calT_{\ge \ell}$ and connect this circuit in turn by an \emph{or}
  gate to a $\calT_{> \ell}$ threshold circuit to obtain $y_i^\mon$
  (respectively $\bar y_i^\mon$).  It is well-known \cite{Valiant84,
    BW06, Goldreich2020} that threshold circuits have montone circuits
  of size $\poly(n)$ and we can thus restrict the formula such that a
  part of the circuit computes $y_i^\mon$ and $\bar y_i^\mon$.
  
  Finally we connect $y^\mon_i$ by an \emph{and} gate to the selector
  circuit $\sel_i^{\prime\mon}$. Note that this circuit is equal to an
  $\ell$-slice function. As we will see later this ensures that the
  whole circuit outputs an $\ell$-slice function.
  We connect the circuits $\bar y^\mon_i$ similarly: connect
  $\bar y^\mon_i$ by an \emph{or} gate to the negated selector circuit
  $\overline \sel_i^{\prime\mon}$. Again, the output of this circuit
  is equal to an $\ell$-slice function.

  Equally inportant is that these circuits behave well on the
  $\ell$-slice. Indeed it can be checked that the positive circuit, on
  input $\alpha \in \set{0,1}^n$, outputs
  $\sel_i^{\prime\mon}(\alpha) \land y_i$ while the negative circuit
  outputs $\overline \sel_i^{\prime\mon}(\alpha) \lor \bar y_i$. On
  the $\ell$-slice these functions are the negation of eachother,
  which we are going to use in the following.
  
  We need to construct a monotone circuit for the \emph{xor} of
  $\sel_i^{\prime\mon}(\alpha) \wedge y_i$ for $i$ from $1$ to $m$, on
  $\ell$-slice inputs $\alpha$. We take a standard $O(m)$-size
  $\oplus$-circuit and monotonize it by pushing all negations in it
  down using De Morgan's law until they reach one of the
  $\oplus$-circuit's inputs $\sel_i^{\prime\mon} \wedge y_i$.
  Whenever the negation of $\sel_i^{\prime\mon}(\alpha) \wedge y_i$ is
  needed, we do one last step of De Morgan and replace it by
  $\overline \sel_i^{\prime\mon}(\alpha) \vee \bar{y}_i$.

  To ensure that the circuit outputs $f(\alpha)$ whenever
  $C^\mon_f(\alpha) \neq \bot$, we connect the two outputs of
  $C_f^\mon$ by an \emph{and} gate and connect this gate by an
  \emph{or} gate to the output of the \emph{xor} circuit. This
  completes the description of the restriction on the structure
  variables. A depiction of the resulting circuit can be found in
  \cref{fig:monotone}.
  \begin{figure}
    \centering
    \includegraphics[width=\linewidth,height=\textheight,keepaspectratio]{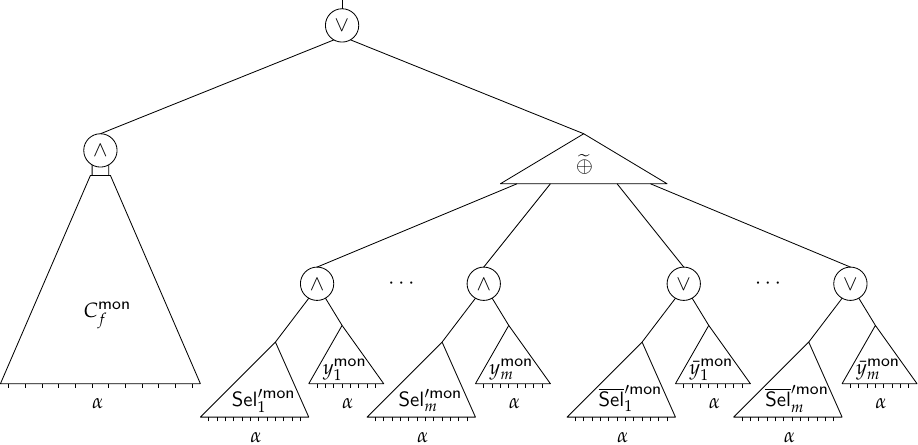}
    \caption{A depiction of the monotone circuit, where
      $\widetilde\oplus$ is the $\oplus$ circuit with the negations
      pushed down.}
    \label{fig:monotone}
  \end{figure}
  We ensure that $s$ is large enough so that above circuit can be
  described by the formula.

  Note that the constructed circuit always outputs a monotone
  $\ell$-slice function: as the monotonized $\oplus$-circuit is
  non-constant, we see that if all inputs to the circuit are $0$, it
  outputs $0$ and if all inputs are $1$, it outputs $1$. This, in
  particular, implies that the circuit outputs $0$ (respectively $1$)
  if the input is below (respectively, above) the $\ell$-slice and
  hence the entire circuit computes a monotone $\ell$-slice function.

  It can be easily checked that the described restriction is
  $m$-independent and $k$-determined. In order to prove the
  furthermore part, we need to reduce the number of evaluation
  variables. This can be achieved analogous to the proof of
  \cref{lem:restriction} and we thus omit it here.
\end{proof}

Let us prove our degree lower bound for monotone circuits, restated
here for convenience.

\MonotoneDegree*

\begin{proof}[Proof of \cref{thm:main-monotone-degree}]
  As in the proof of \cref{thm:main-degree}, we use the graphs from
  \cref{thm:expander}, with $U = \{0,1\}^n$,
  $k = O_{\gamma}\big((n \log r)^{1 + 1/\gamma}\big)$, and
  $|V| \le k^2 r^{1+\gamma}$ for parameters $\gamma > 0$ and
  $r \le 2^n$. We apply \cref{lem:restriction-monotone} with above
  graph and $t=2^n$ along with $C_f^\mon = \bot$ to obtain, for
  $s \ge m \cdot \poly(n, k)$, an appropriate natural $m$-independent
  $k$-determined variable substitution $\rho$ for
  $\circuit^\mon_s(f)$. In particular $\rho$ satisfies
  \begin{align*}
    \gout_{s, \alpha}(Y) =
    \begin{cases}
      1, &\text{if~} |\alpha| > \ell,\\
      0, &\text{if~} |\alpha| < \ell,\\
      \oplus_{i \in N(\alpha)} y_i, &\text{otherwise},\\
    \end{cases}
  \end{align*}
  for $\gout_{s, \alpha}$ and $Y$ as in definition
  \cref{def:determined}.

  Recall that there is a restriction $\tau$ such that
  $\circuit_s^\mon(f) = \restrict{\circuit_s(f)}{\tau}$ and we can
  thus apply \cref{lem:circuit-to-poly} with $\tau\rho$ to conclude
  that if there is an SoS refutation of
  $\restrict{\circuit^\mon_s(f)}{\rho}$ in degree $d$, then there is a
  degree $d \cdot k$ SoS refutation of the system of polynomial
  equations computing
  \begin{align}
    \set{
    \bigoplus_{i \in N(\alpha)} y_i = f(\alpha)
    \mid
    \alpha \in \binom{[n]}{\ell}
    }
    \eqperiod
  \end{align}
  As the graph $G$ is a strong expander, we can apply
  \cref{thm:sos-xor} to get an SoS degree lower bound of $\Omega(r)$
  for above system of equations. By above connection this gives an
  $\Omega(r/k)$ degree lower bound for the
  $\restrict{\circuit^\mon_s(f)}{\rho}$ formula and hence also for the
  unrestricted formula.

  Regarding the parameters, as in the proof of \cref{thm:main-degree}
  we choose $m = s/\poly_\gamma(n)$. Repeating the
  calculations from the aforementioned proof we obtain that
  $r \ge s^{1/(1+\gamma)}/\poly_\gamma(n)$. Thus by choosing
  $\gamma$ small enough such that $\frac{1}{1 + \gamma} > 1 - \eps/2$
  and $s$ large enough such that the final
  $\poly_\gamma(n) \le s^{\eps/2}$ we obtain the claimed degree lower
  bound of $\Omega_\eps(s^{1-\eps})$.
\end{proof}

As in the non-monotone case, we can also obtain size lower bounds for
functions that almost have a circuit of size $s$.

\MonotoneSize*

\begin{proof}
  Analogous to the proof of \cref{thm:main-size}.
\end{proof}

\section{Degree Upper Bound}
\label{sec:upper-bound}

In this section we give a simple upper bound on the SoS refutation
degree for $\circuit_s(f)$.  Specifically, for functions $f$ that have
no circuit of size $s$, we show that there is an SoS refutation of
$\circuit_s(f)$ of degree $O(s)$, essentially matching our
$\Omega(s^{1-\epsilon})$ lower bound.

At a high level, the logic behind the refutation is as follows: first,
we show that SoS in degree $O(s)$ can derive that the $\Theta(s^2)$
structure variables of $\circuit_s(f)$ must uniquely describe a
circuit of size $s$.  Then, we also show that for any fixed circuit,
SoS can in degree $O(s)$ derive that the circuit does not compute $f$
by ``evaluating'' the circuit on some (non-deterministically chosen)
input where the output differs from $f$.

To make this precise, we first define a set of monomials, which correspond
to circuits of size $s$. A multilinear monomial $m$ is a \emph{circuit
  monomial} if for every gate $v \in [s]$ it holds that
\begin{enumerate}
\item exactly one of the variables $\isneg(v), \isor(v)$ or $\isand(v)$ occurs
  in $m$,
\item for $a \in \set{1,2}$ exactly one of the variables
  $\isfromconst(v, a)$, $\isfromvar(v, a)$ or $\isfromgate(v, a)$ occurs
  in $m$,
\item for $a \in \set{1,2}$ exactly one of the variables
  $\constval(v, a)$ or $\overline{\constval(v, a)}$ occurs in $m$,
\item for $a \in \set{1,2}$ exactly one of the variables
  $\set{\isvar(v,a,i) \mid i \in [n]}$
  occurs in $m$, and
\item for $a \in \set{1,2}$ and $v > 1$, exactly one of the variables
  $\set{\isgate(v,a,u) \mid u < v}$
  occurs in $m$, and
  \item no other variables occur in $m$ than the ones described
    above.
\end{enumerate}
We denote by $\calM_s$ the set of circuit monomials. We first show
that SoS can derive in degree $O(s)$ the polynomial
$\sum_{m \in \calM_s} m - 1$; this corresponds to SoS proving that
the structure variables uniquely describe a circuit and does not use
anything about $f$.  Then in a second step we show that for every
$m \in \calM_s$, SoS can derive $-m$ in degree $O(s)$; this
corresponds to SoS proving that the circuit described by $m$ does not
compute $f$ correctly.  Summing these two parts up yields an SoS
derivation of $-1$, i.e., a refutation of the $\circuit_s(f)$ formula.

\pa{Here we still talk about SoS  derivations so we can't remove it from prelims}

\paragraph{Deriving~$\bm{\sum_{m \in \calM_s} m - 1}$.} We proceed
by induction on $s$. Note that $\calM_0 = \set{1}$ and hence the base
case is trivial. Suppose we have an SoS derivation of
$\sum_{m \in \calM_s} m - 1$. For every monomial $m \in \calM_s$ we
add the polynomial
\begin{align}
  m \cdot \big(\isneg(v) +  \isor(v) + \isand(v) - 1\big)
\end{align}
to the derivation (note that the second term is \cref{eq:ax-fn}). This
gives us an SoS derivation of $\sum_{m \in \calM'_s} m - 1$, where
$$\calM'_s = \bigcup_{m \in \calM_s} \set{m \cdot \isneg(v), m \cdot \isor(v), m \cdot \isand(v)}\eqperiod$$ We can continue in the same manner with
\cref{eq:ax-from} and \cref{eq:ax-in-gt} to finally obtain an SoS
derivation of $\sum_{m \in \calM_{s+1}} m - 1$. Clearly this
derivation requires degree at most $O(s)$, as for each gate there are
at most $7$ variables in every monomial from $\calM_s$.

\paragraph{Deriving $-m$ for $m \in \calM_s$.}
Let $C$ be the circuit that corresponds to the monomial $m$ and let
$\alpha \in \set{0,1}^n$ be such that $f(\alpha) \neq C(\alpha)$ (by
the assumption that $f$ does not have a circuit of size $s$, such an
$\alpha$ exists).  Suppose $C(\alpha) = b$ but $f(\alpha) = 1-b$.

We construct a degree $O(s)$ SoS proof of the fact that
$C(\alpha) = b$. That is, we are going to show that the polynomial
$p_s = m \cdot \left( \outwire_\alpha(s) - b\right)$ can be written as
\begin{align}
  \label{eq:-m-derivation}
  p_s = \sum_{i=1}^t r_i \cdot q_i \eqcomma
\end{align}
for some parameter $t$, axioms $q_1, \ldots, q_t$ and some polynomials
$r_1, \ldots, r_t$, such that $\deg(r_i \cdot q_i) = O(s)$ for all
$i$.  Note that given this, we can then easily derive $-m$ by
subtracting the polynomial
$m \cdot \left( \outwire_\alpha(s) - f(\alpha)\right) = m \cdot \left(
  \outwire_\alpha(s) - 1 + b\right)$ (this is $m$ multiplied by
\cref{eq:ax-correct}), yielding a derivation of $m \cdot (1 - 2b)$
which is either $m$ or $-m$ depending on $b$; in the former case it
can be multiplied by $-1$ to yield $-m$.  Note that here it is
important that the derivation \eqref{eq:-m-derivation} is a
Nullstellensatz derivation, not using any Sum-of-Squares part, since
otherwise it would not be possible to multiply it by $-1$.

Let us thus see how to derive $m \cdot \left( \outwire_\alpha(s) - b\right)$.
We do this by structural induction over the circuit: for every gate
$v$ we are going to construct an SoS proof of the fact that the
circuit rooted at $v$ outputs the bit $b_v$ on input $\alpha$. In
other words, an SoS derivation of the polynomial
$m \cdot \left(\outwire_\alpha(v) - b_v\right)$.

Let us explain how to construct an SoS proof $p_v$. Consider a gate
$v$ in the circuit. Depending on the function computed at $v$ and how
the wires of $v$ are connected we construct $p_v$ slightly differently.
As a first step let us construct SoS proofs $q_1$ and $q_2$ of the
fact that on input $\alpha$ the bit $c_a$, $a \in \set{1,2}$, is
carried on wire $a$ to the gate $v$. That is, the polynomial $q_a$
should simplify to $m \cdot \big(\inwire_\alpha(v, a) - c_a\big)$. In
the following we explain how to precisely define $q_a$ depending on
what the wire is connected to. Note that not a lot is going on -- we
are mostly just multilinearizing using the Boolean axioms.

If $m$ is of the form $m = m' \cdot \isfromconst(v,a) \cdot \constval(v, a)$ for some monomial $m'$, i.e., wire $a$
is connected to the constant $c_a = 1$ in the circuit described by $m$, then we
can derive $q_a = m \cdot \big(\inwire_{\alpha}(v,a) - 1\big)$
by the identity
\begin{align}
  \nonumber
  q_a &= m' \cdot \constval(v,a) \cdot \isfromconst(v,a) \cdot \big(\inwire_{\alpha}(v,a) -
        \constval(v, a)\big) +\\
      &\qquad m' \cdot \isfromconst(v,a) \cdot
        \big( \constval(v,a)^2 - \constval(v,a)\big) \eqcomma
\end{align}
a linear combination of \cref{eq:inwire-const} and the Boolean axiom on $\constval(v,a)$.  Similarly if $m = m' \cdot \isfromconst(v,a) \cdot \overline{\constval(v,a)}$ (i.e., $c_a = 0$) we can derive $q_a = m \cdot \inwire_{\alpha}(v,a)$ by
\begin{align}
  \nonumber
  q_a &= m' \cdot \overline{\constval(v,a)} \cdot \isfromconst(v,a) \cdot \big(\inwire_{\alpha}(v,a) -
        \constval(v, a)\big) +\\\nonumber
      &\qquad m' \cdot \isfromconst(v,a) \cdot \constval(v,a) \cdot
         \\\nonumber
      &\qquad\qquad\qquad \big( 1 - \constval(v,a) - \overline{\constval(v,a)}\big) +\\
      &\qquad m' \cdot \isfromconst(v,a) \cdot (\constval(v,a)^2 - \constval(v,a))
        \eqcomma
\end{align}
where we additionally use the negation axiom on $\constval(v,a)$.

Next, if $a$ is connected to an input $i$ and $m$ is of the form
$m = m' \cdot \isfromvar(v, a) \cdot \isvar(v, a, i)$ (so that $c_a = \alpha_i$), then
\begin{align}
  q_a = m \cdot (\inwire_{\alpha}(v,a) - \alpha_i) &= m' \cdot \isfromvar(v, a) \cdot \isvar(v, a, i) \cdot
          \big( \inwire_\alpha(v, a) - \alpha_i\big) \eqcomma
\end{align}
which is a multiple of \cref{eq:inwire-var}.  Lastly, if
$a$ is connected to a gate $u$ and $m$ is of the form
$m = m' \cdot \isfromgate(v, a) \cdot \isgate(v, a, u)$ (i.e., $c_a = b_u$), then
\begin{align}
  \nonumber
  q_a = m \cdot(\inwire_\alpha(v,a) - b_u) &= m' \cdot
          \isfromgate(v, a) \cdot \isgate(v, a, u) \cdot
          \big(\inwire_\alpha(v, a) - \outwire_\alpha(u)\big)\\
                                           &\qquad+ m \cdot (\outwire_\alpha(u) - b_u)
                                             \eqcomma
\end{align}
where the first term is a multiple of \cref{eq:inwire-gate}, and the
second term is the polynomial $p_u$ which, by induction, we assume has
already been derived in degree $O(s)$.

Given the two SoS proofs $q_1$ and $q_2$ we are ready to construct the
SoS proof $p_v = m \cdot (\outwire_\alpha(v) - b_v)$. As mentioned earlier
we do a case distinction over the funtion computed at $v$.
\begin{enumerate}
\item \emph{$v$ is a \emph{not} gate ($m = m' \cdot \isneg(v)$ and $b_v = 1-c_1$).} We have the derivation
  \begin{align}
    \nonumber
    p_v &=  m' \cdot \isneg(v) \cdot
             \big(\outwire_{\alpha}(v) - \overline{\inwire_{\alpha}(v, 1)}\big) \\\nonumber
           &\qquad + m \cdot
             \big(\overline{\inwire_{\alpha}(v, 1)} - 1 + \inwire_{\alpha}(v, 1)\big)\\
           &\qquad - q_1\eqcomma
  \end{align}
  where the first line uses \cref{eq:ax-neg} and the
  second line uses the negation axiom for $\inwire_{\alpha}(v, 1)$.
\item \emph{$v$ is an \emph{or} gate ($m = m' \cdot \isor(v)$, and $b_v = 1-(1-c_1)(1-c_2)$.} We have the derivation
  \begin{align}
    \nonumber
    p_v =  & m' \cdot \isor(v) \cdot 
             \Big(
             \outwire_{\alpha}(v) -
             \big(
             1 -
             \overline{\inwire_{\alpha}(v, 1)} \cdot
             \overline{\inwire_{\alpha}(v, 2)}
             \big)
             \Big)\\\nonumber
           & -
             m \cdot
             \overline{\inwire_\alpha(v, 1)} \cdot
             \big(
             \overline{\inwire_{\alpha}(v, 2)}
             - 1 + \inwire_{\alpha}(v, 2)
             \big)\\\nonumber
           & +
             m \cdot
             \big(
             \inwire_\alpha(v, 2) - 1
             \big)
             \cdot
             \big(
             \overline{\inwire_{\alpha}(v, 1)}
             - 1 + \inwire_{\alpha}(v, 1)
             \big)\\\nonumber
           & +
             \big(
             1 - \inwire_\alpha(v, 1)
             \big)
             \cdot
             q_2 \\
           & +
             \big(
             1 - c_2
             \big)
             \cdot
             q_1\eqcomma
  \end{align}
  where the first line uses \cref{eq:ax-or}, and the
  following two lines uses negation axioms.
\item \emph{$v$ is an \emph{and} gate ($m = m' \cdot \isand(v)$, and $b_v = c_1 \cdot c_2$.} We have
  \begin{align}
    \nonumber
    p_v =  & m' \cdot \isand(v) \cdot
             \big(
             \outwire_{\alpha}(v) -
             \inwire_{\alpha}(v, 1) \cdot
             \inwire_{\alpha}(v, 2)
             \big)\\\nonumber
           & +
             \inwire_\alpha(v, 1)
             \cdot
             q_2 \\
           & +
             c_2
             \cdot
             q_1\eqcomma
  \end{align}
  where the first line uses \cref{eq:ax-and}.
\end{enumerate}
This completes the description of the SoS derivation of
$\outwire_\alpha(s) = b$.
Observe that the final proof $p_s$ is of degree $O(s)$: in every
inductive step we increase the degree of the proof by at most a
constant.

\section{Concluding Remarks}
\label{sec:conclusion}

We have shown degree and size lower bounds in the Sum-of-Squares proof
system for the minimum circuit size problem.  There are a number of
interesting questions left open for further study.  Let us name a few.

\paragraph{Better Size Lower Bounds}

Whereas our degree lower bounds apply for all Boolean functions $f$, the
corresponding size lower bounds only apply to an albeit rich but still
restricted class of functions.  



\paragraph{Monotone Circuit Lower Bounds}

For monotone circuits, we were only able to obtain lower bounds for
slice functions (essentially because they behave in many ways like
non-monotone functions).  An intriguing question is whether this
limitation can be overcome, or whether it is inherent and there exist
some monotone circuit lower bounds that SoS \emph{is} able to prove.



\bibliographystyle{alpha}
\bibliography{references}

\clearpage

\appendix

\section[On Encodings of the $\circuit_s(f)$ Tautology]{On Encodings
  of the $\bm{\mathrm{Circuit}_s(f)}$ Tautology}
\label{sec:encoding}

Let us introduce a possible constant width CNF encoding of
$\circuit_s(f)$ as proposed by Razborov \cite{Razborov98}.

The formula is defined over the same variables as introduced in
\cref{sec:circuit-encoding}, but in order to keep the fan-in bounded,
we further introduce the extension variables $\isvarless(v, a, i)$ and
$\isgateless(v, a, u)$ that indicate whether the wire $a$ of $v$ is
connected to a variable in $x_1, \ldots, x_i$, a gate $1, \ldots, u$
respectively.

Let us group the axioms in the same manner as we did in
\cref{sec:circuit-encoding}. First we have the structure axioms. The
first axioms encode that each wire is connected to a single kind
\begin{align}
  &\big(\isfromconst(v, a) \vee \isfromvar(v, a) \vee \isfromgate(v,
    a)\big) \wedge \nonumber\\
  &\neg\big(\isfromconst(v, a) \wedge \isfromvar(v, a)\big) \wedge\nonumber\\
  &\neg\big(\isfromvar(v, a) \wedge \isfromgate(v, a)\big) \wedge\nonumber\\
  &\neg\big(\isfromconst(v, a) \wedge \isfromgate(v, a)\big) \eqperiod
\end{align}
The next set of axioms similarly ensures that each gate computes
precisely one function
\begin{align}
  &\big(\isneg(v) \vee \isor(v) \vee \isand(v)\big) \land\nonumber\\
  &\neg\big(\isneg(v) \wedge \isor(v)\big) \wedge \neg\big(\isor(v) \wedge
    \isand(v)\big) \wedge \neg\big(\isneg(v) \wedge \isand(v)\big) \eqperiod
\end{align}
Last, we need to make sure that each wire is connected to a single
input or a gate.
\begin{align}
  \label[axiom]{ax:input-cnf}
  &\isvarless(v, a, n) \wedge \bigwedge_{i \neq j}\neg \big(\isvar(v,a,i) \wedge
    \isvar(v,a,j)\big) \wedge\nonumber\\
    &\bigwedge_{i \in [n]}
      \Big(\isvarless(v, a, i) \equiv
    \big(\isvarless(v, a, i-1) \vee \isvar(v, a, i)\big)
    \Big) \eqcomma \nonumber\\
  &\qquad\text{where } \isvarless(v, a, 0) \defeq 0 \eqcomma
\end{align}
and similarly for $v \in [s] \setminus \{1\}$ we have that 
\begin{align}
  \label[axiom]{ax:gate-cnf}
  &\isgateless(v, a, v-1) \wedge
    \bigwedge_{u < u' < v} \neg\big(\isgate(v,a,u) \wedge \isgate(v,a,u')\big)
    \wedge\nonumber\\
  &\bigwedge_{u \in [v-1]}
    \Big(
    \isgateless(v, a, u) \equiv
    \big(\isgateless(v, a, u-1) \vee \isgate(v, a, u)\big)
    \Big)\eqcomma \nonumber\\
  &\qquad\text{where }\isgateless(v, a, 0) \defeq 0 \eqperiod
\end{align}

Let us take a look at the evaluation axioms. Again, we have axioms
that ensure that the wires carry the values intended by the structure
variables. If a wire is connected to a constant, then the evaluation
variable associated with that wire should be equal to the constant
\begin{align}
  &\isfromconst(v, a)
    \rightarrow
    \big(\inwire_{
    \alpha}(v, a) \equiv \constval(v, a)\big) \eqcomma
\end{align}
and similarly if a wire is connected to an input or a gate
\begin{align}
  &\isfromvar(v, a) \wedge \isvar(v, a, i)
    \rightarrow
    \inwire_{\alpha}(v, a) \equiv \alpha_i \eqcomma\\
  &\isfromgate(v, a) \wedge \isgate(v, a, u)
    \rightarrow
    \inwire_{\alpha}(v, a) \equiv \outwire_{\alpha}(u) \eqperiod
\end{align}
Last we need to make sure that the gates propagate the value they are
supposed to compute.
\begin{align}
  &\isneg(v) \rightarrow
    \big(\outwire_{\alpha}(v) \equiv \neg \inwire_{\alpha}(v, 1)\big)\\
  &\isor(v) \rightarrow
    \big(\outwire_{\alpha}(v) \equiv \inwire_{\alpha}(v, 1) \vee \inwire_{\alpha}(v, 2)\big)\\
  &\isand(v) \rightarrow
    \big(\outwire_{\alpha}(v) \equiv \inwire_{\alpha}(v, 1) \wedge
    \inwire_{\alpha}(v, 2)\big)
    \eqperiod
\end{align}

The final axioms ensure that the correct function is computed
\begin{align}
  \outwire_{\alpha}(s) \equiv f(\alpha) \eqperiod
\end{align}

This formula can be rewritten in the usual manner into a $4$-CNF. Let
us denote this formula by $\circuit^{\mathrm{CNF}}_s(f)$.

\pa{Here the $4$ is because all axioms use at most 4 variables, right?  Can say this.}

\kr{It is a bit annoying: you have to consider it as a CNF and then
  each clause is over at most $4$ vars. The axioms themselves (for
  example \cref{ax:gate-cnf}) may depend on more variables. }

\begin{proposition}
  \label{prop:cnf-poly}
  If there is an SoS refutation of degree $d$ of the CNF formula
  $\circuit^{CNF}_s(f)$, then there is an SoS refutation of degree
  $O(d)$ of the system of polynomials $\circuit_s(f)$ as introduced in
  \cref{sec:circuit-encoding}.
\end{proposition}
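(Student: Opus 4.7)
The plan is to simulate an SoS refutation $\pi$ of $\circuit^{CNF}_s(f)$ inside the polynomial system $\circuit_s(f)$ by two steps: first, eliminate the extra extension variables $\isvarless$ and $\isgateless$ via a degree-one polynomial substitution; second, derive each (substituted) arithmetized CNF axiom from the axioms of $\circuit_s(f)$ in constant degree. Combining these two steps replaces each axiom use in $\pi$ by a constant-degree derivation, yielding an SoS refutation of $\circuit_s(f)$ of degree at most $d + O(1) = O(d)$.

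For the first step, define the polynomial substitution
\begin{align*}
  \rho(\isvarless(v,a,i)) &\defeq \sum_{j=1}^{i} \isvar(v,a,j)\eqcomma &
  \rho(\isgateless(v,a,u)) &\defeq \sum_{w=1}^{u} \isgate(v,a,w)\eqcomma
\end{align*}
and the identity on all other variables. This substitution has degree one, so it does not increase the degree of any polynomial in $\pi$; in particular the square polynomials $s_j$ stay of degree at most $d/2$ after substitution. The motivation for this choice is that \cref{eq:ax-in-gt,eq:prod-zero-in} together imply that exactly one $\isvar(v,a,j)$ is true (modulo the Boolean axioms), which matches the intended semantics $\isvarless(v,a,i) = \bigvee_{j \le i}\isvar(v,a,j)$, and similarly for $\isgateless$.

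For the second step, one verifies that every arithmetized CNF axiom of $\circuit^{CNF}_s(f)$, after applying $\rho$, admits a derivation of degree $O(1)$ from the axioms of $\circuit_s(f)$. The axioms split into a few families which are all handled similarly. The ``exactly one kind'' axioms for each wire and each gate reduce directly to \cref{eq:ax-from,eq:ax-fn}, combined with the Boolean axioms to derive the ``at most one'' part. The ``at most one input/gate'' clauses reduce to \cref{eq:prod-zero-in,eq:prod-zero-gate}. The inductive equivalences $\isvarless(v,a,i) \equiv \isvarless(v,a,i-1) \vee \isvar(v,a,i)$ become, after substitution, polynomial identities that follow in constant degree from \cref{eq:prod-zero-in} together with the Boolean axiom on $\isvar(v,a,i)$, using the standard expansion $(\sum_j z_j)^2 = \sum_j z_j^2 + 2\sum_{j<j'} z_j z_{j'}$. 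The top-level axioms $\isvarless(v,a,n)=1$ and $\isgateless(v,a,v-1)=1$ are exactly \cref{eq:ax-in-gt,eq:ax-in-gt2}. Finally, the arithmetizations of the evaluation implications like $\isfromconst(v,a) \rightarrow (\inwire_\alpha(v,a) \equiv \constval(v,a))$ follow from \cref{eq:inwire-const,eq:inwire-var,eq:inwire-gate,eq:ax-neg,eq:ax-or,eq:ax-and} combined with the Boolean and negation axioms, again in constant degree.

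Putting the pieces together, if $\pi$ writes $-1 = \sum_i t_i \tilde p_i + \sum_j s_j^2$ using the arithmetized CNF axioms $\tilde p_i$ of degree at most four, then $\rho(\pi)$ yields $-1 = \sum_i \rho(t_i)\rho(\tilde p_i) + \sum_j \rho(s_j)^2$ over the variables of $\circuit_s(f)$. Replacing each $\rho(\tilde p_i)$ by its constant-degree derivation from the axioms of $\circuit_s(f)$ gives an SoS refutation of degree at most $d + O(1)$. The main nuisance, though not a real obstacle, is the bookkeeping required to verify the second step for each family of CNF axioms; no nontrivial proof-complexity ingredient is needed beyond the degree-one substitution and basic manipulations using the Boolean and ``product-zero'' axioms.
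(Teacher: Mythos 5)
Your proposal is correct and follows essentially the same route as the paper's proof: the same degree-one substitution replacing $\isvarless(v,a,i)$ by $\sum_{j \le i}\isvar(v,a,j)$ (and analogously for $\isgateless$), followed by constant-degree derivations of the substituted CNF axioms from \cref{eq:ax-in-gt,eq:ax-in-gt2,eq:prod-zero-in,eq:prod-zero-gate} and the Boolean/negation axioms. The only detail worth adding is that the substitution should also send the ``bar'' extension variables $\overline{\isvarless(v,a,i)}$, $\overline{\isgateless(v,a,u)}$ to one minus the corresponding sums, so that the negation axioms on the extension variables disappear rather than surviving as extra axioms to be derived.
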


The rest of this section is devoted to the proof of
\cref{prop:cnf-poly}.

Observe that for each axiom $p$ from the polynomial encoding
$\circuit_s(f)$, there is a CNF
$F_p \subseteq \circuit_s^{\mathrm{CNF}}(f)$ over the same variables
as $p$ (ignoring the added extension variables $\isvarless(v, a, i)$
and $\isgateless(v, a, u)$) such that $p(\alpha) = 0$ is satisfied by
a Boolean assignment $\alpha$ if and only if $F_p$ is satisfied by
$\alpha$ (where we extend the assignment to the extension variables in
the natural manner).

\pa{``extension variables'' not defined.  When introducing them, say
  that they are extension variables, and remind here what the
  extension variables are.  But I also don't understand what the
  purpose of this paragraph is, why is it needed?}

\kr{added the definition of extension variables and remind the reader
  by adding the names. The purpose of this paragraph is to introduce
  the notation $F_p$.}

Recall that SoS operates on polynomials and we thus need to translate
the CNF into a system of polynomials. We translate a clause
$\vee_{i \in [w]} z_i$ into the polynomial
$\prod_{i \in [w]}(1 - z_i) = 0$.

Observe that almost all axioms $p$ of $\circuit_s(f)$ depend only on a
constant number of variables. From such $p$, using the appropriate
Boolean axioms and negation axioms, we can in constant degree derive
$F_p$.

Let us define a polynomial substitution $\rho$ that gets rid of the
extension variables in the natural manner: the substitution $\rho$
first replace each occurrence of a ``bar'' extension variable
$\overline{\isvarless(v, a, i)}$ or $\overline{\isgateless(v, a, u)}$
by the polynomial $1-\isvarless(v, a, i)$ and the polynomial
$1-\isgateless(v, a, u)$ respectively. Then, $\rho$ replaces each
occurrence of the variable $\isvarless(v, a, i)$ by
$\sum_{j \le i} \isvar(v, a, j)$ and similarly $\isgateless(v, a, u)$
by $\sum_{w \le u} \isgate(v, a, w)$.

Suppose we have a degree $d$ refutation $\pi$ of
$\circuit_s^{\mathrm{CNF}}(f)$. Let us consider $\restrict{\pi}{\rho}$
and $\restrict{\circuit^{\mathrm{CNF}}(f)}{\rho}$. Note that
$\restrict{\pi}{\rho}$ is a degree $d$ SoS refutation of
$\restrict{\circuit^{\mathrm{CNF}}_s(f)}{\rho}$.

We claim that in constant degree the axioms of
$\restrict{\circuit^{\mathrm{CNF}}_s(f)}{\rho}$ can be derived from the
polynomial encoding $\circuit_s(f)$. As previously noted, this holds
for all axioms but the ones that are over a non-constant number of
variables. In other words it just remains to show that we can derive
the substituted \cref{ax:input-cnf,ax:gate-cnf} from
\cref{eq:ax-in-gt,eq:ax-in-gt2,eq:prod-zero-in,eq:prod-zero-gate}.

Let us consider \cref{ax:input-cnf}. With the extension variables
substituted and translated into a system of polynomials the axiom
consists of the following polynomial equations.
\begin{align}
  \label[axiom]{eq:ax1}
  1 - \sum_{j \in [n]} \isvar(v,a,j) &= 0\\
  \label[axiom]{eq:ax2}
  \isvar(v,a,i) \cdot \isvar(v,a,j) &= 0,
  \text{ for~}i \neq j\\
  \nonumber
  \left(\sum_{j \le i} \isvar(v,a,j)\right)
  \left(1 - \sum_{j < i} \isvar(v,a,j)\right)\cdot\quad&\\
  \label[axiom]{eq:ax3}
  \overline{\isvar(v,a,i)} &= 0,
  \text{ for~}i \in [n]\\
  \label[axiom]{eq:ax4}
  \left(1 - \sum_{j \le i} \isvar(v,a,j)\right)
  \left(\sum_{j < i} \isvar(v,a,j)\right) &= 0,
  \text{ for~}i \in [n]\\
  \label[axiom]{eq:ax5}
  \left(1 - \sum_{j \le i} \isvar(v,a,j)\right)
  \isvar(v,a,i) &= 0,
  \text{ for~}i \in [n] \eqperiod
\end{align}
\Cref{eq:ax1} is equal to \cref{eq:ax-in-gt} and similarly
\cref{eq:ax2} is equal to \cref{eq:prod-zero-in}. In the following we
show that \cref{eq:ax3,eq:ax4,eq:ax5} can be derived from
\cref{eq:prod-zero-in}, the Boolean axioms and the negation axioms in
constant degree.

Consider \cref{eq:ax3}. Expand and rewrite modulo the Boolean axioms
and the negation axiom to obtain
\begin{align}
  &\isvar(v,a,i)
  \left(\sum_{j < i} \isvar(v,a,j)\right)^2-\nonumber\\
  &\qquad\qquad 2\sum_{j < j' < i}\isvar(v,a,j)\cdot\isvar(v,a,j')~-\nonumber\\
  &\qquad\qquad\qquad\qquad\isvar(v,a,i)\sum_{j < i}\isvar(v,a,j)
  = 0 \eqperiod
\end{align}
Observe that every term $t$ left in this polynomial is of the form
$t = t' \cdot \isvar(v,a,j) \cdot \isvar(v,a,j')$, for some
$j \neq j'\in [i]$ and a term $t'$ of degree at most $1$. But this
means that every term is equal to $0$ modulo \cref{eq:prod-zero-in}
and we thus see that \cref{eq:ax3} can be derived in constant degree
from $\circuit_s(f)$.

Let us consider \cref{eq:ax4}. Rewrite modulo the Boolean axiom to
obtain
\begin{align}
  &\isvar(v,a,i)\sum_{j < i}\isvar(v,a,j) +\nonumber\\
  &\qquad 2 \sum_{j < j' < i}\isvar(v,a,j)\cdot \isvar(v,a,j')
  = 0 \eqperiod
\end{align}
All terms are of the form of \cref{eq:prod-zero-in} and we can thus
derive \cref{eq:ax4} from $\circuit_s(f)$ in constant degree.

Last, we need to consider \cref{eq:ax5}. Note that modulo the Boolean
axiom we obtain the polynomial equation
\begin{align}
  - \isvar(v,a,i)\sum_{j < i}\isvar(v,a,j) = 0 \eqperiod
\end{align}
Also in this polynomial every term is of the form of
\cref{eq:prod-zero-in} and thus also \cref{eq:ax5} can be derived in
constant degree.

What remains is to show that \cref{ax:gate-cnf} can be derived from
$\circuit_s(f)$ in constant degree. This can be checked analogous to
\cref{ax:input-cnf} and we thus omit it here.

We conclude that all axioms of $\restrict{\circuit^{\mathrm{CNF}}_s(f)}{\rho}$ can
be derived from $\circuit_s(f)$ in constant degree and thus a degree
$d$ SoS refutation of $\circuit_s^{\mathrm{CNF}}(f)$ gives rise to a degree
$O(d)$ SoS refutation of $\circuit_s(f)$. Equivalently, a degree $d$
lower bound for $\circuit_s(f)$ implies a degree $\Omega(d)$ lower
bound for $\circuit_s^{\mathrm{CNF}}(f)$ as claimed.

\end{document}